\definecolor{PineGreen}{cmyk}{0.9,0,1,0.40}
\definecolor{ProcessBlue}{cmyk}{1,0,0,0.40}
\newtheorem{theorem}{Theorem}
\newtheorem*{theorem*}{Theorem}
\newtheorem{lemma}{Lemma}
\newtheorem{assumption}{Assumption}
\newtheorem{remark}{Remark}
\newtheorem{conclusion}{Conclusion}
\newcommand{\dR}{\mathbb R}
\newcommand{\E}{\mathbb E}
\def \ep{\varepsilon}
\newcommand{\cav}{{\rm cav\ }}
\newcommand{\prob}{{\mathbb P}}
\newcommand{\rmd}{{\rm d}}
\newtheorem{Theorem}{Theorem}[section]
\newcounter{figurecounter}
\def\undertilde#1{\mathord{\vtop{\ialign{##\crcr
$\hfil\displaystyle{#1}\hfil$\crcr\noalign{\kern1.5pt\nointerlineskip}
$\hfil\widetilde{}\hfil$\crcr\noalign{\kern1.5pt}}}}}
\begin{document}
\title{\textbf{Markovian Persuasion with Two States}\thanks{Ashkenazi-Golan acknowledges the support of the Israel Science Foundation, Grants 217/17 and 722/18, and NSFC-ISF Grant 2510/17. Hern\'andez acknowledges the support of the Consellería d’Innovació, Universitats, Ciència i Societat Digital, Generalitat Valenciana, grant number AICO/2021/257, and the Ministerio de Ciencia, grant number PID2021-128228NB-I00. Neeman acknowledges the support of the Israel Science Foundation, Grant 1465/18. Solan acknowledges the support of the Israel Science Foundation, Grant 217/17.}}
\author{Galit Ashkenazi-Golan\footnote{London School of Economics and Political Science,  \tt{galit.ashkenazi@gmail.com}.} \and Pen\'elope Hern\'andez\footnote{ERI-CES, Department of Economics, University of Valencia, \tt{ penelope.hernandez@uv.es}.} \and Zvika Neeman\footnote{Berglas School of Economics, Tel-Aviv University, \tt{zvika@tauex.tau.ac.il.}} \and Eilon Solan\footnote{The School of Mathematical Sciences, Tel Aviv University, \tt{eilons@tauex.tau.ac.il}.}}
\date{\today}
\maketitle

\begin{abstract}
This paper addresses the question of how to best communicate information over time in order to influence an agent's belief and induced actions in a model with a  binary state of the world that evolves according to a Markov process, and with a finite number of actions.  
We characterize the sender's optimal message strategy in the limit, as the length of each period decreases to zero. The optimal strategy 
is not myopic. Depending on the agent's beliefs, sometimes no information is revealed, and sometimes the agent's belief is split into two well-chosen posterior beliefs.

\bigskip
\noindent\textbf{Keywords:} \ Bayesian persuasion, information design, Markov games, repeated games with incomplete information.

\bigskip
\noindent\textbf{JEL Codes:} \ D82, D83.
\bigskip
\end{abstract}
\thispagestyle{empty}

\newpage
\setcounter{page}{1}

\section{Introduction}

This paper addresses the question of how to best communicate information over time in order to influence an agent's beliefs and induced actions.
We consider a model in which a binary state of the world evolves according to a Markov process. In every period, a sender (she) observes the state and sends a message to a myopic receiver (he). 
The message that is sent by the sender induces the myopic receiver's belief and so action in that period, but also affects the future beliefs of the receiver, and so also the way in which the receiver would respond to future messages. The question is how should the sender balance current and future implications of her messages.


\cite{ely_beeps_2017} and \cite{Renault_2017} have studied such models (we provide a detailed discussion of their work below), 
and have characterized the sender´s optimal strategy when the receiver has only two actions.
They showed that in this case the sender's optimal strategy is myopic. That is, the sender´s optimal policy ignores the effect of the sender´s messages on the receiver´s future beliefs.
In contrast, we allow for any finite number of actions,
and find that the larger set of actions calls for a non-myopic sender's optimal strategy.


For simplicity, we assume that the receiver's action is an increasing function of his belief, and the sender's payoff is an increasing function of the receiver's action. 
This assumption implies that the receiver's optimal strategy is piecewise-constant in his beliefs. That is, the space of beliefs, which is represented by the unit interval, can be divided into finitely many subintervals,
and in each such subinterval the optimal strategy of the receiver is a fixed action.
This in turn implies that the sender's stage payoff, which is a function of the receiver's belief and the receiver's action, is discontinuous in the receiver's belief. We focus on the case in which the sender's indirect payoff, as a function of the receiver's beliefs about the state, has a concave envelope. Such a concave envelope arises naturally when the sender's marginal benefit from the receiver's action is decreasing.

For example, consider a seller of an experience good such as wine, whose quality changes stochastically over time depending on local climate. The seller prefers that buyers buy as much wine as possible, but obtains a decreasing marginal benefit from each case sold. The seller may disclose information about the wine quality to buyers who decide what quantity of wine to purchase in each period. 

The main conceptual contribution of the paper is the understanding of the driving forces behind the sender's optimal strategy. Standard results imply that, in every period, the sender can induce any distribution of posterior beliefs whose mean is equal to the belief in the previous period plus the one-period drift in the Markov process. Such distributions are said to be ``Bayes plausible'' (Kamenica and Gentzkow, 2011). We show that the optimal strategy for the sender involves only two types of distributions of induced beliefs. The first distribution arises as a consequence of the sender's silence. 
In this case, the receiver's belief ``slides'' 
toward the invariant distribution of the Markov process. The second distribution requires simple communication and consists of a binary split of the receiver's posterior belief. 

Suppose that $p < p'$ are two beliefs that lie below the invariant distribution of the Markov process.
Suppose that the current receiver's belief is $p$. The dynamics pushes the
current
belief toward $p'$.
The observation mentioned in the previous paragraph suggests two strategies that the sender can use to facilitate this change in the receiver's beliefs:
(a) the sender reveals no information until the belief becomes $p'$ (``silence''),
and (b) the sender repeatedly splits the receiver's belief between $p$ and $p'$, until the belief finally coincides with $p'$.
It turns out that when comparing the discounted time it takes the belief to reach $p'$ under these two strategies,
the latter strategy is quicker.
A similar result holds when $p > p'$ and the two beliefs lie above the invariant distribution of the Markov process.

Since the sender's payoff is monotone in the receiver's belief,
the sender would like the belief to be as high as possible.
This ``speed-based'' argument suggests that when the current receiver's belief is below the invariant distribution, repeated splitting would be better for the sender because it is quicker in generating receiver's beliefs that are more favorable for the sender; and when the current receiver's belief is above the invariant distribution, silence would be better for the sender because it is slower in generating receiver's beliefs that are less favorable for the sender.

However, these two strategies also generate different instantaneous payoffs for the sender:
if the sender repeatedly splits the receiver's belief between $p$ and $p'$, then her instantaneous payoff is a weighted average of her instantaneous payoffs at $p$ and $p'$; and if the sender reveals no information, then her instantaneous payoff is the instantaneous payoff as the beliefs slide from $p$ to $p'$.
The sender's instantaneous payoff is increasing discontinuously in the receiver's beliefs. Payoffs to the left of a discontinuity point are significantly smaller than the payoff at the discontinuity point.
This ``payoff-based'' argument suggests that repeated splitting yields higher instantaneous payoffs than no revelation of information both below and above the invariant distribution of the Markov process.

For receiver's beliefs that lie below the invariant distribution, the speed-based and payoff-based forces are in agreement, and so for such receiver's beliefs, the sender's optimal strategy involves repeated splitting of the receiver's belief between the discontinuity points of the sender's payoff function, and it is myopic.
But for receiver's beliefs that lie above the invariant distribution, the speed-based and payoff-based forces work in opposite directions.
We show that for such beliefs the sender's optimal strategy is not myopic:
at beliefs that are slightly above a discontinuity point, the difference between the instantaneous payoffs of the two strategies is small, the speed-based argument dominates, and the sender reveals no information;
while at beliefs that are slightly below a discontinuity point, the difference between the instantaneous payoffs of the two strategies is large, the speed-based argument dominates, and 
the sender repeatedly splits the receiver's belief between the belief at the discontinuity point and a well-chosen belief below it.

In the context of the example of the wine seller, when buyers believe that wine quality is higher than average, then the seller´s optimal strategy is myopic. In this case, the seller need not worry about the buyers´ future beliefs. But when buyers believe that wine quality is lower than average then the seller´s optimal strategy is not myopic. In this case the seller´s optimal policy is more elaborate and it involves both silence and splitting of the buyers´ beliefs. 


\subsection*{Related Literature}
Our work relates to two distinct research directions that should probably be more closely linked to each other: the one on information design and Bayesian persuasion, and the second on repeated games under incomplete information. For recent surveys of these two directions, see \cite{kamenica_bayesian_2019} and \cite{Mertens_2016}, respectively.

The model studied in this paper is a dynamic extension of the static Bayesian persuasion model of \cite{kamenica_bayesian_2011}. Our model is a generalization of the model studied by \cite{ely_beeps_2017}, who solved a simpler version of our model with binary actions and states, where one of the two states is absorbing. \cite{ely_beeps_2017} showed that in this case, the optimal strategy is to reveal the (absorbing) state with delay. A delay of time $T$ implies that, starting at time zero, the receiver's beliefs that the state has switched slides upwards with time. Denote the belief that the state is absorbing at time $t$ by $p^t$. Under the sender's optimal strategy, at time $T$, the receiver still hasn't learned anything about the state, and so his belief $p^T$ reflects the knowledge that no switch has occurred until at least $T$ moments ago. For $t>T$, the fact that the sender reveals that the state has switched with delay $T$ implies that the receiver's beliefs are either given by $p^t=1$ or $p^t=p^T$, because a receiver who is told that the state has switched updates his belief to $p^t=1$, and a receiver who hasn't heard anything yet knows only that the state was not absorbing $T$ moments ago, and so has beliefs $p^t=p^T$. Thus, the optimal strategy identified by \cite{ely_beeps_2017} also combines sliding and splitting between the two beliefs $p=p^T$ and $p=1$.

Silence in our model can also be interpreted as delay in Ely's (\citeyear{ely_beeps_2017}) model. But unlike in Ely's model, the silences that are part of the optimal strategy identified here vary in length, and are punctuated by messages that induce beliefs that reflect different levels of certainty about the current state. 

Another related paper is that of \cite{Renault_2017}, who consider a similar model, again with just two actions. They show that with two states, a greedy or myopic strategy is optimal for the sender (the optimal strategy in \citeauthor{ely_beeps_2017}'s (\citeyear{ely_beeps_2017}) model is myopic as well). 
As explained above, this is not the case here. \citeauthor{Renault_2017}  also show that with more than two states, the optimal strategy for the sender need not be myopic, and provide a sufficient condition on the Markov dynamic process that ensures it is myopic.

Recent papers by \cite{ball_dynamic_2019}, \cite{ely_moving_2020}, and
\cite{smolin_dynamic_2017} obtain results about the timing of the optimal revelation of information in specific settings. Their focus is more on the optimal time to reveal information, rather than what information to reveal, as in this paper. There is also a large literature in economics on the design of information feedback in dynamic principal-agent problems and games (see, e.g., the literature review in \cite{ely_beeps_2017}). However, as noted by Ely, with a few exceptions, these papers generally consider exogenous information structures or compare a few policies such as full, public, and no disclosure.

The two key methods that are used in the Bayesian persuasion literature are Bayes plausibility and the geometric characterization of the optimum through a concavification of the sender's indirect payoff function. Both of these ideas were adapted from the work of \cite{aumann_repeated_1995}, who studied repeated games with one-sided incomplete information. In \cite{aumann_repeated_1995}, one of two players learns which of two two-player 0-sum normal form stage games is to be played, and then this game is played repeatedly. Their analysis has been extended by \cite{renault_value_2006} to cover Markov games. In the setting with incomplete information studied by \cite{aumann_repeated_1995}, the sender reveals information only once, in the first period of the game, and then continues to play in a way that is uninformative for the rest of the game. In contrast,  in the Markov game studied by \cite{renault_value_2006}, the state of the world changes over time, and so it is optimal for the sender to continue to reveal information about the state as it evolves. \cite{cardaliaguet_markov_2016} and \cite{gensbittel_continuous-time_2016} show that the value of a 
continuous-time
Markov game is given by the solution of a differential equation (but stop short of obtaining explicit solutions). \cite{ashkenazi-golan_solving_2020} present an algorithm that converges to a solution of this differential equation, but they only apply it to a few examples of two-player zero-sum two-state Markov games with one-sided information. They provide the important insight that it is possible to characterize the optimal strategy through the value of the derivative of the putative value function given a specific suggested split of the uninformed player's beliefs. 

As mentioned above, our assumptions about structure of the family of the sender's indirect payoff function allows us to obtain an explicit characterization of the optimal information strategy for the sender in a class of dynamic Bayesian persuasion problems. In addition, we also compute the expected discounted time it takes to switch from one induced posterior belief to another, which is relevant also to two-player Markov games. We rely on similar ideas to those in \cite{ashkenazi-golan_solving_2020}
to obtain an explicit description of the solution for a class of dynamic persuasion games. However, while the entire literature on repeated games under incomplete information has restricted its attention to the special case where the informed player's indirect payoff function is \textit{continuous} in the uninformed player's beliefs, we study \textit{discontinuous} indirect payoff functions. And the finiteness of the receiver's set of actions assumed here necessarily implies that the sender's indirect payoff function is indeed discontinuous.\footnote{Moreover, even if the receiver has an infinite number of actions available, strong continuity requirements need to be imposed on the payoffs to ensure that the sender's indirect payoff function would be continuous.}
\bigskip

The rest of the paper proceeds as follows. In Section 2, we present the model. In Section 3, we present our main results and discuss possible extensions. All proofs are relegated to the Appendix.

\section{Model and Main Result}\label{sec model}

We consider a 
discrete-time
game 
with two players: a sender (she) and a receiver (he). In every period $n\in\{1,2,\ldots\}$, the sender observes the state of the world $\omega^n\in\{0,1\}$ and sends a message $m^n\in M$ to the receiver, who takes an action $a^n\in A$. The set $A$ is assumed to be finite
and the set $M$ contains at least two messages. 

\medskip
\noindent\textbf{Markov Transitions.}

The probability that the initial state is 0 is given, and denoted by $p^0$.

In each period there is a constant probability, which may depend on the current state but is independent of the history of the play
prior to the current period,
that the state switches to the other state in the next period. The switches between the states give rise to a Markov chain. This Markov chain has a stationary distribution. We denote the probability that the state is 1 according to the stationary distribution by $p^*$. Standard results in the theory of Markov chains imply that starting with any initial probability $p^0$, 
$p^n$
converges to $p^*$ as $n$ increases. 

\medskip
\noindent\textbf{Posterior Beliefs.}

We assume that
the sender is committed to her message strategy, and the receiver is aware of this commitment. 
As a result, at the beginning of each period $n$,
the receiver updates his belief $p^n$ that the state is 1 
given the message he received in stage $n-1$ and taking into account the Markov transition. 

\medskip
\noindent\textbf{Stage Payoffs.} 

In any period $n$, the (period-$n$) payoffs of both the sender and receiver are functions of the state $\omega^n$ and the receiver's action $a^n$ in period $n$. The receiver is assumed to be myopic: in every period $n$, he chooses the action $a^n$ that maximizes his payoff given his belief $p^n$ over the states.\footnote{We assume that when indifferent, the receiver chooses the action that is better for the sender. This assumption is made for convenience only. If, when indifferent, the receiver chooses the action that is less favorable to the sender with a positive probability,
then the sender would ensure that the receiver is never indifferent between these two actions.}
It follows that the sender's (indirect) payoff in period $n$, denoted $u : [0,1] \to \dR$, may be viewed as a function of the receiver's belief in that period, $p^n$.

We assume that the receiver's action is increasing in his belief, and the sender's payoff is increasing in the receiver's action. Monotonicity together with the fact that the set of actions $A$ is finite imply that $u$ is increasing and piecewise constant. We assume, in addition, that the function $u$ has a ``concave envelope.'' 
That is, the piecewise linear function that connects all the discontinuity points of $u$ is concave (see Fig~\ref{fig:M1}): 

\begin{assumption}
\label{assumption:1}
There exist $m,m'\in \mathbb{N}$,
$0=p_{-m}<p_{-m+1}<\dots<p_0<p_1<\dots< p_{m'}=1$,
and $h_{-m} < h_{-m+1} < \dots < h_{m' -1}$ such that $u(p)=h_i$ for $p\in[p_i,p_{i+1})$ if $i\in\{-m,\ldots,m' -1\}$, and $u(1) = h_{m' -1}$. 
Moreover, the line segment that connects the points $(p_{i-1}, h_{i-1})$ and $(p_{i+1}, h_{i+1})$ lies below the point $(p_i,h_i)$ for $i\in\{-m+1,\ldots,m' -2\}$. 
\end{assumption}
We refer to the intervals $[p_i,p_{i+1})$ mentioned in Assumption \ref{assumption:1} above as ``continuity intervals'' of $u$. Without loss of generality, we assume that $p^*\in[p_0,p_1)$.

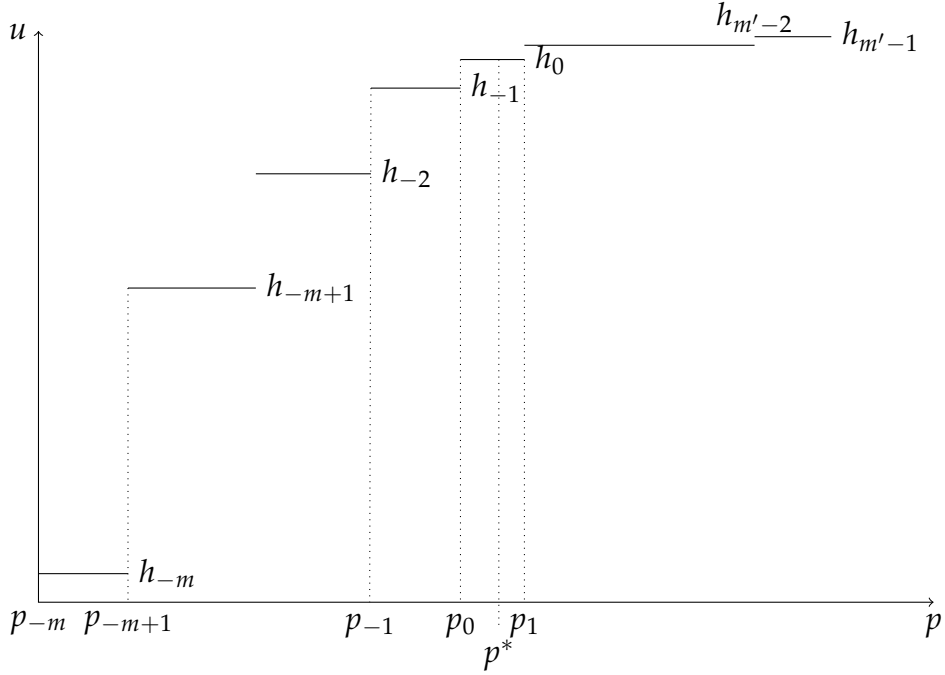
\begin{figure}[h]
\centering
\begin{tikzpicture}[domain=0:1,xscale=17,yscale=3.8]
\draw[<->] (0,2) node[left]{$u$}-- (0,0) -- (.7,0) node[below] {$p$};
\draw[] (0,0.1)--(0.07,0.1) node[right]{$h_{-m}$};
\draw[] (0.07,1.1)--(0.17,1.1) node[right]{$h_{-m+1}$};
\draw[] (0.17,1.5)--(0.26,1.5) node[right]{$h_{-2}$};
\draw[] (0.26,1.8)--(0.33,1.8) node[right]{$h_{-1}$};
\draw[] (0.33,1.9)--(0.38,1.9) node[right]{$h_{0}$};
\draw[] (0.38,1.95)--(0.56,1.95) node[above]{$h_{m' -2}$};
\draw[] (0.56,1.98)--(0.62,1.98) node[right]{$h_{m' -1}$};
\draw[dotted] (0.07,1.1)--(0.07,0);
\node at ( 0.07,-0.07) [] {$p_{-m+1}$};
\node at ( 0,-0.07) [] {$p_{-m}$};
\draw[dotted] (0.26,1.8)--(0.259,0) node[below]{$p_{-1}$};
\draw[dotted] (0.33,1.9)--(0.33,0) node[below]{$p_{0}$};
\draw[dotted] (0.36,1.9)--(0.36,-0.09) node[below]{$p^*$};
\draw[dotted] (0.38,1.9)--(0.38,0) node[below]{$p_{1}$};
\end{tikzpicture}
\caption{The function $u$.} \label{fig:M1}
\end{figure}

\medskip
\noindent\textbf{Game Objective.} 

The length of a period is denoted by $\Delta > 0$.
The sender's objective is to maximize her discounted payoff, 
calculated with respect to her discount factor $r$, ($r>0$).
 The value at the initial belief $p^0$ is
\begin{equation}\label{eq objective disc}
v_\Delta(p^0) := \max_\sigma  \E_{p,\sigma}\left[(1-e^{-r\Delta})\sum_{n=1}^\infty  e^{-r\Delta n}  u(p^n) \right], 
\end{equation}
where the maximum is over all sender's message strategies $\sigma$.
We will denote the game described above by $G_\Delta(u)$.

\medskip
\noindent\textbf{Continuous-Time Game.} 
We are interested in characterizing the value and the sender's optimal message strategy when the length of a period $\Delta$ is small.
To this end, we study the continuous-time game
denoted $G_{cont}(u)$.

To properly relate the games in discrete time to the game in continuous time,
we assume that in $G_\Delta(u)$, 
the per-period probability of switching from state 1 to 0 is $1-e^{-\lambda_1\Delta}$,
and the per-period probability of switching from state 0 to 1 is $1-e^{-\lambda_0\Delta}$, where both $\lambda_0,\lambda_1>0$.
In the game $G_{cont}(u)$, 
the generator of the Markov chain is 
\[ R=\left(\begin{tabular}{ c c }
 $-\lambda_0$ & $\lambda_0$  \\ 
 $\lambda_1$ & $-\lambda_1$ 
\end{tabular}\right), \] 
and the stationary probability of state 1 is\footnote{For this, as well as all the other results on Markov chains used in this paper, see, e.g., \cite{norris_markov_1998}.
}
\[
p^* = \frac{\lambda_0}{\lambda_0+\lambda_1}.
\]

In the game $G_{cont}(u)$,
the state variable is the receiver's belief $p^t$,
and this belief determines the receiver's and sender's instantaneous payoffs.
The set of sender's message strategies can therefore be identified
with the set of c\`adl\`ag processes $(p^t)_{t \geq 0}$ with initial belief $\E[p^0]$.

When no information is revealed, the belief changes as a result of the Markov transition as follows:
\begin{equation}\label{eq p der}
   \frac{\partial p^t}{\partial t}=-\lambda_1 p^t + \lambda_0(1-p^t)
   = (\lambda_0+\lambda_1)(p^*-p^t).
\end{equation}
This implies that for every strategy of the sender, 
the process $(p^t)_{t \geq 0}$ satisfies
\begin{equation}
    \label{equ:belief}
\E[p^{t+h} \mid p^t] = p^* + (p^t - p^*) e^{(-(\lambda_0+\lambda_1)h)}, \ \ \ \forall t,h \geq 0. 
\end{equation}



Denote by $v_{cont}$ the value function of $G_{cont}(u)$:
\[ v_{cont}(p) := \sup_\sigma \E\left[\int_0^\infty r e^{-rt}  u(p^t) \rmd t\right], \]
where $\sigma$ ranges over all message strategies of the sender.
We characterize the value  and the sender's optimal message strategy $\sigma^*$ in $G_{cont}(u)$.
We then prove that $v_{cont} = \lim_{\Delta \to 0} v_\Delta$,
and that $\sigma^*$ is approximately optimal in $G_\Delta(u)$, provided $\Delta$ is sufficiently small.

\medskip
\noindent\textbf{Our Contribution.} 
Our main result is Theorem~\ref{theorem:main_result} below, which provides a characterization of the optimal sender's message strategy in the continuous-time game $G_{cont}(u)$, from which we can calculate the value function $v_{cont}$.
Theorem~\ref{theorem:main_result} also implies that the value function in the continuous-time problem approximates the value function in the discrete-time problem, and that a discrete-time approximation of the optimal sender's message strategy in the continuous-time game is approximately optimal in the discrete-time game $G_{\Delta}(u)$.

We show that the optimal sender's message strategy is Markovian: the play at each time instance $t$ depends only on the receiver's belief at that time.
In addition, the optimal strategy involves only two types of behaviors: either the sender reveals no information, 
or the sender sends one of two messages, 
which split the receiver's belief into two possible beliefs.
This property is a consequence of the fact that there are two states.

The theorem shows that the optimal sender's message strategy is different for receiver's beliefs that lie below $p^*$ (where the sender's payoff is smaller than her payoff at $p^*$)
and for beliefs that lie above $p^*$ (where the sender's payoff is larger than her payoff at $p^*$).
For any receiver's belief that belongs to a continuity interval $[p_{-j},p_{-j+1})$ that lies to the left of the invariant distribution $p^*$ as well as for beliefs that belong to the continuity interval $[p_0,p_1)$ that contains the invariant distribution
(when $p^* \in (p_0,p_1)$),
the sender splits the receiver's belief between the endpoints of the interval that contains it.
For receiver's beliefs that belong to continuity intervals $[p_j,p_{j+1})$ that lie above the interval $[p_0,p_1)$, the sender's optimal behavior is different: for each such continuity interval, there is a cutoff $q_j \in (p_j,p_{j+1})$ such that at beliefs in $[p_j,q_j]$ the sender reveals no information,
while at receiver's beliefs in $[q_j,p_{j+1})$ the senders splits the belief between $q_j$ and $p_{j+1}$.
Moreover, the optimal strategy in the continuous-time problem 
is almost optimal in the discrete-time problem,
provide $\Delta$ is sufficiently small.

\begin{theorem}
Suppose that the indirect payoff function $u$ satisfies Assumption \ref{assumption:1}. 
The game in continuous time $G_{cont}(u)$ admits a value function $v_{cont}$,
and the following Markovian message strategy  $\sigma^*$ of the sender is optimal in the continuous-time game:
\begin{itemize}
\item
If $p^* = p_0$, then at $p^*$ the sender reveals no information.
\item
If $p^* \in (p_0,p_1)$, then at every $p \in [p_0,p_1]$ the sender splits the belief into $p_0$ and $p_1$.
\item
For every $j\in\{0,\ldots,m-1\}$ and every $p\in[p_{-(j+1)},p_{-j})$, at $p$ the sender splits the belief into $p_{-(j+1)}$ and $p_{-j}$.
\item
For every $j \in\left\{1,\ldots,m'-2\right\}$, 
there is $q_j \in (p_j,p_{j+1})$ such that 
\begin{itemize}
    \item at every receiver's belief $p \in [p_j,q_j]$, the sender reveals no information, and
    \item for every receiver's belief $p \in (q_j,p_{j+1}]$, at $p$ the sender splits the receiver's belief $p$ into $q_{j}$ and $p_{j+1}$.
\end{itemize}
\end{itemize}
Moreover, for every $\ep > 0$ there is $\Delta_0 > 0$
such that $\sigma^*$ is $\ep$-optimal when the gap between stages
is $\Delta$, for every $\Delta \in (0,\Delta_0)$. 
\label{theorem:main_result}
\end{theorem}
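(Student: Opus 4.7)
The plan is to verify Theorem~\ref{theorem:main_result} via the standard construction/verification program: first build a candidate value function $v^*$ on $[0,1]$ that matches the behavior prescribed in the theorem, then show that $v^*$ is indeed the value of the continuous-time problem by a super-martingale argument, and finally transfer both the value and approximate optimality to $G_\Delta(u)$ for small $\Delta$. The construction proceeds piecewise. On each splitting interval -- the intervals $[p_{-(j+1)},p_{-j}]$ lying below $p^*$ and, if $p^*\in(p_0,p_1)$, the interval $[p_0,p_1]$ -- the proposed strategy makes the belief a continuous-time Markov chain on the endpoints $\{p_i,p_{i+1}\}$ with rates
\begin{equation*}
\mu=\tfrac{(\lambda_0+\lambda_1)(p^*-p_i)}{p_{i+1}-p_i},\qquad \mu'=\tfrac{(\lambda_0+\lambda_1)(p_{i+1}-p^*)}{p_{i+1}-p_i}
\end{equation*}
chosen exactly so that the expected post-jump belief coincides with the drifted belief~\eqref{equ:belief}; the values $V_{p_i},V_{p_{i+1}}$ then solve an explicit $2{\times}2$ linear system in $h_i,h_{i+1},\mu,\mu',r$, and $v^*$ is the linear interpolant between them. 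On each interval $[p_j,p_{j+1}]$ above $p^*$ with $j\ge 1$, $v^*$ is taken to be the solution of the silence HJB ODE $rv^*(p)=rh_j+(v^*)'(p)(\lambda_0+\lambda_1)(p^*-p)$ on $[p_j,q_j]$ and the linear interpolant on $(q_j,p_{j+1}]$; the cutoff $q_j$ is pinned down by a smooth-pasting (slope-matching) condition at $q_j$. Because each $v^*(p_{j+1})$ is needed to start the ODE on $[p_j,q_j]$, the intervals are resolved top-down from $j=m'-2$, where $v^*(p_{m'-1})=h_{m'-1}$ is known.

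For verification I would consider, for any sender strategy $\sigma$ with resulting c\`adl\`ag belief process $(p^t)$ compatible with~\eqref{equ:belief}, the process
\begin{equation*}
M_t := e^{-rt} v^*(p^t)+\int_0^t r e^{-rs} u(p^s)\,\rmd s .
\end{equation*}
A standard It\^o-type decomposition reduces the super-martingale property of $M_t$ to two pointwise conditions on $v^*$: (i) the HJB inequality $rv^*(p)\ge ru(p)+(v^*)'(p)(\lambda_0+\lambda_1)(p^*-p)$ on each continuity interval, which is built-in on silence segments and follows on splitting segments from $v^*\ge u$ combined with the linearity of $v^*$ there; and (ii) concavity of $v^*$ on $[0,1]$, which is precisely what makes every admissible Bayes-plausible split decrease $\E[v^*(p^t)]$. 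Given (i)--(ii), the super-martingale property yields $v^*(p^0)\ge \E[\int_0^\infty re^{-rt}u(p^t)\,\rmd t]$ for every $\sigma$, and substituting $\sigma^*$ forces equality at every $t$, so $v_{cont}=v^*$ and $\sigma^*$ is optimal. The boundary case $p^*=p_0$ is handled separately: staying silent keeps the belief at $p^*$ and gives payoff $h_0$, while any split into $[0,p_0)\cup\{p_0\}$ strictly decreases the instantaneous payoff, so no revelation is optimal at $p^*$ in that case.

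For the discrete-time claim I would approximate $\sigma^*$ in $G_\Delta(u)$ period-by-period, by either staying silent (letting the one-step Markov transition replace the continuous drift) or performing a single Bayes-plausible split on the same pair of beliefs used by $\sigma^*$, with weights adjusted for the post-drift mean $p^*+(p-p^*)e^{-(\lambda_0+\lambda_1)\Delta}$. A Taylor expansion in $\Delta$ gives a per-period discrepancy of $O(\Delta^2)$ between the discrete and continuous payoffs of the discretized strategy; the discounted-series sum yields a total discrepancy of $O(\Delta)$, so the value attained by the discretized $\sigma^*$ in $G_\Delta(u)$ is within $O(\Delta)$ of $v_{cont}(p^0)$. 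The matching upper bound $v_\Delta(p^0)\le v_{cont}(p^0)+O(\Delta)$ is obtained by applying the super-martingale argument above to any discrete-time strategy, after noting that the one-step conditional expectation of $v^*(p^{n+1})$ matches its continuous-time infinitesimal generator up to $O(\Delta^2)$. Together these give $v_\Delta(p^0)\to v_{cont}(p^0)$ and the $\ep$-optimality of $\sigma^*$.

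The main obstacle is the global concavity of $v^*$ used in (ii), which is what ultimately forces the asymmetry in the theorem. On each piece local concavity is easy: the silence-ODE has an explicit exponential-integral solution which is concave when $u$ is constant on the piece and Assumption~\ref{assumption:1} holds, and the splitting pieces are linear. The delicate step is showing that the left-slope exceeds the right-slope across every boundary -- at each $p_i$, each $q_j$, and in particular across $p^*$ where the qualitative nature of $\sigma^*$ flips. The slope comparisons at $q_j$ are exactly the smooth-pasting equations that define $q_j$; the crux is to prove that these equations have a unique solution in $(p_j,p_{j+1})$ and that the resulting slopes are monotone across all boundaries. Assumption~\ref{assumption:1} (the concavity of the polygon through the $(p_i,h_i)$) combined with the sign of the drift $(\lambda_0+\lambda_1)(p^*-p)$ is what makes this work, and it is also why silence dominates just above a discontinuity while splitting dominates just below it, explaining both the location of each $q_j$ and the myopic/non-myopic dichotomy across $p^*$.
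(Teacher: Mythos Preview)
Your approach is genuinely different from the paper's. The paper does \emph{not} construct $v^*$ directly and run a verification/super-martingale argument against the discontinuous $u$. Instead it approximates $u$ from above by continuous piecewise-linear functions $\overline u_\delta$, applies the existing characterization of Cardaliaguet--Rainer--Rosenberg--Vieille and Gensbittel (valid only for continuous $u$) to the games $G_{cont}(\overline u_\delta)$, pins down the structure of the optimal strategy there via a detailed study of the auxiliary function $\overline g_\delta(p)=\mu\,(\overline u_\delta(p)-\overline v_\delta(p))/(p-p^*)$, and then passes to the limit $\delta\to 0$. The upper bound for the discrete game is obtained the same way: $v_\Delta(u)\le v_\Delta(\overline u_\delta)\to v_{cont}(\overline u_\delta)\to v_{cont}(u)$, where the middle convergence is Cardaliaguet et al.'s Theorem~1 for Lipschitz payoffs. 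What the approximation buys is that concavity and differentiability of $\overline v_\delta$ come for free from the existing theory, so the ``slope-matching across boundaries'' reduces to the monotonicity properties of $\overline g_\delta$ (the paper's Lemmas on $\overline g_\delta$). Your direct route has to establish those slope inequalities from scratch; this is exactly the work you flag as the main obstacle, and it is of comparable difficulty either way.

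Where your sketch has a real gap is the discrete-time \emph{upper} bound. You claim the one-step conditional expectation of $v^*(p^{n+1})$ matches its continuous-time generator up to $O(\Delta^2)$. That Taylor step requires $v^*\in C^2$, which fails at every junction point $p_i$ and $q_j$ (you only have $C^1$ by smooth pasting, and not even that at the $p_i$ on the left of $p^*$, where the slope changes). A discrete strategy can hold the belief near a kink for an order-one fraction of discounted time, so an $O(\Delta)$ local error need not wash out. The paper circumvents this entirely: since $u\le\overline u_\delta$ and $\overline u_\delta$ is Lipschitz, one gets $v_\Delta(u)\le v_\Delta(\overline u_\delta)\to v_{cont}(\overline u_\delta)$ by the cited continuous-payoff result, and then $\delta\to 0$. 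If you want to keep the direct approach, you would need either a viscosity-type comparison that tolerates kinks in $v^*$, or an argument bounding the discounted occupation time of neighborhoods of the kinks uniformly over strategies; neither is routine. A minor point: your ``It\^o-type decomposition'' should be phrased via the generator of a c\`adl\`ag martingale-constrained process rather than It\^o, and the super-martingale step uses concavity for the split part and the HJB inequality for the drift part, exactly as you say.
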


The intuition for this result is as follows.
Suppose that the receiver's current belief is $p < p^*$,
and the sender would like to have the belief $p$ reach some belief $p' \in (p,p^*]$.
There are two simple ways in which the sender can achieve this goal: 
(i) she can reveal no information, and let the belief slide towards $p'$
because of the Markov transition,
or (ii) she can let the belief move slightly towards $p^*$ because of the Markov transition
and immediately reveal information to the receiver in such a way that the receiver's belief is split between $p$ and $p'$.
It turns out that when discounting is taken into account,
repeated splitting of the receiver's belief achieves a faster convergence to the belief $p'$ than sliding.
When $p < p^*$, the monotonicity of the sender's payoff implies that repeated splitting is superior to sliding, and the concavity of the payoffs implies that it is optimal for the sender to
split the belief between $p$ and the discontinuity point of $u$ to its right.
When $p > p' \geq p^*$, repeatedly splitting the belief between $p$ and $p'$ still converges to $p'$ faster than sliding, but now the monotonicity of $u$ does not imply that sliding is better than splitting.
On the one hand, to reach from $p_{j+1}$ to $p_j$, sliding yields the sender the payoff $h_j$ for as long as possible.
On the other hand, repeated splitting allows the sender to obtain the higher payoff $h_{j+1} = u(p_{j+1})$. 
For beliefs $p$ on the continuity interval $[p_j,p_{j+1}]$ that are close to $p_j$, delaying the arrival to $p_j$ by revealing no information turns out to be optimal. The situation is reversed for beliefs in this continuity interval that are close to $p_{j+1}$. For such beliefs, 
splitting between $p_j$ and $p_{j+1}$ generates the belief $p_{j+1}$ with high probability, which implies that splitting is better than sliding.

\medskip
\noindent\textbf{Comparison with Existing Literature.} 
\cite{ely_beeps_2017} and \cite{Renault_2017} studied the model 
with a single discontinuity point (where concavity has no bite).
Their model corresponds to our model:
if $p^*$ is below the discontinuity point, then $m=0$ and $m'=2$;
if $p^*$ is at least the discontinuity point and less than 1, then $m=m'=1$;
and if $p^*=1$, then $m=2$ and $m'=0$.
These authors
proved that the myopic strategy is optimal.
In our setup, the myopic message strategy of the sender uses binary splits below $p_0$ and can be either sliding or splitting above $p_0$.
Theorem~\ref{theorem:main_result} shows that when $m' > 1$,
the sender's optimal message strategy is myopic at $p^*$,
and at all continuity intervals to the left of $p^*$,
but is not myopic at the continuity intervals that lie below $p^*$.
\cite{ashkenazi-golan_solving_2020} studied the model when $u$ is continuous (rather than piecewise constant and concave),
and provided an algorithm for calculating the value function and the sender's optimal message strategy in the continuous-time game.
\cite{cardaliaguet_markov_2016} studied the model with finitely many states and continuous payoff function,
characterized the value function of the continuous-time game as the viscosity solution of a certain equation,
and proved that the value of the discrete-time game converges to the value of the continuous-time game as the inter-stage duration goes to 0.
Theorem~\ref{theorem:main_result} extends the 
results of \cite{ashkenazi-golan_solving_2020} and the
approximation result of \cite{cardaliaguet_markov_2016} to discontinuous $u$.

\subsection{Sketch of the Proof}

In this subsection we highlight the main ideas behind the proof of Theorem~\ref{theorem:main_result}. 
The detailed proof appears in Section~\ref{section:analysis},
and technical aspects are relegated to the appendix.

The value function $v_{cont}$ was studied and characterized by \cite{cardaliaguet_markov_2016}, for the case in which the payoff function $u$ is continuous. 
When the payoff function is not continuous, 
as is the case here,
their characterization of $v_{cont}$ is not valid,
and the existence of a sender optimal strategy is not guaranteed. 

 We use the result of \cite{cardaliaguet_markov_2016} to prove the existence of the value for indirect payoff functions $u$ that satisfy Assumption~\ref{assumption:1}, and to characterize the sender's optimal message strategy. 
This is done by bounding the discontinuous function $u$ by continuous functions $\overline{u}_\delta$, which approach $u$ from above as $\delta$ decreases to zero.   
We show that  as $\delta$ decreases to zero the respective values, $\overline{v}_\delta$ converge to the value that is obtained by using the strategy $\sigma^*$ when the payoff function is the discontinuous function $u$. From the monotonicity of the value, this is also the value for the discontinuous payoff function $u$. 

The main result, including the construction of $\sigma^*$, is presented in
Section~\ref{section:analysis}. In Section~\ref{sec: u approx}, we introduce continuous payoff functions $\overline{u}_\delta$, that are higher of equal to $u$ and approximate $u$ as $\delta$ decreases to 0. In Section~\ref{section optimal inn aprox}, we characterize the optimal strategies for games with payoff functions $\overline{u}_\delta$. Section~\ref{sec optimal st gcont} returns to the value function $u$ and specifies the optimal strategy in continuous time. In Section~\ref{section:approx}, we show that the sender's optimal message strategy for the continuous-time game has a close strategy which is approximately optimal for the discrete-time game.  

\section{Analysis}
\label{section:analysis}

In this section we present the detailed proof of Theorem~\ref{theorem:main_result}. 
As mentioned before, the results of \cite{cardaliaguet_markov_2016} hold when the indirect payoff function $u$ is continuous. 
When the payoff function is not continuous, 
as in the case here,
their characterization of $v_{cont}$ is not valid,
and the existence of an optimal sender's strategy is not guaranteed. 
We therefore approximate $u$ from above by continuous functions.

\subsection{The Approximating Continuous Functions}\label{sec: u approx}

For every $\delta > 0$ 
such that $\delta < p_{j+1}-p_j$ for every $j=-m,-m+1,\dots,m'-1$,
define a payoff functions $\overline{u}_\delta$ as follows (see Fig.~\ref{fig fuctions}):

\begin{equation*}
    \overline{u}_\delta = \begin{cases}
               h_j,             & p\in[p_j,p_{j+1}-\delta],~ j\in\{-m,\ldots,m'-2\},\\ \left(\frac{h_{j+1}-h_j}{\delta}\right)\cdot (p-p_{j+1}) + h_{j+1},    & p\in[p_{j+1}-\delta, p_{j+1}],~ j\in\{-m,\ldots,m'-2\},  \\
          h_{m'-1}, & p\in[p_{m'-1},p_{m'}].     
               \end{cases}
\end{equation*}

Since $u$ has a concave upper envelope, so does the function $\overline{u}_\delta$. 
The sequence $(\overline u_\delta)_{\delta > 0}$ is nonincreasing (as $\delta$ goes to 0) and converges pointwise to $u$.
Denote by $\overline v_\delta$ the value function of the game $G_{cont}(\overline u_\delta)$.
Since the sequence $(\overline u_\delta)_{\delta > 0}$ is nonincreasing,
the sequence $(\overline v_\delta)_{\delta > 0}$ of value functions is nonincreasing as well.
Denote the limit value function by
\[ \overline v_0(p) := \lim_{\delta \to 0} \overline v_\delta(p), \ \ \ \forall p \in [0,1]. \]
Since $\overline u_\delta \geq u$, we have $\overline v_0 \geq v$,
that is,
\begin{equation}
\label{equ:val:1}
\overline v_0(p) \geq v_{cont}(p), \ \ \ \forall p \in [0,1]. 
\end{equation}
We will prove that in fact Eq.~\eqref{equ:val:1} holds with equality:
the limit of the value functions of the approximating games is the value function of the original problem in continuous time.

\begin{figure}[h]
\centering

\begin{tikzpicture}
    \draw[](2,0.1) -- (2,-0.1)node[below, scale=0.65]{ $p_j$};
    \draw[](1.1,0.1) -- (1.1,-0.1)node[below, scale = 0.65]{$p_j-\delta$};
    \draw[](5.1,0.1) -- (5.1,-0.1)node[below, scale = 0.65]{$p_{j+1}-\delta$};
    \draw[](6,0.1) -- (6,-0.1)node[below, scale=0.65]{$p_{j+1}$};
      \draw[->] (-0.5,0) -- (8,0) node[right] {$p$};
      \draw[->] (0,-0.5) -- (0,4) node[above] {$u$};
      \draw [blue, dashed] (0.5,0.5) -- (2,0.5);
       \draw [blue, dashed] (2,2) -- (6,2);
       \draw [blue, dashed] (6,3) -- (7.5,3);

      \draw [red, densely  dotted] (0.5,0.51) -- (1.1,0.51);
      \draw[red, densely dotted]
      (1.1,0.51) --
      (2,2);
      \draw[red, densely dotted]
      (2,2) --
      (5.1,2);
       \draw[red, densely dotted]
      (5.1,2) --
      (6,3.01);
      \draw[red, densely dotted]
      (6,3.01) --
      (7.5, 3.01);


      \draw [loosely dotted] (2,0) -- (2,2);
       \draw [loosely dotted] (6,0) -- (6,3);

       \node[scale=0.7] at (7, 4.3)   {$\overline{u}_\delta$};
       \node[scale=0.7] at (7, 4)   {$u$};
       \draw [blue, dashed] (7.3,4) -- (7.9,4);
       \draw[red, densely dotted]
      (7.3,4.3) --
      (7.9,4.3);
\end{tikzpicture}
\caption{The continuous payoff functions $\overline{u}_\delta$, approximating $u$ from above.} \label{fig fuctions}
\end{figure}
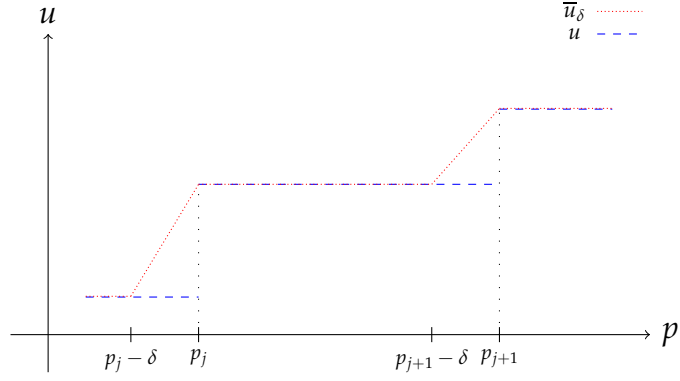

\subsection{Characterizing the Optimal Strategy in $G_{cont}(\overline u_\delta)$}\label{section optimal inn aprox}

The heart of the proof is the characterization of the sender's optimal message strategy $\overline\sigma_\delta$ in the game
$G_{cont}(\overline u_\delta)$, for $\delta > 0$ sufficiently small, which is displayed in Figure~\ref{fig strategy}.
The strategy $\overline\sigma_\delta$ is Markovian;
for each $j\in\{1,\dots,m\}$ there is a real number $\overline {q}_{-j}(\delta) \in (p_{-j}-\delta,p_{-j})$
such that in the interval $[p_{-j-1},\overline q_{-j}(\delta)]$ the sender splits the receiver's belief between the 
endpoints of the interval, 
and in the interval $[\overline q_{-j}(\delta),p_{-j}]$ the sender reveals no information.
Similarly, for each $j=1,\dots,m'-1$ there is a real number 
$\overline q_{j}(\delta) \in (p_{j},p_{j+1})$
such that in the interval $[p_{j},\overline q_{j}(\delta)]$ the sender reveals no information,
and in the interval $[\overline q_{j}(\delta),p_{j+1}]$
the sender splits the receiver's belief between the endpoints of the interval.

\begin{figure}[h]
\centering

\begin{tikzpicture}[scale=1.5]
\draw [latex-latex](0.1,0.5) -- (1.4,0.5);
\draw [latex-latex](1.5,0.5) -- (2,0.5);
\draw [latex-latex](5,0.5) -- (5.5,0.5);
\draw [latex-latex](5.6,0.5) -- (7,0.5);
\draw [latex-latex](3,0.5) -- (4,0.5);

    \draw[](3,0.1) -- (3,-0.1)node[below, scale=0.65]{ $p_{0}$};
    \draw[](4,0.1) -- (4,-0.1)node[below, scale=0.65]{ $p_{1}$};
    \draw[](0,0.1) -- (0,-0.1)node[below, scale=0.65]{ $p_{-j-1}$};
    \draw[](2,0.1) -- (2,-0.1)node[below, scale=0.65]{ $p_{-j}$};
    \draw[](1,0.1) -- (1,-0.1)node[below, scale = 0.65]{$p_{-j}-\delta$};
    \draw[dotted](1.5,0.1) -- (1.5,-0.5)node[below, scale = 0.65]{$\overline q_{-j}(\delta)$};
    \draw[dotted](5.5,0.1) -- (5.5,-0.5)node[below, scale = 0.65]{$\overline q_{j}(\delta)$};
    \draw[](5,0.1) -- (5,-0.1)node[below, scale = 0.65]{$p_{j}$};
    \draw[](7,0.1) -- (7,-0.1)node[below, scale=0.65]{$p_{j+1}$};
      \draw[->] (-0.5,0) -- (8,0) node[right] {$p$};

       \node[scale=0.7] at (.75, .8)   {split};
       \node[scale=0.7] at (1.75, .8)   {slide};
       \node[scale=0.7] at (3.5, .8)   {split};
       \node[scale=0.7] at (5.25, .8)   {slide};
       \node[scale=0.7] at (6.25, .8)   {split};


\end{tikzpicture}
\caption{The characterization of the optimal strategy in $G_{cont}(\overline u_\delta)$.} \label{fig strategy}
\end{figure}
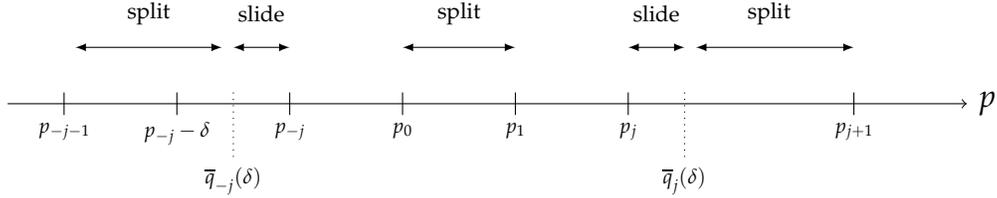

The formal statement follows.

\begin{lemma}
\label{lemma:structure}
Let $\delta > 0$ be sufficiently small.
For every $j\in\{0,1,\ldots,m-1\}$ there exists $\overline q_{-j}(\delta) \in (p_{-j}-\delta,p_{-j})$,
and for every $j\in\{1,2,\ldots,m'-1\}$ there exists $\overline q_{j}(\delta) \in (p_{j},p_{j+1})$,
such that the sender's optimal message strategy in $G_{cont}(\overline u_\delta)$,
denoted  $\overline{\sigma}^*_{\delta}$, is as follows:
\begin{itemize}
    \item For $p\in[p_0,p_1]$, split the belief between $p_0$ and $p_1$.
    \item For  $p\in[p_{-j-1},\overline{q}_{-j}(\delta)]$, split the belief between $p_{-j-1}$ and $\overline{q}_{-j}(\delta)$, 
    for every $j\in\{1,\ldots,m-1\}$ and $j=0$ if $p^*>p_0$.
    \item For $p\in[\overline{q}_{-j}(\delta), p_{-j}]$, reveal no information, 
    for every $j\in\{1,\ldots,m-1\}$ and $j=0$ if $p^*>p_0$.
 \item If $p^*=p_0$, then for $p\in[p_{-1},p_0)$ split the belief between $p_{-1}$ and $p_0$, 
 for $p\in (p_0,p_1]$ split the belief between $p_0$ and $p_1$,
 and for $p=p^*$ reveal no information.
    \item For $p\in[p_j,\overline{q}_j(\delta)]$ reveal no information, 
    for every $j\in\{1,\ldots,m'-1\}$.
    \item For $p\in[\overline{q}_j(\delta),p_{j+1}]$, split the belief between $\overline{q}_j(\delta)$ and $p_{j+1}$, 
    for every $j\in\{1,\ldots,m'-1\}$.
\end{itemize}
\end{lemma}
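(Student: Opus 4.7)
The plan is to explicitly construct a candidate value function $\overline w_\delta$ induced by the strategy $\overline\sigma_\delta^*$ described in the statement, pin down the cutoffs $\overline q_{\pm j}(\delta)$ by smooth-pasting conditions, and then verify optimality via a martingale argument. Since $\overline u_\delta$ is continuous and admits a concave envelope, the framework of \cite{cardaliaguet_markov_2016} applies to $G_{cont}(\overline u_\delta)$, so that $\overline v_\delta$ exists and is the unique concave viscosity solution of the dynamic-programming equation for Markov persuasion. That equation requires that at every belief $p$ at which $\overline v_\delta$ is strictly concave (so no binary split strictly improves the value and silence is optimal), the first-order linear ODE
\begin{equation*}
r\,\overline v_\delta(p) \;=\; r\,\overline u_\delta(p) + (\lambda_0+\lambda_1)(p^*-p)\,\overline v_\delta'(p)
\end{equation*}
holds, whose drift coefficient $(\lambda_0+\lambda_1)(p^*-p)$ is the deterministic evolution of the belief under silence, cf.\ Eq.~\eqref{eq p der}.

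The candidate $\overline w_\delta$ is then built piece by piece on each continuity interval: on any split segment, $\overline w_\delta$ is the affine chord connecting the two posteriors used by the split; on any slide segment, $\overline w_\delta$ solves the ODE above. Concretely, on $[p_0,p_1]$ the candidate is the chord from $p_0$ to $p_1$; on each $[p_{-j-1},p_{-j}]$ below $p^*$, one uses the chord on $[p_{-j-1},\overline q_{-j}(\delta)]$ and solves the ODE on $[\overline q_{-j}(\delta),p_{-j}]$ with terminal value inherited from the adjacent interval to the right; symmetrically on $[p_j,p_{j+1}]$ above $p^*$, one solves the ODE on $[p_j,\overline q_j(\delta)]$ and takes the chord on $[\overline q_j(\delta),p_{j+1}]$ with boundary value inherited from the left. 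The cutoffs are pinned down by requiring value and derivative matching (standard smooth pasting) at each junction, producing an inward-to-outward recursion across continuity intervals.

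The main obstacle, and the technical heart of the proof, is to show that the smooth-pasting equations admit unique solutions with $\overline q_{-j}(\delta)\in (p_{-j}-\delta,p_{-j})$ and $\overline q_j(\delta)\in (p_j,p_{j+1})$ for all $\delta$ sufficiently small, and that the resulting $\overline w_\delta$ is globally concave. Concavity within a single continuity interval follows from an integrating-factor representation of the ODE solution combined with a direct comparison to the relevant chord; the delicate point is a slope inequality across each discontinuity $p_i$, where the rapid ascent of $\overline u_\delta$ on $[p_i-\delta,p_i]$, with slope $(h_i-h_{i-1})/\delta$, is precisely what localises $\overline q_{-j}(\delta)$ inside the ramp, the only regime in which the ODE-driven slope is large enough to meet the chord slope smoothly. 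Once the cutoffs are shown to exist with the required localisation, optimality of $\overline\sigma_\delta^*$ follows from the standard verification argument: the process $e^{-rt}\overline w_\delta(p^t) + \int_0^t r e^{-rs}\overline u_\delta(p^s)\,\rmd s$ is a supermartingale under every message strategy (by concavity of $\overline w_\delta$ and the ODE inequality) and a martingale under $\overline\sigma_\delta^*$ (by construction), whence $\overline w_\delta=\overline v_\delta$ and $\overline\sigma_\delta^*$ is optimal.
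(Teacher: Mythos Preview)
Your outline is sound and is, in essence, the same verification strategy the paper follows: build a candidate from chords on split intervals and ODE solutions on slide intervals, glue by smooth pasting, and check the Cardaliaguet--Gensbittel characterization (which is equivalent to your supermartingale verification). Where you and the paper differ is in how the ``technical heart'' is actually executed.

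The paper does not attack the smooth-pasting equations directly. Instead it introduces the auxiliary function
\[
\overline g_\delta(p)\;=\;\mu\cdot\frac{\overline u_\delta(p)-\overline v_\delta(p)}{p-p^*},
\]
which by condition~(G.3) equals $\overline v_\delta'(p)$ at every extreme point of the hypograph. The localisation of the cutoffs and the concavity across discontinuities then reduce to two concrete lemmas about $\overline g_\delta$: (i) a piecewise monotonicity lemma (increasing on flat parts of $\overline u_\delta$ below $p^*$, decreasing on the $\delta$-ramps, with the analogous pattern above $p^*$), and (ii) an endpoint-comparison lemma showing $\overline g_\delta(p_{-j})>\overline g_\delta(p_{-j+1})$ and $\overline g_\delta(p_j)>\overline g_\delta(p_{j+1})$. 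The second of these is the step you flag as ``a slope inequality across each discontinuity,'' and the paper proves it not by an integrating-factor computation but by a comparison trick: it replaces $\overline u_\delta$ by an auxiliary payoff $\widetilde u_\delta$ whose concavification passes through the two relevant kink points, so that the value of the auxiliary game is computable in closed form on the relevant interval, and then invokes monotonicity of the value in the payoff function to obtain a strict inequality. This geometric/comparison device is what is missing from your sketch; without it, the assertion that the smooth-pasting equation has a unique root inside $(p_{-j}-\delta,p_{-j})$ (and the analogous statement on the right) remains unproved. Your integrating-factor remark handles concavity \emph{within} a slide segment, but not the cross-segment slope ordering, which is exactly where Assumption~1 (the concave envelope) enters.

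In short: correct architecture, same route as the paper, but you have not supplied the key lemma. If you want to complete your version, either reproduce the $\overline g_\delta$ analysis or give an independent proof of the endpoint slope inequalities; the paper's auxiliary-payoff comparison is the cleanest way I know to get the strict inequality you need.
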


The proof of Lemma~\ref{lemma:structure} requires a careful analysis of the characterization 
of the value function due to \cite{cardaliaguet_markov_2016} and \cite{gensbittel_continuous-time_2016},
and relies on the special structure of the payoff function
that is implied by Assumption~\ref{assumption:1}. 
The proof is relegated to the Appendix.
\subsection{The Optimal Strategy in $G_{cont}(u)$}\label{sec optimal st gcont}

As discussed in Lemma~\ref{lemma:structure}, 
the strategy $\overline\sigma_\delta$ is determined by the cut-offs
$(\overline q_{-j}(\delta))_{j=0}^{m-1}$ and $(\overline q_j(\delta))_{j=1}^{m'-1}$.
By taking a subsequence,
we can assume w.l.o.g.~that the limits
\[ q_j := \lim_{\delta \to 0} \overline q_j(\delta), \ \ \ j \in \{-m+1,\dots,m'-1\}, \]
exist. 
Moreover, for $j \in \{-m+1,\cdots,0\}$
we have $p_j - \delta < \overline q_j(\delta) < p_j$, hence
$\lim_{\delta \to 0} \overline q_j(\delta)=p_j $.
Let $\sigma^*$ be the strategy that is defined by these limits, see Figure~\ref{fig strategy2}.

\begin{itemize}
    \item For $p\in[p_0,p_1]$ split the belief between $p_0$ and $p_1$.
    \item For  $p\in[p_{-j-1},p_{-j}]$, split the belief between $p_{-j-1}$ and $p_{-j}$, 
    for every $j
    \in\{0,\ldots,m-1\}$.
 \item If $p^*=p_0$, then for $p\in[p_{-1},p_0)$ split the belief between $p_{-1}$ and $p_0$, 
 for $p\in (p_0,p_1]$ split the belief between $p_0$ and $p_1$.
    \item For $p\in[p_j,q_j]$ reveal no information, 
    for every $j\in\{1,\ldots,m'-1\}$.
    \item For $p\in[q_j,p_{j+1}]$, split between $q_j$ and $p_{j+1}$, 
    for every $j\in\{1,\ldots,m'-1\}$.
\end{itemize}

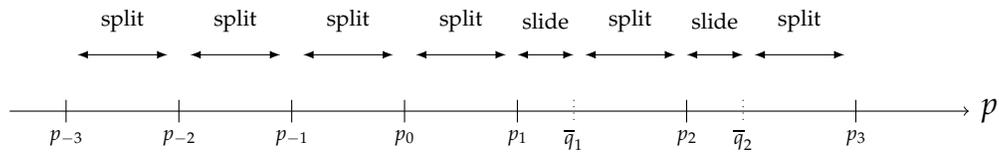
\begin{figure}[h]
\centering

\begin{tikzpicture}[scale=1.5]
\draw [latex-latex](0.1,0.5) -- (0.9,0.5);
\draw [latex-latex](1.1,0.5) -- (1.9,0.5);
\draw [latex-latex](2.1,0.5) -- (2.9,0.5);
\draw [latex-latex](3.1,0.5) -- (3.9,0.5);
\draw [latex-latex](4,0.5) -- (4.5,0.5);
\draw [latex-latex](4.6,0.5) -- (5.4,0.5);
\draw [latex-latex](5.5,0.5) -- (6,0.5);
\draw [latex-latex](6.1,0.5) -- (6.9,0.5);

    \draw[](0,0.1) -- (0,-0.1)node[below, scale=0.65]{ $p_{-3}$};
    \draw[](1,0.1) -- (1,-0.1)node[below, scale=0.65]{ $p_{-2}$};
    \draw[](2,0.1) -- (2,-0.1)node[below, scale=0.65]{ $p_{-1}$};
    \draw[](3,0.1) -- (3,-0.1)node[below, scale=0.65]{ $p_{0}$};
    \draw[](4,0.1) -- (4,-0.1)node[below, scale=0.65]{ $p_{1}$};
    \draw[dotted](4.5,0.1) -- (4.5,-0.1)node[below, scale = 0.65]{$\overline q_{1}$};
    \draw[](5.5,0.1) -- (5.5,-0.1)node[below, scale = 0.65]{$p_2$};
    \draw[dotted](6,0.1) -- (6,-0.1)node[below, scale = 0.65]{$\overline q_{2}$};
    \draw[](7,0.1) -- (7,-0.1)node[below, scale = 0.65]{$p_3$};
      \draw[->] (-0.5,0) -- (8,0) node[right] {$p$};

       \node[scale=0.7] at (.5, .8)   {split};
       \node[scale=0.7] at (1.5, .8)   {split};
       \node[scale=0.7] at (2.5, .8)   {split};
       \node[scale=0.7] at (3.5, .8)   {split};
       \node[scale=0.7] at (4.25, .8)   {slide};
       \node[scale=0.7] at (5, .8)   {split};
       \node[scale=0.7] at (5.75, .8)   {slide};
       \node[scale=0.7] at (6.5, .8)   {split};


\end{tikzpicture}
\caption{The strategy $\sigma^*$ in $G_{cont}( u)$.} \label{fig strategy2}
\end{figure}

As the following result states, the payoff under $\sigma^*$ in $G_{cont}(u)$ is $\overline v_0$.
This holds because the behavior of the sender under $\overline \sigma^*_\delta$ converges to her behavior under $\sigma^*$.

For every message strategy $\sigma$ of the sender and every $p \in [0,1]$, denote the putative value  under $\sigma$ at $p$ in $G_{cont}(u)$ by
\begin{equation}\label{eq objective cont}
w(p,\sigma) := \E_\sigma\left[\int_{t=0}^\infty  e^{-rt}  u(p^t) dt \mid p^0 = p\right], 
\end{equation}

\begin{lemma}\label{lemma:continuity}
For every $p \in [0,1]$ we have $\overline v_0(p) = w(p,\sigma^*)$.
\end{lemma}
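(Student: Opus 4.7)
The plan is to prove $\overline v_0(p) = w(p,\sigma^*)$ by establishing the two inequalities separately. The direction $w(p,\sigma^*) \leq \overline v_0(p)$ is nearly immediate: since $\overline u_\delta \geq u$, the strategy $\sigma^*$ (which is a valid sender's strategy in every $G_{cont}(\overline u_\delta)$) attains the payoff $\overline w_\delta(p,\sigma^*) := \E_{\sigma^*}[\int_0^\infty r e^{-rt}\overline u_\delta(p^t_*)\,dt] \geq w(p,\sigma^*)$, and $\overline w_\delta(p,\sigma^*) \leq \overline v_\delta(p)$ by definition of $\overline v_\delta$. Because $(\overline u_\delta)$ is uniformly bounded and decreases pointwise to $u$, dominated convergence gives $\overline w_\delta(p,\sigma^*) \to w(p,\sigma^*)$, and letting $\delta \to 0$ yields $w(p,\sigma^*) \leq \overline v_0(p)$.

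For the reverse direction $\overline v_0(p) \leq w(p,\sigma^*)$, I would use Lemma~\ref{lemma:structure} to write $\overline v_\delta(p) = \E_{\overline\sigma^*_\delta}[\int_0^\infty r e^{-rt}\overline u_\delta(p^t_\delta)\,dt]$, with $(p^t_\delta)$ the process induced by $\overline\sigma^*_\delta$, and decompose $\overline v_\delta(p) = A_\delta + B_\delta$, where $A_\delta := \E_{\overline\sigma^*_\delta}[\int_0^\infty re^{-rt} u(p^t_\delta)\,dt]$ and $B_\delta := \E_{\overline\sigma^*_\delta}[\int_0^\infty re^{-rt}(\overline u_\delta - u)(p^t_\delta)\,dt]$. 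Since $\overline u_\delta - u$ is nonnegative, supported in $\bigcup_j[p_{j+1}-\delta,p_{j+1})$, and uniformly bounded, for $B_\delta \to 0$ it suffices to show that the expected discounted time $(p^t_\delta)$ spends in these shrinking bands vanishes with $\delta$. On the left ($j<0$), the band coincides with the slide region $[\overline q_{-|j|}(\delta),p_{-|j|}]$ of width at most $\delta$, and the Markov drift~\eqref{eq p der} implies that the sliding time across it is $O(\delta)$. On the right ($j\geq 1$), for small $\delta$ the inequality $\overline q_j(\delta) < p_{j+1}-\delta$ holds, so the band sits strictly inside the split region; under a continuous-time split between $\overline q_j(\delta)$ and $p_{j+1}$ the belief concentrates on these two endpoints, and at $p_{j+1}$ one has $\overline u_\delta(p_{j+1}) = u(p_{j+1})$, so nothing within the band matters.

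The main work then reduces to showing $A_\delta \to w(p,\sigma^*)$. I would exploit the explicit piecewise-deterministic Markov structure of both processes: $\overline\sigma^*_\delta$ and $\sigma^*$ are each finitely parametrised by their cutoffs, and the expected discounted payoff under the discontinuous $u$ can be written as the solution to a finite linear system whose coefficients are continuous functions of the cutoffs, the values $h_i$, and the values of $u$ at the split points. As $\delta \to 0$, the cutoffs satisfy $\overline q_j(\delta) \to q_j$ for $j \geq 1$ and $\overline q_{-j}(\delta) \to p_{-j}$ for $j \geq 0$, and I would verify that the coefficients of this linear system converge termwise to those defining $w(p,\sigma^*)$, giving $A_\delta \to w(p,\sigma^*)$ and hence $\overline v_0(p) \leq w(p,\sigma^*)$. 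The principal obstacle is precisely the convergence $\overline q_{-j}(\delta) \to p_{-j}$ from below: at $\overline q_{-j}(\delta)$ one has $u(\overline q_{-j}(\delta)) = h_{-j-1}$, whereas at the limit $p_{-j}$ one has $u(p_{-j}) = h_{-j} > h_{-j-1}$, so pathwise convergence of processes does not by itself imply convergence of payoffs. The resolution, foreshadowed by the $B_\delta$ bound, is that every sojourn at $\overline q_{-j}(\delta)$ is followed by a slide to $p_{-j}$ of length $O(\delta)$; thus the expected discounted time spent at the low value $h_{-j-1}$ around $\overline q_{-j}(\delta)$ collapses to zero and is absorbed into the sojourn at $p_{-j}$ under $\sigma^*$, which is exactly what the termwise comparison of the Bellman-type recursions for $A_\delta$ and $w(p,\sigma^*)$ must confirm.
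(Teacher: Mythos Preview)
Your proposal is correct, but it takes a more decomposed route than the paper. The paper proves the equality $\overline v_0(p)=w(p,\sigma^*)$ in one stroke, by showing directly that $\overline v_\delta(p)=\overline w_\delta(p,\overline\sigma^*_\delta)\to w(p,\sigma^*)$ via induction over continuity intervals: on $[p_0,p_1]$ the strategies $\overline\sigma^*_\delta$ and $\sigma^*$ coincide and the process lives on $\{p_0,p_1\}$ where $\overline u_\delta=u$, so $\overline v_\delta=w(\cdot,\sigma^*)$ there outright; then, interval by interval, one plugs the convergence $\overline q_{-j}(\delta)\to p_{-j}$ and $\overline q_j(\delta)\to q_j$ into the explicit formulas for $Y^{split}$ and $Y^{slide}$ (Eqs.~\eqref{equ:time},~\eqref{equ:split1},~\eqref{equ:time:discounted}) and uses the induction hypothesis at the interval endpoint.

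Your approach instead splits the task: a dominated-convergence argument for $w(p,\sigma^*)\le\overline v_0(p)$ (clean, though ultimately redundant once the other direction gives equality), and a decomposition $\overline v_\delta=A_\delta+B_\delta$ for the reverse. The $B_\delta\to 0$ step makes explicit something the paper handles implicitly: on the right, the split endpoints $\overline q_j(\delta)$ and $p_{j+1}$ lie outside the band $[p_{j+1}-\delta,p_{j+1})$ (indeed $\overline q_j(\delta)<p_{j+1}-\delta$); on the left, the only time in a band is the slide from $\overline q_{-j}(\delta)$ to $p_{-j}$, of duration $O(\delta)$. Your $A_\delta\to w(p,\sigma^*)$ step is then exactly the same inductive recursion the paper runs, and your diagnosis of the ``principal obstacle'' (the jump $u(\overline q_{-j}(\delta))=h_{-j-1}\neq h_{-j}=u(p_{-j})$) and its resolution (that slide time vanishes) match the paper's mechanism. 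In short: same substantive ideas, but the paper avoids the $A_\delta/B_\delta$ detour by working with $\overline u_\delta$ throughout and appealing to the explicit split/slide formulas, which already bake in the $O(\delta)$ estimates.
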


Since there is a sender's message strategy that guarantees the payoff $\overline v_0$,
we have 
\[ v_{cont}(p) \geq \overline v_0(p), \ \ \ \forall p \in [0,1]. \]
Together with Eq.~\eqref{equ:val:1} this implies that 

\begin{equation}\label{eq equal values}
v_{cont} = \overline v_0,   
\end{equation} 
and that $\sigma^*$ is an optimal strategy in $G_{cont}(u)$.

The proof is inductive: 
We show that $\overline v_0 = w(\cdot,\sigma^*)$ on $[p_0,p_1]$,
and continue to show the same on continuity intervals $[p_{-j-1},p_{-j}]$ and $[p_j,p_{j+1}]$ with larger $j$'s.
The proof appears in Section~\ref{sec proof lemma conti}.

\subsection{The Value Function in Continuous Time as an Approximation of the Value Function in Discrete Time}
\label{section:approx}

So far we characterized the value function
and the sender's optimal message strategy in the continuous-time game.
Here we complete the proof of Theorem~\ref{theorem:main_result}, 
by showing 
that $\sigma^*$, when interpreted as a strategy in  the discrete-time game $G_\Delta(u)$, is approximately optimal, 
provided $\Delta$ is sufficiently small. 

\cite{cardaliaguet_markov_2016} proved that 
if the instantaneous payoff function $u$ is Lipschitz for the $L^1$-norm,
then the sender's optimal message strategy in the continuous-time game is approximately optimal in $G_\Delta(u)$,
provided $\Delta$ is sufficiently small. 
In our model $u$ is not continuous, hence we cannot apply the results of \cite{cardaliaguet_markov_2016}.
Our proof is divided into two steps. 
Lemma~\ref{lemma discrete ap} states that the payoff under the strategy $\sigma^*$ in $G_\Delta(u)$ 
approaches $v_{cont}$, the payoff under $\sigma_*$ in $G_{cont}(u)$ as $\Delta$ goes to 0. 
Lemma~\ref{lemma diminish dif} then implies that $v_{cont}$ is close to the value of the discrete-time game.

The strategy $\sigma^*$ belongs to in the continuous-time game
$G_{cont}(u)$:
For each receiver's belief $p$ it indicates whether the sender reveals no information,
or whether she splits the receiver's belief between two beliefs.
However, it can be viewed also as a strategy in the discrete-time game 
$G_\Delta(u)$:
At every stage $n$, as a function of the current receiver's belief $p^n$,
it reveals no information if $\sigma^*$ reveals no information at $p^n$,
and otherwise it splits the belief as $\sigma^*$ does.
To avoid cumbersome notations, we denote the strategy induced in the discrete-time game by $\sigma^*$ as well.

Denote the putative value obtained by the sender under the message strategy $\sigma^*$ 
in the discrete-time game $G_\Delta(u)$ by $w_\Delta(p,\sigma^*)$. 

\begin{lemma}\label{lemma discrete ap}
For every $p \in [0,1]$, $\lim_{\Delta\to 0}w_\Delta(p,\sigma^*)=v_{cont}(p)$. 
\end{lemma}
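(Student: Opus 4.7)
The plan is to view $\sigma^*$, in both the continuous-time and the discrete-time games, as inducing a Markov process on a finite set of anchor points (the breakpoints $p_j$ and the cut-offs $q_j$), together with deterministic sliding segments in the intervals $[p_j,q_j]$. I would then show that the discrete-time belief process and its discounted payoff converge to their continuous-time counterparts as $\Delta\to 0$. The argument proceeds by induction on intervals, mirroring the inductive structure of the proof of Lemma~\ref{lemma:continuity}, and invokes the identity $v_{cont}(p)=w(p,\sigma^*)$ established there.

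On a pure-split interval such as $[p_{-j-1},p_{-j}]$, the continuous-time belief is a two-state continuous-time Markov chain on the endpoints: from $p_{-j-1}$ there is an upward jump at rate $\alpha_{-j}:=(\lambda_0+\lambda_1)(p^*-p_{-j-1})/(p_{-j}-p_{-j-1})$ (produced by the Markov drift followed by an immediate re-split), while $p_{-j}$ is absorbed into the adjacent interval to its right. In discrete time under $\sigma^*$, one period of drift displaces the belief from $p_{-j-1}$ by $(\lambda_0+\lambda_1)(p^*-p_{-j-1})\Delta+o(\Delta)$, and the subsequent split returns it to $p_{-j-1}$ or $p_{-j}$ with probabilities $1-\alpha_{-j}\Delta+o(\Delta)$ and $\alpha_{-j}\Delta+o(\Delta)$, respectively. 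Hence the discrete-time chain converges in distribution over any finite horizon to the continuous-time one, and the discounted-sum payoff on this interval converges to the discounted-integral payoff by dominated convergence.

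On a slide-then-split interval $[p_j,p_{j+1}]$ with $j\ge 1$, the continuous-time process slides deterministically along the ODE~\eqref{eq p der} until it reaches $q_j$, and then plays a two-state chain between $q_j$ and $p_{j+1}$, handled as above. Because $u$ is constant and equal to $h_j$ on $[p_j,p_{j+1})$, the discrete slide differs from its continuous counterpart only in a single ``straddle'' period when the discrete belief first crosses $q_j$ or $p_{j+1}$, contributing an error of at most $(1-e^{-r\Delta})\|u\|_\infty=O(\Delta)$. Using the strong Markov property at anchor points, both $w_\Delta(p,\sigma^*)$ and $w(p,\sigma^*)$ decompose as an intra-interval contribution plus a discounted continuation value at an exit anchor, and the induction on intervals propagates the convergence.

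The main obstacle I anticipate is the uniform control of the discrete-time payoff across the discontinuities of $u$. Every time the Markov drift causes the discrete belief to overshoot a breakpoint $p_j$, $q_j$, or $p_{j+1}$ by $O(\Delta)$, the instantaneous payoff differs from its continuous-time value by a full jump $h_{j+1}-h_j$ of $u$. To bound the total contribution of such straddle stages I need to show that their expected number per unit time is bounded uniformly in $\Delta$; this follows from the fact that, under $\sigma^*$, the continuous-time process visits each anchor only finitely often per unit time in expectation, so the number of straddles per unit time is $O(1)$ and their total discounted payoff contribution is $O(\Delta)$. Once this uniform bound is in hand, the remainder reduces to routine estimates of transition probabilities and discounted sums, and the convergence $w_\Delta(p,\sigma^*)\to v_{cont}(p)$ follows.
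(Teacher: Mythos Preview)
Your proposal is correct and takes essentially the same inductive route as the paper: reduce to the convergence of the discrete split and slide building blocks, then propagate across continuity intervals exactly as in the proof of Lemma~\ref{lemma:continuity}. The paper phrases the building-block step via the expected discounted hitting times, observing that $Y^{\Delta,split}_{p',p''}\to Y^{split}_{p',p''}$ and $Y^{\Delta,slide}_{p',p''}\to Y^{slide}_{p',p''}$ (cf.\ Eqs.~\eqref{equ:time} and~\eqref{equ:time:discounted}) rather than via transition rates of an anchor-point chain; this packaging makes your overshoot concern essentially moot, since under $\sigma^*$ the post-message belief always lands exactly on an anchor and $u$ is constant on every sliding segment, so the only payoff discrepancy is a single straddle period per interval crossing, which is the $O(\Delta)$ term you already identified.
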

The proof of Lemma~\ref{lemma discrete ap} appears in the appendix, in Section~\ref{sec proof lemma lim}.

Lemma~\ref{lemma discrete ap} implies that 
$\lim_{\Delta \to 0} v_\Delta  \geq v_{cont}$.
The next lemma states that 
$\lim_{\Delta \to 0} v_\Delta \leq v_{cont}$,
thereby completing the proof of Theorem~\ref{theorem:main_result}.

\begin{lemma}\label{lemma diminish dif}
For every $p\in[0,1]$ and every sender's message strategy $\sigma$, $\lim_{\Delta\to 0} w_\Delta(p,\sigma)\leq v_{cont}(p)$.
\end{lemma}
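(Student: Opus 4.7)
The idea is to dominate the discontinuous payoff $u$ by one of the continuous approximants $\overline{u}_\delta$ constructed in Section~\ref{sec: u approx}, apply the continuous-payoff discrete/continuous approximation result of \cite{cardaliaguet_markov_2016} at each fixed $\delta > 0$, and then pass to $\delta \to 0$. The three ingredients we need are already at our disposal: the pointwise inequality $u \leq \overline u_\delta$, the pointwise convergence $\overline v_\delta \to \overline v_0$ established in Section~\ref{sec: u approx}, and the identity $\overline v_0 = v_{cont}$ from Eq.~\eqref{eq equal values}.

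First I would fix a sender's message strategy $\sigma$, a belief $p \in [0,1]$, and a parameter $\delta > 0$. Because $\overline u_\delta \geq u$ pointwise, the putative payoff of $\sigma$ in $G_\Delta(u)$ is bounded from above by its putative payoff in $G_\Delta(\overline u_\delta)$, which is in turn bounded by the value of $G_\Delta(\overline u_\delta)$; denote this value by $\overline v_{\Delta,\delta}(p)$, so that $w_\Delta(p,\sigma) \leq \overline v_{\Delta,\delta}(p)$. The second step is to take $\Delta \to 0$ with $\delta$ held fixed. For each $\delta > 0$ the function $\overline u_\delta$ is piecewise linear, hence Lipschitz, so the approximation theorem of \cite{cardaliaguet_markov_2016} applies and yields $\overline v_{\Delta,\delta}(p) \to \overline v_\delta(p)$ as $\Delta \to 0$. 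Consequently
\[ \limsup_{\Delta \to 0} w_\Delta(p,\sigma) \leq \overline v_\delta(p). \]
Finally, letting $\delta \to 0$ on the right-hand side and invoking the definition $\overline v_0(p) = \lim_{\delta \to 0}\overline v_\delta(p)$ together with Eq.~\eqref{eq equal values} gives $\limsup_{\Delta \to 0} w_\Delta(p,\sigma) \leq \overline v_0(p) = v_{cont}(p)$, as required.

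The main obstacle is the application of \cite{cardaliaguet_markov_2016} in the second step: their result is formulated for $L^1$-Lipschitz instantaneous payoffs, and while $\overline u_\delta$ is Lipschitz for each fixed $\delta > 0$, its Lipschitz constant is of order $1/\delta$ and explodes as $\delta \to 0$. This causes no difficulty because the two limits are taken sequentially rather than jointly: the rate at which $\overline v_{\Delta,\delta}(p)$ converges to $\overline v_\delta(p)$ as $\Delta \to 0$ is allowed to depend on $\delta$, and the subsequent passage $\delta \to 0$ uses only the already-established pointwise convergence $\overline v_\delta \to \overline v_0$. A minor technical point to flag is that $\lim_{\Delta \to 0} w_\Delta(p,\sigma)$ need not exist a priori; the argument actually produces the corresponding bound for the limit superior, which is all that is needed to obtain $\lim_{\Delta \to 0} v_\Delta \leq v_{cont}$ and thereby close the proof of Theorem~\ref{theorem:main_result}.
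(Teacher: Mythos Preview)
Your proposal is correct and follows essentially the same route as the paper: dominate $u$ by the Lipschitz approximant $\overline u_\delta$, bound $w_\Delta(p,\sigma)$ by the discrete-time value $v_\Delta(\overline u_\delta)$, apply the convergence result of \cite{cardaliaguet_markov_2016} at fixed $\delta$, and then send $\delta\to 0$ using Eq.~\eqref{eq equal values}. Your remarks on the $\delta$-dependent Lipschitz constant and on working with the $\limsup$ are well taken and make the argument slightly more explicit than the paper's version.
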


The proof of Lemma~\ref{lemma diminish dif} appears in the appendix, in Section~\ref{sec proof lemma dim}.

\section{Discussion}

Assumption~\ref{assumption:1} requires that the payoff function is monotone, piecewise constant, and has concave envelope.
How does the characterization of the sender's optimal message strategy change when this assumption is weakened?

The case that $u$ is continuous (rather than piecewise constant) and concave 
falls under the model studied by \cite{cardaliaguet_markov_2016},
who showed that in this case
the sender's optimal message strategy is to never reveal information.
This can be viewed as a limit case of our model,
when the set of discontinuity points of $u$ becomes dense in $[0,1]$,
in which case the splits of beliefs become narrow
(that is, the sender splits the receiver's belief into nearby beliefs),
and at the limit no splitting is done.

When $u$ is monotone and piecewise constant but not concave,
it is no longer true that for every $p$ at which it is optimal to split the receiver's belief, the optimal split is to the endpoints of the continuity interval that contains $p$.
Indeed, if this interval lies below $p^*$ and does not intersect the concave envelope of $u$,
it will be "skipped" and the receiver's belief will never lie in this interval (after the initial split).

%
%
%
%
%

We discuss next the case when $u$ is piecewise constant with a concave envelope
but not monotone.
Consider for example the choice (see Figure~\ref{fig example 2}):

\[u(p) = \left\{
     \begin{array}{cl} 0,  & p\in[0,\frac{1}{3}),\\
     1, & p\in[\frac{1}{3},\frac{2}{3}), \\ 
     0, &p\in[\frac{2}{3},1].
     \end{array}
   \right.\]

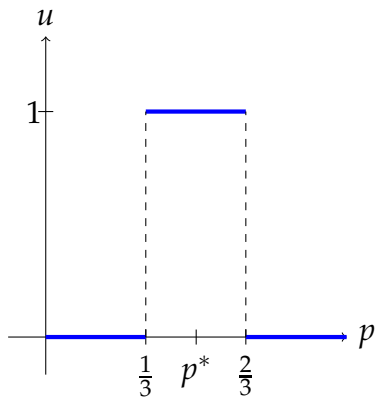
\begin{figure}[h]  
\centering 

\begin{tikzpicture} 
    \draw[](2,0.1) -- (2,-0.1)node[below]{$p^*$}; 
    \draw[](1.33,0.1) -- (1.33,-0.1)node[below]{$\tfrac{1}{3}$}; 
    \draw[](2.66,0.1) -- (2.66,-0.1)node[below]{$\tfrac{2}{3}$}; 
      \draw[->] (-0.5,0) -- (4,0) node[right] {$p$}; 
      \draw[->] (0,-0.5) -- (0,4) node[above] {$u$};  
      \draw [blue, ultra thick] (0,0) -- (1.33,0); 
       \draw [blue, ultra thick] (1.33,3) -- (2.66,3);
       \draw [blue, ultra thick] (2.66,0) -- (4,0); 
       \draw[](-0.1,3) -- (0.1,3)node[left]{$1$};
       \draw [dashed] (1.33,0) -- (1.33,3);
       \draw [dashed] (2.66,0) -- (2.66,3);
\end{tikzpicture} 
\caption{A nonmonotone function $u$.} \label{fig example 2}  
\end{figure}  

Over the belief interval $[0,\frac{1}{2}]$, the function fits the model analyzed in the paper, and therefore the optimal strategy for $p \in [0,\frac{1}{3}]$ is to split the belief between $0$ and $\frac{1}{3}$. For symmetric reason, the optimal strategy for $p \in [\frac{2}{3},1]$ is to split the belief between $\frac{2}{3}$ and $1$. This in unlike the case when $u$ is increasing, where to the right of $p^*$ the  optimal strategy involves sliding as well as splitting.  

In this example, the maximum of $u$ is attained at $p^*$.
When the maximum of $u$ is attained, say, at $p_j$ for $j \geq 1$,
the sender's optimal message strategy on $[0,p_j]$ coincides with $\sigma^*$,
yet the sender's optimal message strategy on $[p_j,1]$ may be more intricate than $\sigma^*$.

For a general piecewise constant $u$,
some continuity intervals will be skipped altogether,
in some the receiver's belief will be split between the two endpoints of the interval
(as happens in our model for continuity intervals below $p^*$),
and some will be divided into two (as happens in our model for continuity intervals above $p^*$): 
in one part no information will be revealed, 
and in the other the belief will be split between 
the interval's cutoff point and some discontinuity point of $u$,
which may or may not be the endpoint of the continuity interval.

A natural question is whether the sender has a \textit{uniformly $\ep$-optimal} message strategy;
that is, 
a strategy that is $\ep$-optimal for every discount rate $r$ sufficiently close to 0.
In this case, only the far future matters.
By Theorem~\ref{theorem:main_result}, if $p^* \in (p_0,p_1)$, 
under $\sigma^*$ the receiver's belief after the sender sends her message converges to $\{p_0,p_1\}$ with probability 1,
and any strategy under which the belief after the sender sends her message converges to $\{p_0,p_1\}$ with probability 1 is uniformly $\ep$-optimal.
If $p^* = p_0$, the same holds for any message strategy under which the receiver's belief after the sender sends her message converges to $\{p_0\}$.

Another interesting question concerns a variation of the model,
where the receiver obtains information about the state at random times, independently of the sender's choices.
We conjecture that the sender's optimal message strategy will have the same structure as $\sigma^*$,
yet the cutoffs $(q_j)_{j=1}^{m'-1}$ will be higher than the ones we identified,
to compensate for the lower significance of the instantaneous payoff.

\section{Conclusion}

Our result is part of the growing literature devoted to the study of optimal strategies in persuasion games.
There are two aspects that single out our work.
First, as in \cite{ely_beeps_2017} and \cite{Renault_2017}, the payoff function in our model is not continuous, but piecewise constant. \cite{ely_beeps_2017} and \cite{Renault_2017} were interested in situations where the belief space is divided into two convex regions
and the payoff in each region is constant,
and asked whether a specific strategy, 
namely, the myopic strategy, is optimal.
In contrast, we allow for more than two continuity regions.
It turns out that in the interval $[0,p_1]$ (or $[0,p_0]$, if $p^*=p_0$)
the sender's optimal message strategy is myopic,
while on the interval $[p_1,1]$ it is not.
Our work highlights the interplay between the Markov transition and the monotonicity of payoffs:
When the Markov transition leads to beliefs with higher (resp.~lower) payoff,
the myopic strategy is optimal (resp.~not optimal). 
Second, our study combines tools provided by the literature on continuous-time games,
like the approach taken by \cite{gensbittel_2020}
with geometric intuitions.

We studied the model with two states of nature. 
A natural question is whether similar analysis can be carried out in the presence of more than two states of nature.
Unfortunately, the answer is negative.
\cite{Renault_2017} presented a 
discrete-time
example with three states of nature where the function $u$ is piecewise constant and attains two values, and the optimal strategy is quite involved.
A similar phenomenon occurs in continuous-time games, as studied by \cite{gensbittel_2020}.

\newpage

\section{Appendix}

The most challanging part of the proof of Theorem~\ref{theorem:main_result} is Lemma~\ref{lemma:structure}. 
The proof of the lemma is organized as follows: In Section~\ref{sec two use} we present two useful 
Markovian strategies: 
one reveals no information in a certain range $[p',p'']$ of beliefs,
and the other splits the receiver's beliefs between $p'$ and $p''$ whenever the current belief is in $[p',p'']$. 
We study the payoff under these strategies in the continuous-time game.
In Section~\ref{sec monoton} we present results about the monotonicity of the value function. 
Section~\ref{sec:str cont} presents useful results from the literature and provides the sender's optimal message strategy for beliefs in $[p_0,p_1]$. 
Section~\ref{sec g} introduces a 
certain function, denoted $\overline g_\delta$, 
and presents its relation with the derivative of the value function.  
Section~\ref{sec der} characterizes the derivative of the value function using $\overline g_\delta$. Section~\ref{sec str and der} connects the strategies presented in Section~\ref{sec two use} to the derivative found in Section~\ref{sec der}.  
Section~\ref{sec lemma cont proof} concludes the proof.
Section~\ref{sec rem proofs} contains 
the proof of intermediate results used in earlier sections, 
and Section~\ref{sec proofs three} presents the proofs of Lemmas~\ref{lemma:continuity}, \ref{lemma discrete ap}, and~\ref{lemma diminish dif}.

\subsection{Two useful strategies}\label{sec two use}

In this section we present two message strategies of the sender that induce different ways for the belief of the receiver to get from one value $p'$ to another value $p''$,
where either $p'< p'' \leq p^*$ or $p'>p''\geq p^*$. 
We then compute the expected discounted time it takes for the belief to get from $p'$ to $p''$
under each of the two strategies. 
The purpose of this computation is threefold. 
First, this will allow us to study properties of the optimal strategy. 
Second, the analysis supports
an intuitive explanation for the sender's optimal message strategy
provided in the introduction.
Third, this analysis is general and does not depend on the function $u(p)$, so it might be of independent interest.

For every 
two distinct beliefs $p',p'' \in [0,1]$,
let $\sigma^{split}_{p',p''}$ be a sender's message strategy that splits the receiver's belief between $p'$ and $p''$ (whenever the receiver's belief is in $[p',p'']$.
Let $\sigma^{slide}_{p',p''}$ be a strategy that reveals no information whenever the belief is in the interval $[p',p'')$.

We now compare the time it takes for each of the strategies $\sigma^{split}_{p',p''}$
and $\sigma^{slide}_{p',p''}$ to make the receiver's belief move from $p'$ to $p''$.
Denote by $\tau_{p''} := \min\{t \geq 0 \colon p^t = p''\}$ 
the first time when the receiver's belief is $p''$,
by $Y^{split}_{p',p''} := \mathbb{E}_{\sigma^{split}_{p',p''}}\left[1-e^{-r\tau_{p''}}\right]$ the expected discounted time to reach belief $p''$ from belief $p'$ under $\sigma^{split}_{p',p''}$,
and by $Y^{slide}_{p',p''}$ the corresponding quantity under $\sigma^{slide}_{p',p''}$.

Under the strategy $\sigma^{split}_{p',p''}$, when $p^0 = p'$,
the stopping time $\tau_{p''}$ has exponential distribution with parameter $\Lambda:=\frac{\lambda_0-p'(\lambda_0+\lambda_1)}{p''-p'}$. 
Using the definition of $p^*$ and defining $\mu:=\frac{r}{\lambda_0+\lambda_1}$, simple algebraic manipulations yield that
\begin{equation}
\label{equ:time}
    Y^{split}_{p',p''}=\frac{\mu\cdot(p''-p')}{p^*-p'+\mu\cdot(p''-p')}.
\end{equation}
This in turn implies that
\begin{equation}
\label{equ:split1}
    w(p',\sigma^{split}_{p',p''})=Y^{split}_{p',p''}\cdot {u}(p') + (1-Y^{split}_{p',p''})\cdot w(p'', \sigma^{split}_{p',p''}).
\end{equation}

When $p' < p^* < p''$ or $p'' < p^* < p'$,
under the strategy $\sigma^{slide}_{p',p''}$, when $p^0 = p'$,
the belief will converge to $p^*$ and never reach $p''$.
For the following computation we therefore assume that $p' < p'' < p^*$
(or, analogously, $p^* < p'' < p'$).
Under $\sigma^{slide}_{p',p''}$, when $p^0 = p'$, we have

\begin{equation}\label{equ slide time} \tau_{p''} = -\frac{1}{\lambda_0+\lambda_1}\ln\left(\frac{p^*-p''}{p^*-p'}\right), 
\end{equation}
hence 
\begin{equation}
\label{equ:time:discounted}
Y^{slide}_{p',p''}=
    1-\left(\frac{p^*-p''}{p^*-p'}\right)^{\frac{r}{\lambda_0+\lambda_1}}= 1-\left(\frac{p^*-p''}{p^*-p'}\right)^{\mu},
\end{equation}
and a relation analogous to Eq.~\eqref{equ:split1} holds.

\begin{lemma}
 $Y^{slide}_{p',p''}\geq Y^{split}_{p',p''}$ for every $p' < p'' \leq p^*$ or $p'>p''\geq p^*$. 
\end{lemma}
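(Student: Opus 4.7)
\medskip
\noindent\textbf{Proof plan.} The plan is to reduce the inequality to a one-variable calculus fact by introducing the dimensionless parameter
\[ x \;:=\; \frac{p^*-p''}{p^*-p'}. \]
In the case $p'<p''\le p^*$ one has $p^*-p'>0$ and $0\le p^*-p''<p^*-p'$, while in the case $p'>p''\ge p^*$ both numerator and denominator are nonpositive with the same sign and with $|p^*-p''|\le|p^*-p'|$. In both regimes $x\in[0,1)$, and since $p''-p'=(p^*-p')(1-x)$, I can rewrite the two quantities of interest uniformly as
\[ Y^{split}_{p',p''} \;=\; \frac{\mu(1-x)}{1+\mu(1-x)}, \qquad Y^{slide}_{p',p''} \;=\; 1-x^{\mu}, \]
using Eqs.~\eqref{equ:time} and~\eqref{equ:time:discounted} (and reading $Y^{slide}=1$ at $x=0$, which corresponds to $p''=p^*$, by the standard convention $\mathbb{E}[1-e^{-r\tau}]=1$ when $\tau=\infty$).

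Next, the desired inequality $Y^{slide}_{p',p''}\ge Y^{split}_{p',p''}$ is equivalent, after passing to complements, to
\[ 1-Y^{slide}_{p',p''} \;=\; x^{\mu} \;\le\; \frac{1}{1+\mu(1-x)} \;=\; 1-Y^{split}_{p',p''}, \]
or equivalently to $f(x)\le 1$ for all $x\in[0,1]$, where
\[ f(x) \;:=\; x^{\mu}\bigl(1+\mu(1-x)\bigr). \]
I would then verify this by a one-line derivative computation. Clearly $f(1)=1$, and a direct differentiation gives
\[ f'(x) \;=\; \mu x^{\mu-1}\bigl(1+\mu(1-x)\bigr) - \mu x^{\mu} \;=\; \mu x^{\mu-1}\bigl[(1+\mu)-(1+\mu)x\bigr] \;=\; \mu(1+\mu)\,x^{\mu-1}(1-x), \]
which is nonnegative on $[0,1]$. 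Therefore $f$ is nondecreasing on $[0,1]$ and $f(x)\le f(1)=1$, yielding the inequality.

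Essentially no step here is delicate; the only thing to handle carefully is that the same substitution $x$ works for both regimes (including the boundary case $p''=p^*$, where $x=0$ and $Y^{slide}=1$), so the two cases in the statement can be treated simultaneously rather than separately. This also gives the intuitive takeaway the authors emphasize in the introduction: in the rescaled coordinate $x$, the inequality between the two stopping-time payoffs is universal, with equality only as $x\uparrow 1$ (i.e.\ as $p''\to p'$).
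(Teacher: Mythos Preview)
Your proof is correct and follows essentially the same approach as the paper: both introduce the substitution $x=k=\frac{p^*-p''}{p^*-p'}\in[0,1)$, reduce the claim to the one-variable inequality $x^{\mu}\bigl(1+\mu(1-x)\bigr)\le 1$ (equivalently the paper's $1+\mu k^{\mu+1}-(\mu+1)k^{\mu}\ge 0$), and verify it via the derivative together with the boundary value at $x=1$. Your version is slightly tidier in handling both regimes simultaneously and the boundary case $p''=p^*$, but the argument is the same.
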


\begin{proof}
By Eqs.~\eqref{equ:time} and~\eqref{equ:time:discounted}, we need to show that 

\[1-\left(\frac{p^*-p''}{p^*-p'}\right)^\mu \geq \frac{\mu\left(1-\frac{p^*-p''}{p^*-p'}\right)}{1+\mu\left(1-\frac{p^*-p''}{p^*-p'}\right)}.\]

Denoting $k:=\frac{p^*-p''}{p^*-p'}\in(0,1)$, we need to show that

\[1-k^\mu \geq \frac{\mu(1-k)}{1+\mu(1-k)}.\]

Simple algebraic manipulations show that the above inequality is equivalent to

\[1+\mu k^{\mu+1}-(\mu+1)k^\mu\geq 0.\]

For $k=0$ the inequality is strict and for $k=1$ it is weak. 
Finally, the derivative of the left-hand side with respect to $k$ is negative. This completes the proof.
\end{proof}

\begin{conclusion}\label{conc: linear}
Let $\sigma$ be a Markovian sender's message strategy,
where the belief is split between $p'$ and $p''$ for every $p\in[p',p'']$,
where
$p'\leq p^* \leq p''$. 
Then the resulting putative value function is 
\begin{eqnarray}\nonumber
w(p,\sigma)&=&u(p')\cdot\frac{p'' \cdot(\mu+1) - p^*}{(p''-p')(\mu+1)}+u(p'')\cdot\frac{p^*-p'\cdot(\mu+1)}{(p''-p')(\mu+1)}\\ \label{equ:conc:1}
&&+ p\mu\cdot\frac{ u(p'')-u(p')}{(p''-p')(\mu+1)}.
\end{eqnarray}
\end{conclusion}

Conclusion~\ref{conc: linear}
is obtained by using Eq.~\eqref{equ:split1} twice, once 
as it appears, and once with the roles of $p'$ and $p''$ exchanged.
By definition, $w(\cdot,\sigma)$ is linear on $[p',p'']$, and 
Eq.~\eqref{equ:conc:1} 
follows.

\subsection{Monotonicity of the Value Function}\label{sec monoton}

It is well known that
the value functions
in $G_{cont}(u)$ and $G_\Delta(u)$ 
are
concave, whether or not $u$ is continuous.
In this section we explore 
monotonicity
properties of the value function
under the assumption that $u$ is continuous.
We argue that when $u$ is nondecreasing,
$v_{cont}$ is nondecreasing as well, 
and if $u(1)>u(p^*)$, then the value function is strictly increasing.
Recall that when $u$ is continuous, by  \cite{cardaliaguet_markov_2016} 
there is an optimal message strategy that is Markovian.

\begin{lemma}
\label{lemma:monotonicity:v}
Suppose that the indirect payoff function $u$ is 
continuous and
nondecreasing.
Then $v_{cont}$ is nondecreasing.
If 
$u(1)>u(p^*)$, 
then $v_{cont}$ is increasing.\footnotemark
\end{lemma}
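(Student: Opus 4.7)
The plan is a pathwise coupling argument. Fix $0 \le p_1 < p_2 \le 1$ and an arbitrary sender's message strategy $\sigma_1$ from $p_1$ with induced belief process $(p^t_1)_{t \ge 0}$. I will construct a strategy $\sigma_2$ from $p_2$ whose belief process $(p^t_2)_{t \ge 0}$ satisfies $p^t_2 \ge p^t_1$ almost surely for every $t$. Since $u$ is nondecreasing, this gives $w(p_2,\sigma_2) \ge w(p_1,\sigma_1)$, and taking the supremum over $\sigma_1$ yields $v_{cont}(p_2) \ge v_{cont}(p_1)$.

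The key input is the affine nature of the drift condition~\eqref{equ:belief}. If $(p^t_1)$ is a valid belief process from $p_1$, then the shifted process $\widetilde p^{\, t} := p^t_1 + (p_2-p_1) e^{-(\lambda_0+\lambda_1)t}$ satisfies the same drift condition from $p_2$: direct computation gives
\[ \E\!\left[\widetilde p^{\, t+h} \mid \FR_t\right] = p^* + (p^t_1-p^*) e^{-(\lambda_0+\lambda_1)h} + (p_2-p_1) e^{-(\lambda_0+\lambda_1)(t+h)} = p^* + (\widetilde p^{\, t}-p^*) e^{-(\lambda_0+\lambda_1)h}. \]
When $\widetilde p^{\, t}$ threatens to exceed $1$, I iterate: at the first hitting time $\tau_1 := \inf\{t \ge 0 : \widetilde p^{\, t} \ge 1\}$ both processes meet at value~$1$, and on $[\tau_1, \tau_2)$ I reinitialize by setting $p^t_2 := p^t_1 + (1-p^{\tau_1}_1) e^{-(\lambda_0+\lambda_1)(t-\tau_1)}$, with $\tau_2$ the next such hitting time, and so on. The resulting concatenation is a bona fide belief process from $p_2$ that dominates $(p^t_1)$ pointwise.

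For the strict clause, suppose $u(1) > u(p^*)$. Continuity and monotonicity of $u$ yield $\hat p \in (p^*,1)$ and $\eta > 0$ with $u(p) - u(p^*) > \eta$ for every $p \in [\hat p, 1]$. Because $p^0_2 - p^0_1 = p_2 - p_1 > 0$ and the shift decays at the fixed rate $\lambda_0+\lambda_1$, the coupling keeps $p^t_2 > \hat p$ and $p^t_1 < \hat p$ on some positive-probability, positive-duration event at the start of play: the drift condition forces $\E[p^t_1]$ close to $p_1$ for small $t$, so $\prob(p^t_1 < \hat p)$ is uniformly bounded below provided $p_1 < \hat p$; if instead $p_1 \ge \hat p$, one splits the argument in two by interpolating through an intermediate belief and using the weak monotonicity already established. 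On this event $u(p^t_2) - u(p^t_1) > \eta$, integrating to a strictly positive uniform gap $w(p_2,\sigma_2) - w(p_1,\sigma_1) \ge c > 0$, which passes to $v_{cont}(p_2) > v_{cont}(p_1)$.

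The main technical obstacle is verifying that the iterated coupling produces a valid c\`adl\`ag belief process. Between hitting times the drift condition holds by the linearity computation, and at each $\tau_k$ the two pieces agree continuously at~$1$. Two points require care. First, one must rule out accumulation of the hitting times $(\tau_k)$, which follows from the fact that each reinitialized shift starts strictly positive and decays at the uniform exponential rate $\lambda_0+\lambda_1$, so any accumulation would force infinitely many large upward jumps of $(p^t_1)$ in finite time. Second, one must handle upward jumps of $(p^t_1)$ that cause the naive shift to overshoot~$1$; at such moments the matched jump of $(p^t_2)$ must be reshaped to respect both Bayes-plausibility and the boundary $[0,1]$, which is accomplished by capping the upward branch at $1$ and rebalancing the jump probabilities in a Bayes-plausible way while preserving $p^t_2 \ge p^t_1$.
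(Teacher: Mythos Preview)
Your coupling approach is genuinely different from the paper's. The paper exploits the known concavity of $v_{cont}$ to reduce weak monotonicity to the interval $[p^*,1]$, then argues by cases (either $v_{cont}(p_1)\le u(p_2)$, in which case repeatedly splitting between $p_1$ and $p_2$ from $p_2$ already beats $v_{cont}(p_1)$; or $v_{cont}(p_1)>u(p_2)$, in which case the optimal play from $p_1$ must eventually push the belief above $p_2$ with positive probability, and concavity finishes). For strictness the paper argues by contradiction: if $v_{cont}$ were flat on some $[p',1]$ with $p'>p^*$, then $v_{cont}(p')<u(1)$ by continuity of $u$, and splitting between $p'$ and $1$ from $p=1$ gives $v_{cont}(1)>v_{cont}(p')$. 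These arguments are short because they lean on concavity and on the existence of an optimal strategy for continuous $u$; your path stays closer to first principles.

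On weak monotonicity your affine shift $\widetilde p^{\,t}=p^t_1+(p_2-p_1)e^{-(\lambda_0+\lambda_1)t}$ does preserve the drift condition; the only obstacle is the boundary at $1$. Your iterated reinitialization handles continuous crossings, but the jump case is the heart of the matter and your ``cap and rebalance'' line is a promissory note rather than a construction. It can be made to work (at a jump time the target mean is at most the left limit $\le 1$, so a $[p^t_1,1]$-valued random variable with that mean exists), but you have not actually built the process, verified the martingale property across the stopping times, or ruled out pathologies; as written this is a sketch.

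The strict clause has a genuine gap. From $p^t_2>\hat p$ and $p^t_1<\hat p$ you conclude $u(p^t_2)-u(p^t_1)>\eta$, but your choice of $\hat p$ only guarantees $u(p^t_2)-u(p^*)>\eta$; since $\hat p>p^*$, the bound $p^t_1<\hat p$ does not give $u(p^t_1)\le u(p^*)$. More seriously, the uniform estimate $w(p_2,\sigma_2)-w(p_1,\sigma_1)\ge c>0$ over \emph{all} $\sigma_1$ is false in general: take $u$ constant on $[0,\hat p]$ with $\hat p>p^*$, pick $p_1<p_2<p^*$, and let $\sigma_1$ reveal no information. Then both coupled belief processes remain in $[0,\hat p]$ for all $t$, so $u(p^t_2)=u(p^t_1)$ identically and $w(p_2,\sigma_2)=w(p_1,\sigma_1)$. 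Hence the passage to the supremum does not produce a strict inequality. A clean repair is to drop the coupling here and argue as the paper does: once you have weak monotonicity and concavity, failure of strict monotonicity means $v_{cont}$ is constant on a right interval $[p',1]$, and a single split between $p'$ and $1$ yields a contradiction using $u(1)>u(p^*)$.
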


\footnotetext{
Lemma~\ref{lemma:monotonicity:v} is valid even without the assumption that $u$ is continuous,
yet we will use it only for the approximating functions
$(\overline u_\delta)_{\delta > 0}$,
which are continuous.
}

\begin{proof}
If $p^* = 1$, then $v_{cont}(1) = u(1)$.
Since $u$ is nondecreasing, we moreover have $v_{cont}(p) \leq u(1)$ for every $p \in [0,1)$.
This implies that the maximum of $v_{cont}$ is attained at 1, 
and the concavity of $v_{cont}$ implies that $v_{cont}$ is nondecreasing.

Assume then that $p^* < 1$.
Since $v_{cont}$ is concave, to prove that it is nondecreasing it is sufficient to verify that it is nondecreasing on $[p^*,1]$.
Let $p^* \leq p_1 < p_2$.
We will prove that $v_{cont}(p_1) \leq v_{cont}(p_2)$.
We distinguish between two cases: $v_{cont}(p_1) \leq u(p_2)$ and $v_{cont}(p_1) > u(p_2)$.

\bigskip
\noindent\textbf{Case 1: $v_{cont}(p_1) \leq u(p_2)$.}
Suppose that the initial belief is $p_2$, and consider a message strategy $\sigma$ that splits the belief of the receiver between $p_1$ and $p_2$ for all beliefs in $(p_1,p_2]$, and plays optimally once the belief is $p_1$.
As long as $p^t = p_2$, the instantaneous payoff is $u(p_2) \geq v_{cont}(p_1)$, 
and once the belief is $p^t = p_1$, the continuation payoff is $v_{cont}(p_1)$. 
Therefore, 
\[ v_{cont}(p_2) \geq w(p_2,\sigma) \geq v_{cont}(p_1). \]

\bigskip
\noindent\textbf{Case 2: $v_{cont}(p_1) > u(p_2)$.}

Let $\sigma$ be an optimal message strategy in $G_{cont}(u)$, 
so that
\begin{equation}
    \label{equ:83}
u(p_2) < v_{cont}(p_1) = w(p_1,\sigma). 
\end{equation}
Let $\tau$ be the first time when $p^\tau \geq p_2$.
Since $u$ is monotone, 
when the initial receiver's belief is $p_1$,
until time $\tau$, the instantaneous payoff is at most $u(p_2) < v_{cont}(p_1)$.
Since $w(p_1,\sigma)$ is a convex combination of the payoff until time $\tau$
and the payoff after time $\tau$, 
Eq.~\eqref{equ:83} implies that $\prob_{p_1,\sigma}(\tau < \infty) > 0$ and
\[  v_{cont}(p_1) = w(p_1,\sigma) <  \E_{p_1,\sigma}[v_{cont}(p^\tau) \mid \tau < \infty]. \]
Thus, there is a belief $p_3 \geq p_2$ such that $v_{cont}(p_3) > v_{cont}(p_1)$.
The concavity of $v_{cont}$ implies that $v_{cont}(p_2) > v_{cont}(p_1)$.

\bigskip

We turn to prove the second claim.
Assume that $u(p^*) < u(1)$,
and suppose, by way of contradiction, that $v_{cont}$ is not increasing.
Since $v_{cont}$ is concave and nondecreasing,
this implies that there is $p' \in (p^*,1)$ such that $v_{cont}$ is constant on $[p',1]$.
Since $u(p^*) < u(1)$, since $u(p) \leq u(1)$ for every $p \in [0,1]$, and since $u$ is continuous, 
we have $v_{cont}(p') < u(1)$. 
Let $\sigma$ be a sender's strategy that splits the belief of the receiver between 1 and $p'$ for all beliefs in $(p',1)$, and plays optimally once the belief reaches $p'$. 
Then
\[ u(1) \geq  v_{cont}(1) \geq w(1,\sigma) = Y^{split}_{1,p'}\cdot u(1) + (1-Y^{split}_{1,p'})\cdot v_{cont}(p'). \]
Since $Y^{split}_{1,p'}$ is positive,
this implies that $v_{cont}(1) > v_{cont}(p')$, a contradiction.
\end{proof}



\subsection{Strategies in Continuous Time --- Previous Results}\label{sec:str cont}

\cite{cardaliaguet_markov_2016} studied our game when the indirect payoff function $u$ is continuous,
characterized the value function,
proved that the 
sender has an optimal message strategy,
and characterized such a strategy. 
\cite{gensbittel_2020} further studied the game when $u$ is continuous.
In this section we present two results from these papers.

Recall that the \textit{hypograph} of a function
$f : [0,1] \to \mathbb{R}$ is the set of all points that lie on or below the graph of the function.
When $f$ is concave, its hypograph is a convex set, and its set of extreme points coincides with the set of points on the graph of $f$ where $f$ is not affine,
plus the corner points $(0,f(0))$ and $(1,f(1))$.

For simplicity of presentation, define 
\[ \mu:=\frac{r}{\lambda_0+\lambda_1}. \]
This is the ratio between the discount rate and the rate at which the state changes.

\begin{Theorem}[Theorem~2.12 in \cite{gensbittel_2020}, and Theorem~2.3 in \cite{ashkenazi-golan_solving_2020}]
\label{Char}
Provided the indirect payoff function $u$ is continuous,
the value function $v_{cont}$ in $G_{cont}(u)$ is the unique continuous, concave function $v : [0,1]\to\mathbb{R}$ that is differentiable on $[0,1]$, except, possibly,
at $p^*$, and satisfies the following conditions:
\begin{itemize}
\item[G.1] $v_{cont}(p^*)\geq u(p^*)$, with equality if $(p^*,v_{cont}(p^*))$ is an extreme point of the hypograph of $v_{cont}$.
\item[G.2] For every $p\in[0,1]\setminus\{ p^*\}$ we have $v'(p)(p-p^*) + \mu\cdot\left(v_{cont}(p)-u(p)\right)\geq 0$.
\item[G.3] For every extreme point $(p,v_{cont}(p))$ of the hypograph of $v_{cont}$ such that $p \neq p^*$ we have
\begin{equation}
\label{eqv}
v'(p)(p-p^*) + \mu\cdot\left(v_{cont}(p)-u(p)\right)= 0,
\end{equation}
\end{itemize}
where for $p=0$ (resp.~$p=1$), $v'(p)$ stands for the right (resp.~left) derivative of $v_{cont}$ at $p$.
\end{Theorem}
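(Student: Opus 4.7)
The theorem has two assertions: that $v_{cont}$ itself satisfies G.1--G.3, and that no other continuous concave function differentiable off $\{p^*\}$ does. My plan is to prove these two halves in turn and close with a verification/supermartingale argument for uniqueness.

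For the verification of G.1--G.3, concavity of $v_{cont}$ is the familiar Bayes-plausibility consequence that the sender can induce any distribution of posteriors with the current belief as mean, so $v_{cont}$ equals its own concave envelope. To obtain G.2, I would apply Bellman's principle to the feasible deviation ``reveal no information for an infinitesimal $\mathrm{d}t$, then act optimally.'' Combining this with the drift equation~\eqref{eq p der} and passing to the limit $\mathrm{d}t\to 0^+$ yields exactly the inequality G.2 wherever $v_{cont}$ is differentiable. Next, at an extreme point $(p,v_{cont}(p))$ with $p\neq p^*$, the function $v_{cont}$ is strictly concave in a neighborhood of $p$, so any nontrivial split of $p$ produces a strictly smaller expected continuation value; hence splitting is never part of an optimal action at such $p$, and the sender's best move is silence. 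Optimality of silence forces G.2 to hold with equality, giving G.3. Condition G.1 comes from the observation that silence at $p^*$ preserves the belief there, yielding instantaneous payoff $u(p^*)$; equality when $(p^*,v_{cont}(p^*))$ is extreme follows from the same local-strict-concavity argument.

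For the uniqueness half, suppose $v$ is concave, continuous, differentiable on $[0,1]\setminus\{p^*\}$, and satisfies G.1--G.3. I would show $v=v_{cont}$ by a two-sided argument. To prove $v\le v_{cont}$, I would construct an explicit Markov strategy $\sigma$ whose putative payoff is identically $v$: on every maximal open affine piece of the graph of $v$, the sender splits the receiver's belief between the two endpoints of that piece, and at every extreme point $p\neq p^*$ she reveals no information. On a silence interval, (\ref{eqv}) is a first-order linear ODE in $p$ and its unique solution taking the prescribed boundary values at the two endpoints is the restriction of $v$; the relevant closed-form computation is exactly the one performed in \eqref{equ:split1}--\eqref{equ:time:discounted} of Section~\ref{sec two use}. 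On an affine piece, Conclusion~\ref{conc: linear} shows that the splitting strategy also reproduces the affine restriction of $v$. Gluing the pieces gives a strategy with putative value identically $v$.

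To prove the reverse inequality $v\ge v_{cont}$, I would run a verification argument: for any sender's strategy $\sigma$ with induced belief process $(p^t)_{t\ge 0}$, define
\[
M_t := e^{-rt}\,v(p^t) + \int_0^t r e^{-rs}\,u(p^s)\,\mathrm{d}s.
\]
Between jumps the belief evolves deterministically via~\eqref{eq p der}, so the drift of $M_t$ equals $e^{-rt}\bigl[r(u(p^t)-v(p^t)) + v'(p^t)(\lambda_0+\lambda_1)(p^*-p^t)\bigr]$, which is nonpositive by G.2. At every jump (a split), the post-jump belief has conditional mean equal to the pre-jump belief, so concavity of $v$ gives a nonpositive expected increment as well. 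Hence $(M_t)$ is a supermartingale, and taking expectations and passing $t\to\infty$ yields $\E_\sigma\!\left[\int_0^\infty r e^{-rs}u(p^s)\,\mathrm{d}s\right]\le v(p^0)$; optimizing over $\sigma$ gives $v_{cont}(p^0)\le v(p^0)$.

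The delicate step is the supermartingale argument, which must simultaneously accommodate jumps generated by the sender's splits and the possible kink of $v$ at $p^*$. Jumps are handled cleanly by concavity of $v$ together with Bayes plausibility. The point $p^*$ is subtler: the drift vanishes there, so the belief can rest at $p^*$ for positive time, and a kink of $v$ at $p^*$ could in principle produce a local-time contribution; condition G.1 is precisely what is needed to control this contribution, and an analogous one-sided-derivative care at $p=0$ and $p=1$ completes the argument.
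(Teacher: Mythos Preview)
The paper does not prove this theorem at all: it is stated as a quotation from the literature (Theorem~2.12 in Gensbittel, 2020, and Theorem~2.3 in Ashkenazi-Golan et al., 2020) and is used as a black box throughout Section~\ref{sec:str cont} onward. There is therefore no ``paper's own proof'' to compare against; the authors simply import the characterization and apply it to the continuous approximations $\overline u_\delta$.

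That said, your sketch follows the standard route one finds in the cited sources: derive G.2 from the dynamic programming principle applied to the silence policy, upgrade to equality at extreme points because splitting is strictly suboptimal there by strict concavity, and prove uniqueness via a supermartingale/verification argument on $M_t=e^{-rt}v(p^t)+\int_0^t re^{-rs}u(p^s)\,\rmd s$. One point deserves more care than you give it: in the direction $v\le v_{cont}$, you need to exhibit a strategy whose payoff is \emph{exactly} $v$, not merely at least $v$. Your description (split on affine pieces, stay silent at extreme points) is the right idea, but the well-posedness of the resulting belief process---in particular, that the set of extreme points can be a complicated closed set and that the process can accumulate infinitely many splits near such a set---requires a genuine construction argument that is nontrivial and is precisely where the cited papers do real work. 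Your reference to Conclusion~\ref{conc: linear} only covers the very special case where the affine piece straddles $p^*$; for affine pieces entirely on one side of $p^*$ you need a different computation (the process leaves the interval in finite time rather than being recurrent). The local-time issue at $p^*$ you flag is real but minor here because the drift vanishes at $p^*$, so the belief under silence simply stays at $p^*$ and G.1 handles it directly.
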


Observe that points $(p,v_{cont}(p))$ that are not extreme points of the hypograph of $v_{cont}$ lie on a line segment connecting two extreme points of the hypograph of; that is, 
they
are convex combinations of these two extreme points,
denoted $(p',v_{cont}(p'))$ and $(p'',v_{cont}(p''))$. 
This implies that the value at such belief $p$ can be obtained by a split of the belief between $p'$ and $p''$.

We will use the above Theorem~\ref{Char} to obtain the optimal message strategy for beliefs outside the continuity interval $[p_0,p_1]$. 
For beliefs in the continuity interval $[p_0,p_1]$ we use the following characterization of the optimal message strategy. This result follows from \cite{cardaliaguet_markov_2016} and applies to both
$G_{cont}(u)$ and $G_{cont}(\overline u_\delta)$.

\begin{lemma}
\label{lemma:CRRV}
If $p^*$ is a discontinuity point of $u$ (so that $p^* = p_0$),
then the sender's optimal message strategy  at receiver's belief $p^*$ for both $u$ and $\overline{u}_\delta$, is to reveal no information. 

If $p^* \in (p_0,p_1)$, then for both $u$ and $\overline{u}_\delta$, for every $p \in [p_0,p_1]$,  the optimal message strategy at receiver's belief $p$ is to split the belief between $p_0$ and $p_1$. 
\end{lemma}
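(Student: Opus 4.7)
The plan is to derive the lemma from Theorem~\ref{Char} applied to the continuous approximations $\overline{u}_\delta$, and then transfer the conclusion to the discontinuous $u$ via the approximation scheme of Section~\ref{sec: u approx} together with Lemma~\ref{lemma:continuity}.

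For the case $p^* \in (p_0, p_1)$ with payoff $\overline{u}_\delta$, first I would observe that the strategy $\sigma^{split}_{p_0, p_1}$ is Bayes-plausible at every $p \in [p_0, p_1]$, and that by Conclusion~\ref{conc: linear} its putative value $w(p, \sigma^{split}_{p_0, p_1})$ is affine in $p$ on this interval. Optimality on $[p_0, p_1]$ will then follow once I show that $\overline{v}_\delta$ itself is affine on $[p_0, p_1]$, because concavity forces the extreme points of the hypograph to lie only at the endpoints $p_0$ and $p_1$, and the splitting strategy realizes the value at these endpoints, hence at any convex combination of them. To establish affineness I would use Theorem~\ref{Char}: any extreme point $p$ in the interior $(p_0, p_1) \setminus \{p^*\}$ must satisfy the ODE in G.3, namely $v'(p)(p - p^*) + \mu(\overline{v}_\delta(p) - \overline{u}_\delta(p)) = 0$. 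Since $\overline{v}_\delta \geq \overline{u}_\delta$ everywhere and $\overline{v}_\delta$ is nondecreasing by Lemma~\ref{lemma:monotonicity:v}, the sign analysis on either side of $p^*$, combined with the fact that $\overline{u}_\delta \equiv h_0$ on $[p_0, p_1 - \delta]$, produces a contradiction for $\delta$ small enough.

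For the case $p^* = p_0$ with payoff $\overline{u}_\delta$, revealing no information at the stationary belief $p^*$ holds the belief there forever (by Eq.~\eqref{equ:belief}), yielding payoff exactly $\overline{u}_\delta(p^*) = h_0$. Optimality then follows from condition G.1 of Theorem~\ref{Char}: Assumption~\ref{assumption:1} makes $p^* = p_0$ a kink of the piecewise-linear concave envelope of $\overline{u}_\delta$, so $(p^*, \overline{v}_\delta(p^*))$ is an extreme point of the hypograph of $\overline{v}_\delta$, and G.1 forces $\overline{v}_\delta(p^*) = \overline{u}_\delta(p^*)$, which is exactly what no revelation delivers.

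The transfer to the discontinuous $u$ would be immediate from the structure of the approximation: the strategy $\sigma^*$ of Section~\ref{sec optimal st gcont} coincides on $[p_0, p_1]$ and at $p^*$ with the described prescriptions, and by Lemma~\ref{lemma:continuity} it attains $\overline{v}_0 = v_{cont}$ in $G_{cont}(u)$. The hardest part will be the extreme-point ruling-out argument on the interior of $(p_0, p_1)$: one must carefully track the sign of $v'(p)(p - p^*)$ on each side of $p^*$, exploit the distinction between the bulk region $[p_0, p_1 - \delta]$ where $\overline{u}_\delta \equiv h_0$ and the thin sliver $(p_1 - \delta, p_1)$ where $\overline{u}_\delta$ ramps up to $h_1$, and combine these with the monotonicity of $\overline{v}_\delta$ in a way that survives letting $\delta$ be small but positive.
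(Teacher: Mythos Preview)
The paper's proof is far simpler than your proposal: it directly invokes Lemma~3 of \cite{cardaliaguet_markov_2016}, which states that whenever $(p^*,(\cav u)(p^*))$ lies on the chord of $\cav u$ joining $(p',u(p'))$ to $(p'',u(p''))$ with $p'\le p^*\le p''$, the value is affine on $[p',p'']$ and the optimal strategy there is to split between $p'$ and $p''$ (or reveal nothing if $p'=p''=p^*$). The paper observes that this particular lemma does \emph{not} depend on continuity of $u$, so it applies to both $u$ and $\overline u_\delta$ at once; no approximation or transfer step is needed.

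Your route has genuine gaps. First, citing Lemma~\ref{lemma:continuity} to pass from $\overline u_\delta$ to $u$ is circular in the paper's logical order: that lemma is proved via Lemma~\ref{lemma:structure}, whose proof in turn invokes the present Lemma~\ref{lemma:CRRV}. Second, in the case $p^*=p_0$, the inference ``$p_0$ is a kink of $\cav\overline u_\delta$, hence $(p^*,\overline v_\delta(p^*))$ is an extreme point of the hypograph of $\overline v_\delta$'' is a non-sequitur: kinks of the envelope of $\overline u_\delta$ say nothing about the shape of $\overline v_\delta$. The correct argument here is the Jensen bound $\E[\overline u_\delta(p^t)]\le(\cav\overline u_\delta)(p^*)=h_0$, which silence attains. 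Third, in the case $p^*\in(p_0,p_1)$, the sign analysis you sketch for condition~G.3 does not by itself yield a contradiction: at a putative interior extreme point $p<p^*$ one gets a nonpositive term plus a nonnegative term summing to zero, which is perfectly consistent. The missing ingredient is again the Jensen-type upper bound $\overline v_\delta(p^*)\le(\cav\overline u_\delta)(p^*)$, which combined with concavity of $\overline v_\delta$ and the fact that the splitting strategy already attains this bound is what forces affineness on $[p_0,p_1]$---and that is precisely the content of the Lemma~3 the paper cites.
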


\bigskip

\begin{proof}
The result follows from Lemma~3 in \cite{cardaliaguet_markov_2016},
which states that if $(p^*,(\cav  u)(p^*))$ lies on the line segment that connects $(p',  u(p'))$ and $(p'', u(p''))$,
for some $p',p'' \in [0,1]$ that satisfy 
$p' \leq p^* \leq p''$, 
then the value function is linear on $[p',p'']$, and the optimal message strategy at each belief $p \in [p',p'']$ is to split the belief between $p'$ and $p''$
(and to reveal no information if $p' = p'' = p^*$).

Since $u$ has a concave envelope, 
if $p^* = p_0$, then $ u(p) = (\cav  u)(p)$,
and then the result follows by setting $p' = p'' := p_0$.
If $p^* \in (p_0,p_1)$, then the result follows 
by setting $p' := p_0$ and $p'' := p_1$. The same reasoning holds for $\overline{u}_{\delta}$.

While \cite{cardaliaguet_markov_2016} analyze a model where $u$ is continuous, their Lemma~3 does not depend on the continuity of $u$.

\end{proof}

The intuition behind Lemma~\ref{lemma:CRRV} is as follows.
When the initial belief $p^0$ is $p^*$, 
for every message strategy the unconditional expectation $\E[p^t]$ is equal to $p^*$.
The expected instantaneous payoff is $\E[u(p^t)]$,
which, by Jensen's inequality, is smaller than $(\cav u)(\E[p^t]) = (\cav u)(p^*)$.

Consider now the message strategy $\sigma^*$ described in Theorem~\ref{theorem:main_result}.
If $p^* = p_0$, then $p^t = p_0$ for every $t \geq 0$ and $(\cav u)(p^*) = u(p^*)$.
It follows that the sender's payoff under $\sigma^*$ is $(\cav u)(p^*)$,
which is the best possible payoff.
If $p^* \in (p_0,p_1)$, then $p' = p_0$ and $p'' = p_1$, and
the posterior belief $p^t$ is either $p'$ or $p''$: when, say, $p^t = p''$, the Markov transition makes the belief slide toward $p^*$,
and then the sender splits the belief again between $p'$ and $p''$. Since the unconditional expectation of $p^t$ is $p^*$,
the unconditional probability $\alpha$ that the belief at period $n$ is $p'$ satisfies
$\alpha p_0 + (1-\alpha) p_1 = p^*$.
As a result,
this message strategy guarantees to the sender the payoff $\alpha u(p') + (1-\alpha) u(p'')$, which is equal to $(\cav u)(p^*)$.
Hence, 
in this case as well, $\sigma^*$ guarantees to the sender the highest possible payoff.

The above discussion provides the optimal message strategy for the continuity interval $[p_0,p_1]$. 
In the next 
subsections we handle
the other continuity intervals.

\subsection{The Functions $(\overline g_\delta)_{\delta > 0}$\label{sec g}}

Inspired by Theorem~\ref{Char},
for every $\delta > 0$ sufficiently small
define 
a function $\overline g_\delta : [0,1] \setminus \{p^*\} \to \dR$ by

\[ \overline g_\delta(p):=\mu\cdot\left(\frac{\overline u_\delta(p)-\overline v_\delta(p)}{p-p^*}\right), \ \ \ \forall p \in [0,1] \setminus \{p^*\}. \]
In this section we will study the function $\overline g_\delta$.
Since $\overline u_\delta$ is continuous and $\overline v_\delta$ is Lipshitz, 
$\overline g_\delta$ is continuous.

By Theorem~\ref{Char}, 
the function $\overline g_\delta$ is related to the derivative of $\overline v_\delta$.
Indeed, by (G.2), $\overline v'_\delta(p)\leq \overline g_\delta(p)$ for every $p<p^*$, 
and  $\overline v'_\delta(p)\geq \overline g_\delta(p)$ for every $p>p^*$. 
By (G.3),  
$\overline g_\delta(p)=\overline v'_\delta(p)$ for $p\neq p^*$ such that $(p,\overline v_\delta(p))$ is an extreme point of the hypograph of $v_{cont}$. 
Furthermore, for $p\neq p^*$ the function $\overline v_\delta(p)$ is linear over segments $[p',p'']$ where all $p\in (p',p'')$ are not extreme points of the hypograph of $\overline v_\delta(p)$. 
Hence, the derivative $\overline v'_\delta$ is constant over such interval $(p',p'')$ and 
satisfies
$\overline g_\delta(p')=\overline g_\delta(p'')$.
We deduce the following.

\begin{lemma}\label{lemma same der}
If $(p', \overline v_\delta (p'))$ and $(p'', \overline v_\delta (p''))$ are extreme points of the hypograph of $\overline v_\delta$, and none of the points $(p,\overline v_\delta(p))$,  for $p\in (p',p'')$, is an extreme point of the hypograph of $\overline v_\delta$, then $\overline g_\delta(p')=\overline g_\delta(p'')$.
Moreover, if $p''<p'\leq p^*$ or $p''>p'\geq p^*$, then 
\[\overline v_\delta(p'')=\frac{\mu\cdot(p'-p'')}{p^*-p''+\mu\cdot(p'-p'')}\cdot\overline u_\delta(p'')+\frac{p^*-p''}{p^*-p''+\mu\cdot(p'-p'')}\cdot\overline v_\delta(p').\]
\end{lemma}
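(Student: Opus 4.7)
The plan rests on two ingredients from Theorem~\ref{Char}: the fact that a concave function is affine between two consecutive extreme points of its hypograph, and the equality~\eqref{eqv} of condition~(G.3), which must hold at every extreme point of the hypograph other than $p^*$.

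First I would observe that $\overline v_\delta$ is affine on the closed interval with endpoints $p'$ and $p''$. Since $\overline v_\delta$ is concave and, by hypothesis, no point $(p,\overline v_\delta(p))$ with $p$ strictly between $p'$ and $p''$ is an extreme point of its hypograph, every such point lies on the chord joining $(p',\overline v_\delta(p'))$ to $(p'',\overline v_\delta(p''))$. Hence $\overline v_\delta$ is linear on that interval, is differentiable there (even at $p^*$ should $p^*$ happen to lie strictly between $p'$ and $p''$), and has a common derivative $s$ with $\overline v_\delta'(p')=\overline v_\delta'(p'')=s$. Next I apply~\eqref{eqv} at each of the two extreme points: since $p',p''\neq p^*$ (otherwise $\overline g_\delta$ is not defined there and the statement is vacuous), condition~(G.3) gives
\[ \overline v_\delta'(p) \;=\; \mu\cdot\frac{\overline u_\delta(p)-\overline v_\delta(p)}{p-p^*} \;=\; \overline g_\delta(p) \qquad\text{for } p\in\{p',p''\}. \]
Combining this with $\overline v_\delta'(p')=\overline v_\delta'(p'')$ proves the first assertion $\overline g_\delta(p')=\overline g_\delta(p'')$.

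For the ``moreover'' part, I assume without loss of generality that $p''<p'\leq p^*$; the case $p''>p'\geq p^*$ is symmetric. Linearity of $\overline v_\delta$ on $[p'',p']$ gives $\overline v_\delta(p')-\overline v_\delta(p'')=\overline v_\delta'(p'')\cdot(p'-p'')$, and substituting $\overline v_\delta'(p'')=\overline g_\delta(p'')=\mu\bigl(\overline u_\delta(p'')-\overline v_\delta(p'')\bigr)/(p''-p^*)$ yields
\[ \bigl(\overline v_\delta(p')-\overline v_\delta(p'')\bigr)(p^*-p'') \;=\; \mu\bigl(\overline u_\delta(p'')-\overline v_\delta(p'')\bigr)(p''-p'). \]
Collecting the $\overline v_\delta(p'')$ terms on one side and dividing by $(p^*-p'')+\mu(p'-p'')$ produces the stated convex-combination formula. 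The only delicate point is bookkeeping with signs, since $p^*-p''$ and $p'-p''$ carry different signs in the two hypothesized cases; one checks in each case that $(p^*-p'')+\mu(p'-p'')$ has the same sign as $p^*-p''\neq 0$, so no division by zero occurs and the two coefficients in front of $\overline u_\delta(p'')$ and $\overline v_\delta(p')$ are well defined and sum to one. I expect this sign-tracking to be the main (and only) obstacle; the substantive content is already contained in the linearity of $\overline v_\delta$ between extreme points and in the Cardaliaguet--Rainer ODE~\eqref{eqv}.
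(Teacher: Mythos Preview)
Your proof is correct and follows essentially the same approach as the paper's: affinity of $\overline v_\delta$ between consecutive extreme points gives a common slope, condition~(G.3) identifies that slope with $\overline g_\delta$ at each endpoint, and the ``moreover'' formula then falls out of the linear relation combined with~\eqref{eqv} at $p''$. The paper's write-up is terser (it simply records Eq.~(\ref{equ:11}) and invokes ``simple algebraic manipulations''), but the substance is identical, and your extra care with the sign of the denominator $(p^*-p'')+\mu(p'-p'')$ is a welcome addition.
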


\begin{proof}
The first claim holds since $\overline v_\delta$ is linear on $[p',p'']$.
From this we conclude that
\begin{equation}
    \frac{\overline v_\delta(p'')-\overline v_\delta(p')}{p''-p'}=\frac{\mu\cdot\left(\overline u_\delta(p'')-\overline v_\delta(p'')\right)}{p''-p^*}.
    \label{equ:11}
\end{equation}
The second claim follows 
from Eq.~\eqref{equ:11} and
by simple algebraic manipulations.
\end{proof}

The next result describes the graph of $\overline g_\delta(p)$ on the segments $[0,p_0]$ and $[p_1,1]$.
It's proof is not inspiring and is relegated to Section~\ref{section:proof:lemma g}.
Note that we do not%
\footnote{When $p^*\in(p_0,p_1)$,
simple computations yield that $\overline{g}_\delta(p) =\frac{h_1-h_0}{(p_1-p_0)(\mu+1)}\left(\frac{(p_0-p^*)(\mu+1)}{p-p^*}-\mu\right)$ for $p\in(p_0, p_1-\delta)$. 
Note that in this case $\overline{g}_\delta$ is increasing on both segments $(p_0,p^*)$ and $(p^*,p_1)$.} 
describe $\overline g_\delta$ on
$[p_0,p_1]$ if $p^*>p_0$.

\begin{lemma}\label{lemma g}
For every $\delta > 0$ sufficiently small, the function $\overline g_\delta$
satisfies the following properties, see Figure~\ref{fig:g arrow}:
\begin{enumerate}
    \item[(a)] $\overline g_\delta$ increases on $(p_{-j-1}, p_{-j}-\delta)$, for $j\in\left\{ 0 ,\ldots,m-1\right\}$.
    
    \item[(b)] $\overline g_\delta$ decreases on $(p_{-j}-\delta, p_{-j})$,  for
    $j\in\left\{1,\ldots,m-1\right\}$ (if $p^* = p_0$) or 
    $j\in\left\{0,1,\ldots,m-1\right\}$ (if $p^*>p_0$).
    \item[(c)] If $p^*=p_0$, then:
    \begin{enumerate}
        \item[(i)] $\overline g_\delta$ increases on $(p_0-\delta ,p_0)$, and
        \item[(ii)] $\overline g_\delta$ is smaller or equal%
        \footnote{We cannot determine whether it increases or decreases on this interval using simple observations like is done in this lemma. Later on we will be able to conclude that it is actually constant on this interval.}
        to $\overline{g}_\delta(p_1)$ on $(p_0, p_1-\delta)$.
    \end{enumerate}
    \item[(d)]
    For each $j\in\left\{1,\ldots  m'-1\right\}$ 
    there is $d_j \in (p_j,p_{j+1}-\delta)$ such that 
    $\overline g_\delta$ 
    is positive and decreasing on $[p_j, p_j +d_j)$, and if it is zero at $p_j+d_j$, then it remains nonpositive on $[p_j+d_j,p_{j+1}-\delta)$.
    \item[(e)] $\overline g_\delta$ increases on $(p_j-\delta,p_j)$, for $j\in\left\{1,\ldots  m'\right\}$.
\end{enumerate}

\end{lemma}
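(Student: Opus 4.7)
The plan is to differentiate $\overline g_\delta$ interval by interval and read off the sign of the derivative from the piecewise structure of $\overline u_\delta$ combined with the characterization of $\overline v_\delta$ from Theorem~\ref{Char}. Wherever $\overline v_\delta$ is differentiable, the quotient rule applied to $\overline g_\delta(p)=\mu(\overline u_\delta(p)-\overline v_\delta(p))/(p-p^*)$ gives
\[
\overline g_\delta'(p)\;=\;\frac{\mu\bigl(\overline u_\delta'(p)-\overline v_\delta'(p)\bigr)-\overline g_\delta(p)}{p-p^*}.
\]
Two facts will drive the sign analysis. First, by Lemma~\ref{lemma:monotonicity:v} applied to $\overline u_\delta$ (which is continuous and nondecreasing, with $\overline u_\delta(1)>\overline u_\delta(p^*)$ whenever $m'\geq 2$), $\overline v_\delta$ is strictly increasing on the interior, so $\overline v_\delta'(p)>0$ wherever the derivative exists; and $\overline v_\delta$ is concave. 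Second, condition G.2 of Theorem~\ref{Char} rearranges to $\overline v_\delta'(p)\leq \overline g_\delta(p)$ for $p<p^*$ and $\overline v_\delta'(p)\geq \overline g_\delta(p)$ for $p>p^*$.

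For the constant pieces of $\overline u_\delta$ (cases~(a) and~(d)) the term $\overline u_\delta'(p)$ vanishes and the formula collapses to $\overline g_\delta'(p)=-\bigl(\mu\overline v_\delta'(p)+\overline g_\delta(p)\bigr)/(p-p^*)$. In case~(a), where $p<p^*$, the rearranged G.2 gives $\overline g_\delta(p)\geq \overline v_\delta'(p)\geq 0$, so the numerator is at most $-(\mu+1)\overline v_\delta'(p)<0$ and $\overline g_\delta'(p)>0$. In case~(d), where $p>p^*$, G.2 instead gives $\overline v_\delta'(p)\geq \overline g_\delta(p)$, so the numerator is at most $-(\mu+1)\overline g_\delta(p)$; hence $\overline g_\delta'(p)<0$ as long as $\overline g_\delta(p)>0$, which produces the initial decrease on $[p_j,p_j+d_j)$. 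If $\overline g_\delta$ vanishes at some $p_j+d_j$, then $\overline v_\delta=h_j$ at that point, and the monotonicity of $\overline v_\delta$ forces $\overline v_\delta\geq h_j$, hence $\overline g_\delta\leq 0$, on the remainder of the interval.

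For the linear pieces (cases~(b) and~(e)) the slope $s_j=(h_{j+1}-h_j)/\delta$ becomes arbitrarily large as $\delta\to 0$, whereas $\overline v_\delta'$ is uniformly bounded by the Lipschitz constant of $\overline v_\delta$ (itself controlled by $\max|u|$ and $r$) and $\overline g_\delta$ is bounded on the relevant intervals, since these lie at a positive distance from $p^*$ that does not shrink with $\delta$. Therefore $\mu(s_j-\overline v_\delta'(p))-\overline g_\delta(p)>0$ for $\delta$ sufficiently small, and the sign of $\overline g_\delta'(p)$ matches that of $p-p^*$: negative for~(b) (decreasing) and positive for~(e) (increasing).

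The most delicate case, and what I expect to be the main obstacle, is case~(c) where $p^*=p_0$. Here Lemma~\ref{lemma:CRRV} pins down the boundary value $\overline v_\delta(p_0)=\overline u_\delta(p_0)=h_0$. For~(c)(i) on $(p_0-\delta,p_0)$, I will examine the numerator $f(p):=\mu(\overline u_\delta'(p)-\overline v_\delta'(p))(p-p^*)-\mu(\overline u_\delta(p)-\overline v_\delta(p))$ of $(p-p^*)^2\overline g_\delta'(p)$ directly: because $\overline u_\delta$ is linear on this interval, $f'(p)=-\mu\overline v_\delta''(p)(p-p^*)\leq 0$ by concavity, and $f(p_0^-)=0$ follows from $\overline v_\delta(p_0)=h_0$, so $f\geq 0$ on the interval and $\overline g_\delta'\geq 0$. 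For~(c)(ii) on $(p_0,p_1-\delta)$, monotonicity of $\overline v_\delta$ together with $\overline v_\delta(p_0)=h_0$ and $\overline u_\delta\equiv h_0$ immediately gives $\overline g_\delta(p)\leq 0$, and the comparison with $\overline g_\delta(p_1)$ will be obtained by extending the argument through the adjacent linear piece $[p_1-\delta,p_1]$ via Conclusion~\ref{conc: linear}. A secondary technical point, easily handled by taking the minimum of finitely many thresholds, is ensuring that the smallness requirements on $\delta$ from the linear-piece arguments can be combined into a single $\delta_0>0$.
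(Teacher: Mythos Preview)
Your overall strategy---differentiate $\overline g_\delta$ piece by piece and read off the sign---is exactly the paper's approach, and your treatment of parts~(a), (b), and~(e) is essentially correct and matches the paper's. There is, however, one genuine gap.

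\textbf{The missing ingredient is $\overline v_\delta(p_j)<h_j$ for $j\ge 1$.} Your argument for part~(d) shows only that $\overline g_\delta'(p)<0$ \emph{as long as} $\overline g_\delta(p)>0$; it never establishes that $\overline g_\delta(p_j)>0$ to begin with, i.e., that $\overline v_\delta(p_j)<\overline u_\delta(p_j)=h_j$. This is not automatic: one must compare with the auxiliary problem whose payoff is $\cav(\overline u_\delta)$, where no-revelation is optimal and the resulting value is strictly below $\overline u_\delta(p_j)$ because the belief drifts down from $p_j$. The paper isolates this as a separate preliminary result (Lemma~\ref{lemma v greater u}) and then invokes it both to start the decrease in part~(d) and to finish part~(c)(ii). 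Your sketch for~(c)(ii) correctly observes that $\overline g_\delta\le 0$ on $(p_0,p_1-\delta)$ when $p^*=p_0$, but then the reference to ``extending through $[p_1-\delta,p_1]$ via Conclusion~\ref{conc: linear}'' is too vague; the clean completion is precisely $\overline g_\delta(p_1)>0$, and that again is Lemma~\ref{lemma v greater u}.

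Two minor points. In~(c)(i) you appeal to $\overline v_\delta''$, but $\overline v_\delta$ is only known to be $C^1$ away from $p^*$; the same conclusion follows directly from the first-order concavity inequality $\overline v_\delta(p_0)\le \overline v_\delta(p)+\overline v_\delta'(p)(p_0-p)$, which is how the paper phrases it. In~(a), your bound ``numerator $\le -(\mu+1)\overline v_\delta'(p)<0$'' needs $\overline v_\delta'(p)>0$ strictly; the paper avoids this by noting directly that $\overline v_\delta(p)>h_{-j-1}$ (via the slide strategy) so that the numerator $\overline v_\delta-\overline u_\delta>0$ is nondecreasing while the denominator $p^*-p>0$ is strictly decreasing.
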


\begin{figure}[h]
\centering
\begin{tikzpicture}[domain=0:0.7,xscale=4,yscale=2.8]
\draw[ultra thick] (0,0)--(3.2,0);
\draw[ultra thick] (0,0)--(0,-0.1) node[below]{$p_{-m}$};
\draw[] (0.2,0)--(0.2,-0.3) node[below]{$p_{-m+1}-\delta$};
\draw[ultra thick] (0.4,0)--(0.4,-0.1) node[below]{$p_{-m+1}$};
\draw[] (0.6,0)--(0.6,-0.4) node[below]{$p_{-m+2}-\delta$};
\draw[ultra thick] (0.8,0)--(0.8,-0.1) node[below]{$p_{-m+2}$};
\draw[] (1.2,0)--(1.2,-0.3) node[below]{$p_{-1}-\delta$};
\draw[ultra thick] (1.4,0)--(1.4,-0.1) node[below]{$p_{-1}$};
\draw[] (1.6,0)--(1.6,-0.3) node[below]{$p_{0}-\delta$};
\draw[ultra thick] (1.8,0)--(1.8,-0.1) node[below]{$p_{0}$};
\draw[ultra thick] (2,0)--(2,-0.1) node[below]{$p_{1}$};
\draw[] (2.1,0)--(2.1,-0.5) node[below]{$p_{2}+d_1$};
\draw[] (2.3,0)--(2.3,-0.3) node[below]{$p_{2}-\delta$};
\draw[ultra thick] (2.4,0)--(2.4,-0.1) node[below]{$p_{2}$};
\draw[] (2.5,0)--(2.5,-0.5) node[below]{$p_{2}+d_2$};
\draw[] (2.7,0)--(2.7,-0.3) node[below]{$p_{3}-\delta$};
\draw[ultra thick] (2.8,0)--(2.8,-0.1) node[below]{$p_{3}$};
\draw[ultra thick] (3.2,0)--(3.2,-0.1) node[below]{$p_{m'}$};

\draw [-to] (0.05,0.05)--(0.15,0.2); 
\draw [-to] (0.25,0.2)--(0.35,0.05); 
\draw [-to] (0.45,0.05)--(0.55,0.2); 
\draw [-to] (0.65,0.2)--(0.75,0.05); 

\draw [-to] (1.25,0.2)--(1.35,0.05); 
\draw [-to] (1.45,0.05)--(1.55,0.2); 
\draw [-to] (1.65,0.05)--(1.75,0.2); 

\draw [-to] (2.05,0.2)--(2.15,0.05); 
\draw [-to] (2.25,0.05)--(2.35,0.2); 
\draw [-to] (2.45,0.2)--(2.55,0.05); 
\draw [-to] (2.65,0.05)--(2.75,0.2); 
\draw [fill] (.9,.1) circle [radius=0.01];
\draw [fill] (1,.1) circle [radius=0.01];
\draw [fill] (1.1,.1) circle [radius=0.01];
\draw [fill] (2.9,.1) circle [radius=0.01];
\draw [fill] (3,.1) circle [radius=0.01];
\draw [fill] (3.1,.1) circle [radius=0.01];
\node at (1.9,.13) {?};
\end{tikzpicture}
\caption{The regions where $\overline g_\delta$ increases and decreases.} \label{fig:g arrow}
\end{figure}
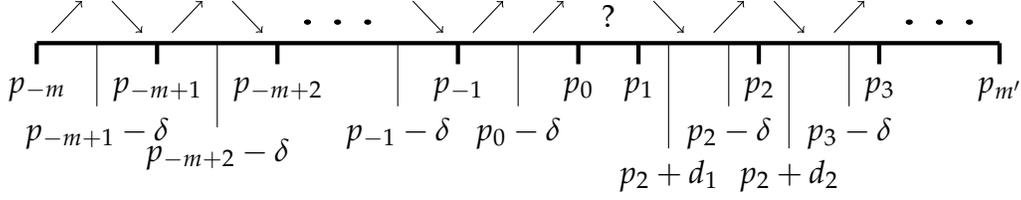


To complete the description of the function $\overline g_\delta$ we compare the values that $\overline g_\delta$ attains at the discontinuity points of $u$.   

\begin{lemma}\label{lemma g at endpoints}
For every $\delta > 0$ sufficiently small,
the function $\overline g_\delta$ satisfies the following properties:
\begin{enumerate}
    \item[(a)]  $\overline g_\delta(p_{-j})>\overline g_\delta(p_{-j+1})$ for $j\in\{2,\ldots, m\}$, and if $p^*>p_0$, then $\overline g_\delta(p_{-1})>\overline g_\delta(p_0)$.
    \item[(b)] $\overline g_\delta(p_{j})>\overline g_\delta(p_{j+1})$ for $j\in\{1,\ldots,m'-1\}$.
    \item[(c)] If $p^*=p_0$, then
    \begin{enumerate}
        \item[(i)] $\overline g_\delta(p_{-1})<\lim_{\eta\to 0}\overline g_\delta(p_0-\eta)$, and
        \item[(ii)] $\overline g_\delta(p_1)\geq\lim_{\eta\to 0}\overline g_\delta(p_0+\eta)$.
    \end{enumerate}
\end{enumerate}
\end{lemma}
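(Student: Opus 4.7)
The plan is to combine Theorem \ref{Char} with the monotonicity information in Lemma \ref{lemma g}, the gluing relation of Lemma \ref{lemma same der}, and the concavity of $\overline v_\delta$. The common mechanism behind parts (a) and (b) is the following \emph{rigidity principle}: at an extreme point $p \neq p^*$ of the hypograph of $\overline v_\delta$, condition (G.3) gives $\overline g_\delta(p) = \overline v'_\delta(p)$, while on a maximal linear segment $[a,b]$ of $\overline v_\delta$, Lemma \ref{lemma same der} forces $\overline g_\delta(a) = \overline g_\delta(b)$ to equal the constant slope of $\overline v_\delta$ there, and (G.2) implies $\overline g_\delta \leq \overline v'_\delta$ throughout $[a,b]$. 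Consequently, on any subinterval where Lemma \ref{lemma g} makes $\overline g_\delta$ strictly monotone, that subinterval cannot lie entirely inside a single linear segment of $\overline v_\delta$, hence must contain a kink.

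Using this principle, I would first argue that each discontinuity point $p_j$ with $j \neq 0$ in the relevant range is itself an extreme point of the hypograph of $\overline v_\delta$: by Lemma \ref{lemma g}, the direction of monotonicity of $\overline g_\delta$ changes at $p_j$, which precludes $p_j$ from lying in the interior of a linear segment of $\overline v_\delta$. Applying the rigidity principle to the interval $(p_{j+1}-\delta, p_{j+1})$, where $\overline g_\delta$ is strictly increasing by Lemma \ref{lemma g}(e), I then locate at least one additional kink strictly between $p_j$ and $p_{j+1}$; across this additional kink, $\overline v'_\delta$ drops strictly. Combined with (G.3) and concavity of $\overline v_\delta$, this yields the strict inequality $\overline g_\delta(p_j) > \overline g_\delta(p_{j+1})$ of part (b). Part (a) follows by the symmetric argument on the left of $p^*$ via Lemma \ref{lemma g}(a)--(b).

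For part (c), where $p^* = p_0$, Lemma \ref{lemma:CRRV} gives $\overline v_\delta(p_0) = h_0 = \overline u_\delta(p_0)$, so both the numerator and the denominator of $\overline g_\delta$ vanish at $p_0$, and L'Hopital's rule yields
\begin{equation*}
\lim_{\eta\to 0^+}\overline g_\delta(p_0-\eta) \;=\; \mu\!\left(\tfrac{h_0-h_{-1}}{\delta} - \overline v'_\delta(p_0^-)\right),\qquad
\lim_{\eta\to 0^+}\overline g_\delta(p_0+\eta) \;=\; -\mu\,\overline v'_\delta(p_0^+).
\end{equation*}
For (i), the slope $(h_0-h_{-1})/\delta$ blows up as $\delta\to 0$, while $\overline v'_\delta(p_0^-)$ is bounded by the chord slope $(h_0-\overline v_\delta(0))/p_0$ and $\overline g_\delta(p_{-1})$ is bounded (via a slide-strategy bound on $\overline v_\delta(p_{-1})$ from Section \ref{sec two use}); hence (i) holds for small $\delta$. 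For (ii), using the concavity inequality $\overline v_\delta(p_1) \leq h_0 + (p_1-p_0)\overline v'_\delta(p_0^+)$, the claim $\overline g_\delta(p_1) \geq -\mu\,\overline v'_\delta(p_0^+)$ reduces to $h_1 \geq h_0$, which holds by Assumption \ref{assumption:1}.

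The main obstacle is translating the monotonicity information about $\overline g_\delta$ from Lemma \ref{lemma g} into the kink structure of $\overline v_\delta$, since extreme points of the hypograph are defined through $\overline v_\delta$ itself. Rigorously identifying kinks requires careful contradiction arguments coordinating (G.2), (G.3), and Lemma \ref{lemma same der}, especially because candidate kinks may cluster near the points $p_j-\delta$ and coalesce with $p_j$ as $\delta\to 0$.
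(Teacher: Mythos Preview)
Your approach to parts (a) and (b) is genuinely different from the paper's, but it has a real gap. The paper does \emph{not} argue via the kink structure of $\overline v_\delta$; rather, it proves each inequality by a direct comparison with an auxiliary game. For (a), for instance, it rewrites $\overline g_\delta(p_{-j})>\overline g_\delta(p_{-j+1})$ as an upper bound on $\overline v_\delta(p_{-j+1})$, then constructs a modified payoff $\widetilde u_\delta$ (raising $h_1$ to the value $\widetilde h_1$ on the line through $(p_{-j},h_{-j})$ and $(p_{-j+1},h_{-j+1})$). In $G_{cont}(\widetilde u_\delta)$ the optimal strategy on $[p_{-j},p_1]$ is known by Lemma~\ref{lemma:CRRV}, so $\widetilde v_\delta(p_{-j+1})$ can be computed exactly via Conclusion~\ref{conc: linear}, and it turns out to equal the required bound. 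Monotonicity $\overline v_\delta\le\widetilde v_\delta$ (with strictness because $\widetilde h_1>h_1$) then gives the result. Part (b) is handled symmetrically by lowering $h_0$ to a point on the line through $(p_j,h_j)$ and $(p_{j+1},h_{j+1})$.

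The gap in your argument is that your ``rigidity principle'' does not deliver strict inequality, and does not even show that the $p_j$ are extreme points. Concretely: nothing in Lemma~\ref{lemma g} rules out $\overline v_\delta$ being linear on the entire interval $[p_j,p_{j+1}]$ with slope $c$. In that scenario $\overline g_\delta(p_j)=\overline g_\delta(p_{j+1})=c$ and $\overline g_\delta\le c$ on $[p_j,p_{j+1}]$, which is perfectly compatible with $\overline g_\delta$ decreasing near $p_j$ (Lemma~\ref{lemma g}(d)) and increasing near $p_{j+1}$ (Lemma~\ref{lemma g}(e)). Your claim that a strictly monotone subinterval of $\overline g_\delta$ cannot lie inside a linear segment of $\overline v_\delta$ is simply false: the constraint $\overline g_\delta(a)=\overline g_\delta(b)$ applies only at the \emph{endpoints} $a,b$ of the maximal segment, not at arbitrary interior points. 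In fact the paper uses Lemma~\ref{lemma g at endpoints} itself (together with Lemma~\ref{lemma g}) to \emph{derive} the kink structure in Section~\ref{sec der}; attempting to go the other way, without a quantitative input like the auxiliary-game bound, is circular.

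For part (c), your L'H\^opital computation is correct and your reduction of (ii) to $h_1\ge h_0$ is valid, but the paper's route is much shorter: (c)(i) is immediate because Lemma~\ref{lemma g}(a) and Lemma~\ref{lemma g}(c)(i) together say $\overline g_\delta$ is increasing on all of $(p_{-1},p_0)$, and (c)(ii) is just the limiting case of Lemma~\ref{lemma g}(c)(ii).
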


The proof of Lemma~\ref{lemma g at endpoints} is relegated to Section~\ref{sec proof g endpoints}.
Figures~\ref{fig:g left} and~\ref{fig:g right} summarize Lemmas~\ref{lemma g} and~\ref{lemma g at endpoints}.
In these figures, 
the graph of the function $\overline g_\delta$ is the dashed line.
The continuity of $\overline g_\delta$ on $[0,1] \setminus \{p^*\}$
ensures 
that for every $j\in\{1,\ldots,m-1\}$ there exists $\overline{q}_{-j}(\delta)\in(p_{-j}-\delta,p_{-j})$ such that $\overline g_\delta(\overline{q}_{-j}(\delta))=\overline g_\delta(p_{-j})$, and if $p^*>p_0$,
then this conclusion
holds for $j=0$ as well, see Figure~\ref{fig:g left}. 
Similarly, for every $j\in\{1,\ldots,m'-1\}$  there exists $\overline{q}_j(\delta)\in(p_j, p_{j+1}-\delta)$ such that $\overline g_\delta(\overline{q}_j(\delta))=\overline g_\delta(p_{j+1})$, see Figure~\ref{fig:g right}.
Note that the function $\overline g_\delta$ is \textit{not} piecewise constant.

\begin{figure}[h]
\centering
\begin{tikzpicture}[domain=0:0.7,xscale=17,yscale=2.8]
\draw[<->] (0,2)-- (0,0) -- (.5,0) node[below] {$p$};
\draw[dotted, blue] (0.15,1.2)--(0.15,-0.3) node[below]{$\overline{q}_{-j}(\delta)$};
\draw[dashed] (0.02,1.2)--(0.13,1.8)--(0.16,1)--(0.26,1.4)--(0.295,0.6)--(0.4,1) node[above]{$\overline g_\delta$} ;
\draw[dotted, blue] (0.28,0.95)--(0.28,-0.3) node[below]{$\overline{q}_{-j+1}(\delta)$};
\draw[dotted, blue] (0.41,0.57)--(0.41,-0.3) node[below]{$\overline{q}_{-j+2}(\delta)$};
\draw[dashed] (0.4,1)--(0.415,0.35);
\draw[] (0.02,0.02)--(0.02,-0.02) node[below]{$p_{-j-1}$};
\draw[] (0.12,0.02)--(0.12,-0.02);
\draw[] (0.16,0.02)--(0.16,-0.02) node[below]{$p_{-j}$};
\draw[] (0.257,0.02)--(0.257,-0.02);
\draw[] (0.295,0.02)--(0.295,-0.02) node[below]{$p_{-j+1}$};
\draw[] (0.38,0.02)--(0.38,-0.02);

\draw[] (0.415,0.02)--(0.415,-0.02) node[below]{$p_{-j+2}$};
\draw[thick, red] (0.02,1.19)--(0.150,1.19); 
\draw[thick, red] (0.150,1.19)--(0.16,0.97); 
\draw[thick, red] (0.16,0.97)--(0.28,0.97); 
\draw[thick, red] (0.28,0.97)--(0.295,0.58); 
\draw[thick, red] (0.295,0.58)--(0.41,0.58); 
\draw[thick, red] (0.41,0.58)--(0.415,0.37); 
\draw[thick, red] (0.2,0.94) node[anchor=north]{$\overline v'_\delta$}; 
\draw[dotted,-] (0.02,1.19)--(0.02,0);
\draw[dotted,-] (0.16,0.94)--(0.16,0);
\draw[dotted,-] (0.295,0.56)--(0.295,0);
\draw[dotted,-] (0.415,0.37)--(0.415,0);

\draw [pen colour={blue},
    decorate, 
    decoration = {calligraphic brace,
        raise=2pt,
        amplitude=5pt}] (0.12,0) --  (0.16,0)
    node[pos=0.5,above=7pt,blue]{$\delta$};
\draw [pen colour={blue},
    decorate, 
    decoration = {calligraphic brace,
        raise=2pt,
        amplitude=5pt}] (0.255,0) --  (0.295,0)
    node[pos=0.5,above=7pt,blue]{$\delta$};
\draw [pen colour={blue},
    decorate, 
    decoration = {calligraphic brace,
        raise=2pt,
        amplitude=5pt}] (0.38,0) --  (0.415,0)
    node[pos=0.5,above=7pt,blue]{$\delta$};
\end{tikzpicture}
\caption{The functions $\overline g_\delta$ (dashed) and $\overline v'_\delta$ (red) for $p< p_0$.} \label{fig:g left}
\end{figure}


\begin{figure}[h]
\centering
\begin{tikzpicture}[domain=0:0.7,xscale=17,yscale=2.8]
\draw[<->] (0,2)-- (0,0) -- (.5,0) node[below] {$p$};
\draw[dashed] (0.02,1.75)--(0.13,1.2)--(0.15,1.5)--(0.26,0.95) node[below]{$\overline g_\delta$};
\draw[dotted, blue] (0.07,1.5)--(0.145,1.5);
\draw[dotted, blue] (0.07,1.5)--(0.07,-0.3) node[below]{$\overline{q}_{j-1}(\delta)$};
\draw[dashed] (0.26,0.95)--(0.285,1.35)--(0.4,0.8)--(0.41,1.1);
\draw[dotted, blue] (0.18,1.35)--(0.285,1.35);
\draw[dotted, blue] (0.18,1.35)--(0.18,-0.3) node[below]{$\overline{q}_{j}(\delta)$};
\draw[dotted, blue] (0.33,1.1)--(0.415,1.1);
\draw[dotted, blue] (0.33,1.1)--(0.33,-0.3) node[below]{$\overline{q}_{j+1}(\delta)$};
\draw[thick, red] (0.021,1.75)--(0.067,1.5);
\draw[thick, red] (0.067,1.5)--(0.15,1.5);
\draw[thick, red] (0.15,1.5)--(0.18,1.35);
\draw[thick, red] (0.18,1.35)--(0.287,1.35);
\draw[thick, red] (0.287,1.35)--(0.335,1.1);
\draw[thick, red] (0.335,1.1)--(0.41,1.1);
\draw[thick, red] (0.25,1.35) node[anchor=south]{$\overline v'_\delta$}; 

\draw[] (0.02,0.02)--(0.02,-0.02) node[below]{$p_{j-1}$};
\draw[] (0.12,0.02)--(0.12,-0.02);
\draw[] (0.15,0.02)--(0.15,-0.02) node[below]{$p_{j}$};
\draw[] (0.25,0.02)--(0.25,-0.02);
\draw[] (0.28,0.02)--(0.28,-0.02) node[below]{$p_{j+1}$};
\draw[] (0.41,0.02)--(0.41,-0.02) node[below]{$p_{j+2}$};

\draw[dotted,-] (0.02,1.75)--(0.02,0);
\draw[dotted,-] (0.15,1.5)--(0.15,0);
\draw[dotted,-] (0.28,1.35)--(0.28,0);
\draw[dotted,-] (0.41,1.1)--(0.41,0);

\draw [pen colour={blue},
    decorate, 
    decoration = {calligraphic brace,
        raise=2pt,
        amplitude=5pt}] (0.12,0) --  (0.15,0)
    node[pos=0.5,above=7pt,blue]{$\delta$};
\draw [pen colour={blue},
    decorate, 
    decoration = {calligraphic brace,
        raise=2pt,
        amplitude=5pt}] (0.25,0) --  (0.28,0)
    node[pos=0.5,above=7pt,blue]{$\delta$};
\draw [pen colour={blue},
    decorate, 
    decoration = {calligraphic brace,
        raise=2pt,
        amplitude=5pt}] (0.38,0) --  (0.41,0)
    node[pos=0.5,above=7pt,blue]{$\delta$};
\end{tikzpicture}
\caption{The functions $\overline g_\delta$ (dashed) and $\overline v'_\delta$ (red) for $p>p_1$.} \label{fig:g right}
\end{figure}
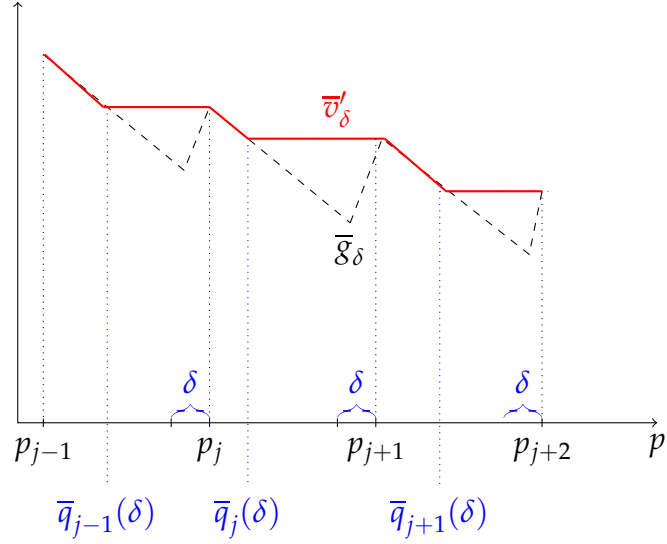

\subsection{The Derivative of the Value Function $\overline v_\delta$}\label{sec der}

Theorem~\ref{Char} and the results so far allow us to
describe the structure of $\overline v_\delta$, and specifically, we focus on its derivative.
The value function $\overline v_\delta$ is concave, 
and by Lemma~\ref{lemma:monotonicity:v} it is nondecreasding.
Hence, $\overline v'_\delta$ is nonnegative and nonincreasing.
By Theorem~\ref{Char}(G.2),
$\overline v'_\delta \leq \overline g_\delta$ on $[0,p^*)$,
and 
$\overline v'_\delta \geq \overline g_\delta$ on $(p^*,1]$.
In intervals where $\overline v'_\delta$ is constant, $\overline v_\delta$ is linear, and the two endpoints of such intervals are extreme points of the hypograph of $\overline v_\delta$. If there exists no $\varepsilon>0$ such that $\overline{v}'_\delta$ is constant on $(p-\varepsilon,p+\varepsilon)$,
then $(p,\overline v_\delta(p))$ is an extreme point of the hypograph of $\overline v_\delta$,
hence $\overline v'_\delta(p) = \overline g_\delta(p)$
(Theorem~\ref{Char}(G.3)).

The unique function that satisfies these properties is the function that is displayed in red
in Figures~\ref{fig:g left} and~\ref{fig:g right}:
\begin{itemize}
\item
Since $(0,\overline v_\delta(0))$ is an extreme point of the hypograph of $\overline v_\delta$,
we have $\overline v'_\delta(0) = \overline g_\delta(0)$.
\item
On the interval $[0,\overline q_{-m+1}(\delta)]$ the function $\overline g_\delta$ is at least $\overline v'_\delta(0)$,
hence $\overline v'_\delta$ must be constant on this interval.
\item
On the interval $[\overline q_{-m+1}(\delta),p_{-m+1}]$,
the only function that is 
(a) at most $\overline g_\delta$
and 
(b)
coincides with it when it is not constant, is $\overline g_\delta$.
Hence, $\overline v'_\delta = \overline g_\delta$ on this interval, and so on.
\end{itemize}
Thus,
for every $j\in\left\{2,\ldots, m-1 \right\}$,
and for $j=1$ in case $p^* > p_0$,
\begin{equation}\label{eq der left}
\overline v'_\delta(p)=
\left\{
\begin{array}{lll}
\overline g_\delta(p_{-j}), & \ \ \ \ \ & p\in[p_{-j}, \overline{q}_{-j+1}(\delta)),\\
\overline g_\delta(p), & & p\in[\overline{q}_{-j+1}(\delta),p_{-j+1}].
\end{array}\right.
\end{equation}

For $j=1$ and $p^*=p_0$, we have  $\overline g_\delta(p_{-1})=\overline v'_\delta(p)$ for all $p \in [p_{-1},p_0]$, as discussed in the following remark.

\begin{remark}\label{remark p*} 
For the case where $p_0=p^*$, we need a further observation to describe the value function. 
The function $\overline g_\delta(p)$ is not defined at $p_0=p^*$. The conditions of Theorem~\ref{Char} should hold nonetheless. 
By Lemma~\ref{lemma g}(a) and Lemma~\ref{lemma g}(c.i),
for $p\in(p_{-1},p_0)$ we have $\overline{g}_\delta(p)>\overline{g}_\delta(p_{-1})$. 
Since $\overline v_\delta(p)\leq \overline g_\delta(p)$ for $p\in [p_{-1}, p_0)$, 
and since the derivative of $v_{cont}$ is nonincreasing,
there is no $p \in (p_{-1},p_0)$ 
such that
$\overline{v}'_\delta(p)=\overline g_\delta(p)$. 
We conclude that $(p_{-1},\overline{v}'_\delta(p_{-1}))$ and $(p_0,\overline{v}'_\delta(p_0))$ are extreme points of the hypograph of $\overline v_\delta$. Similar arguments using Lemma~\ref{lemma g}(c.ii) and Lemma~\ref{lemma g}(e) lead to the conclusion that $(p_1,v_{cont}(p_1))$ is an extreme point of that hypograph as well.\end{remark}

Similarly, \begin{itemize}
\item
Since $(p_1,\overline v_\delta(p_1))$ is an extreme point of the hypograph of $\overline v_\delta$,
we have $\overline v'_\delta(p_1) = \overline g_\delta(p_1)$.
\item
On the interval $[p_1,\overline q_{1}(\delta)]$, the only function that is at least $\overline g_\delta$
and coincides with it when it is not constant is $\overline g_\delta$. Therefore, $\overline v'_\delta(p)=\overline g_\delta(p)$ on this interval.
\item
On the interval $[\overline q_{1}(\delta),p_{2}]$,
the function $\overline g'_\delta$ is at most $\overline v'_\delta(\overline q_1(\delta))$ (which is equal to $\overline g_\delta(p_0)$), hence $\overline v_\delta(p)$ is constant on this interval, and so on.
\end{itemize}
Thus,
for every $j\in\left\{1,\ldots, m'-1 \right\}$,
\begin{equation}\label{eq der right}
\overline v'_\delta(p)=
\left\{
\begin{array}{lll}
\overline g_\delta(p), & \ \ \ \ \ & p\in[p_{j}, \overline{q}_{j}(\delta)),\\
\overline g_\delta(p_{j+1}), & & p\in[\overline{q}_{j}(\delta),p_{j+1}].
\end{array}\right.
\end{equation}


\subsection{Strategies and the Derivative of the Putative Value they Generate}\label{sec str and der}

Once the derivative of the value function is characterized by Eqs.~\eqref{eq der left} and~\eqref{eq der right}, 
we study the derivative of the putative value generated 
by the strategies $\sigma^{split}_{p',p''}$ and $\sigma^{slide}_{p',p''}$.

Let $\overline{w}_\delta(p,\sigma)$ denote the putative value obtained under sender's message strategy $\sigma$ at belief $p$ in the game $G_{cont}(\overline{u}_\delta)$.

The next result characterizes the derivative of the putative value of $\sigma^{slide}_{p',p''}$ and $\sigma^{split}_{p',p''}$.

\begin{lemma}\label{lemma deriv}
Let $p',p'' \in [0,1]$ be such that either $p' < p'' < p^*$ or $p^* < p'' < p'$.
For every $p$ that lies strictly between $p'$ and $p''$,
\begin{equation}
\label{equ:41}
{\overline{w}_\delta}'(p,\sigma^{slide}_{p',p''})=\mu\cdot\frac{\overline{u}_\delta(p)-{\overline{w}_\delta}(p,\sigma^{slide}_{p',p''})}{p-p^*}
\end{equation} 
and
\begin{equation}
\label{equ:42}
{\overline{w}_\delta}'(p,\sigma^{split}_{p',p''})=\mu\cdot \frac{\overline{u}_\delta(p')-\overline{w}_\delta(p',\sigma^{split}_{p',p''})}{p'-p^*}.
\end{equation} 
If $p=p'$ or $p=p''$, then the directional derivative at $p$
(the left-derivative if $p = \max\{p',p''\}$ or the right-derivative if $p=\min\{p',p''\}$)
is equal to the quantity given above.
\end{lemma}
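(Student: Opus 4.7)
The plan is to handle the two strategies separately, since the mechanisms producing their putative values are quite different. For $\sigma^{slide}_{p',p''}$ the belief evolves deterministically, so the putative value satisfies an ODE that comes from the dynamic programming principle. For $\sigma^{split}_{p',p''}$ the putative value is affine on the interval by Bayes plausibility, and the stated formula will fall out of the explicit expression for $Y^{split}_{p',p''}$ in Eq.~\eqref{equ:time} together with Eq.~\eqref{equ:split1}.

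For the slide strategy, write $W(p) := \overline{w}_\delta(p,\sigma^{slide}_{p',p''})$. Starting from any $p$ strictly between $p'$ and $p''$, no information is revealed while $p^t$ remains in the interval, and the flow is $p^t = p^* + (p-p^*)e^{-(\lambda_0+\lambda_1)t}$. For all sufficiently small $h>0$ the dynamic programming principle gives
\[
W(p) \;=\; \int_0^h e^{-rt}\,\overline{u}_\delta(p^t)\,dt + e^{-rh}\,W(p^h).
\]
Since $\overline{u}_\delta$ is continuous and $p \mapsto p^h$ is smooth, $W$ is differentiable on the open interval --- for instance, the change of variable $q = p^t$ gives a closed-form integral representation of $W$ from which differentiability is immediate. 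Differentiating the identity above in $h$ at $h=0$ and invoking Eq.~\eqref{eq p der} yields
\[
0 \;=\; \overline{u}_\delta(p) - r W(p) + (\lambda_0+\lambda_1)(p^*-p)\,W'(p),
\]
which, upon rearrangement and using $\mu = r/(\lambda_0+\lambda_1)$, is exactly Eq.~\eqref{equ:41}.

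For the split strategy, observe that at any $p\in(p',p'')$ the sender immediately splits the belief into $\{p',p''\}$; by Bayes plausibility the weight on $p'$ is $(p''-p)/(p''-p')$, so $\overline{w}_\delta(\cdot,\sigma^{split}_{p',p''})$ is affine in $p$ on $(p',p'')$ with slope equal to the constant $(\overline{w}_\delta(p'',\sigma^{split}_{p',p''})-\overline{w}_\delta(p',\sigma^{split}_{p',p''}))/(p''-p')$. Eq.~\eqref{equ:split1} rewrites this numerator as $Y^{split}_{p',p''}\,(\overline{w}_\delta(p'',\sigma^{split}_{p',p''})-\overline{u}_\delta(p'))$, while the explicit form of $Y^{split}_{p',p''}$ in Eq.~\eqref{equ:time} yields the algebraic identity
\[
\frac{Y^{split}_{p',p''}}{p''-p'} \;=\; \frac{\mu\,(1-Y^{split}_{p',p''})}{p^*-p'}.
\]
Substituting and using Eq.~\eqref{equ:split1} once more to rewrite $\overline{w}_\delta(p'',\sigma^{split}_{p',p''})-\overline{u}_\delta(p')$ in terms of $\overline{u}_\delta(p')-\overline{w}_\delta(p',\sigma^{split}_{p',p''})$ produces Eq.~\eqref{equ:42}. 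The boundary cases at $p\in\{p',p''\}$ are handled by passing to one-sided limits: the right-hand side of Eq.~\eqref{equ:41} is continuous in $p$ and the right-hand side of Eq.~\eqref{equ:42} does not depend on $p$, so the stated one-sided derivatives exist and equal the listed values. The only non-mechanical step is the differentiability claim for $W$ in the slide case, which is routine given the smoothness of the deterministic flow and the continuity of $\overline{u}_\delta$; everything else is algebraic bookkeeping.
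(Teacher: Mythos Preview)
Your approach is exactly what the paper sketches: for $\sigma^{slide}$ differentiate the dynamic-programming identity, and for $\sigma^{split}$ combine the affine structure with Eqs.~\eqref{equ:time} and~\eqref{equ:split1}. One small slip in the slide case: the integrand in your recursion must carry the factor $r$, i.e.\ $W(p)=\int_0^h r\,e^{-rt}\overline{u}_\delta(p^t)\,dt+e^{-rh}W(p^h)$ (this is the normalization under which Eq.~\eqref{equ:split1} is valid, notwithstanding the missing $r$ in the paper's own display~\eqref{eq objective cont}); as you wrote it, the differentiated identity $0=\overline{u}_\delta(p)-rW(p)+(\lambda_0+\lambda_1)(p^*-p)W'(p)$ does \emph{not} rearrange to Eq.~\eqref{equ:41}. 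With the $r$ in place the identity becomes $0=r\,\overline{u}_\delta(p)-rW(p)+(\lambda_0+\lambda_1)(p^*-p)W'(p)$, which does.
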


The proof of Eq.~\eqref{equ:41} involves differentiation of the putative value function, 
and the proof of Eq.~\eqref{equ:42} uses Eqs.~\eqref{equ:time} and~\eqref{equ:split1}.
Both calculations are uninspiring and omitted.

Lemma~\ref{lemma deriv} implies that when $\sigma$ is an optimal strategy, 
if at sender's belief $p$ the sender reveals no information, then
 ${\overline{w}_\delta}'_-(p,\sigma)=\overline g_\delta(p)$.

\subsection{Proof of Lemma~\ref{lemma:structure}}\label{sec lemma cont proof}

In this section we prove Lemma~\ref{lemma:structure} by collecting the results we described so far. 
Let $(\overline q_{-j}(\delta))_{j=1}^{m-1}$ and $(\overline q_{j}(\delta))_{j=1}^{m'-1}$
be the constants that are defined at the end of Section~\ref{sec g}.
Let $\overline\sigma_\delta^*$ be the sender's message strategy defined in the statement of Lemma~\ref{lemma:structure} with these constants.

By Eqs.~\eqref{eq der left} and~\eqref{eq der right},
the function $\overline v_\delta$ is a solution of the following piecewise linear differential equation:
\[ f'(p) = \left\{
\begin{array}{lll}
\mu\cdot\left(\frac{\overline u_\delta(p_{-j})-f(p_{-j})} {p_{-j}-p^*}\right), & \ \ \ & p \in [p_{-j},\overline{q}_{-j+1}(\delta)), 2 \leq j \leq m-1, p^* \in (p_0,p_1),\\ \\
\mu\cdot\left(\frac{\overline u_\delta(p_{-j})-f(p_{-j})} {p_{-j}-p^*}\right), & \ \ \ & p \in [p_{-j},\overline{q}_{-j+1}(\delta)), 1 \leq j \leq m-1, p^* = p_0,\\ \\
\mu\cdot\left(\frac{\overline u_\delta(p)-f(p)} {p-p^*}\right),
& \ \ \ & p \in [\overline{q}_{-j+1}(\delta),p_{-j+1}], 2 \leq j \leq m-1,\\ \\
\mu\cdot\left(\frac{\overline u_\delta(p)-f(p)} {p-p^*}\right),
& \ \ \ & p \in [p_j,\overline{q}_{j}(\delta)], 1 \leq j \leq m'-1,\\ \\
\mu\cdot\left(\frac{\overline u_\delta(p_{j+1})-f(p_{j+1})} {p_{j+1}-p^*}\right),
& \ \ \ & p \in [\overline{q}_{j}(\delta),p_{j+1}], 1 \leq j \leq m'-1.
\end{array}\right.
\]

By Lemma~\ref{lemma deriv}, 
$\overline{w}_\delta(\cdot,\overline\sigma^*_\delta)$ is also a solution of this differential equation.
By Lemma~\ref{lemma:CRRV}, $\overline\sigma_\delta^*$ is optimal on $[p_0,p_1]$ (if $p^* \in (p_0,p_1)$) or at $p_0$ (if $p^*=p_0$),
and therefore
$\overline{w}_\delta(\cdot,\overline\sigma^*_\delta) = \overline v_\delta$ on $[p_0,p_1]$
(if $p^* \in (p_0,p_1)$) or on $[p_{-1},p_1]$ (if $p^* = p_0$).
By the existence and uniqueness theorem for ordinary differential equations, $\overline{w}_\delta(\cdot,\overline\sigma^*_\delta) = \overline v_\delta$ on $[0,1]$.





\subsection{Proofs of Intermediate Results for Lemma~\ref{lemma:structure}}\label{sec rem proofs}

\subsubsection{Proof of Lemma~\ref{lemma g}}
\label{section:proof:lemma g}

We will need the following intermediate result.

\begin{lemma}
\label{lemma v greater u}
For every $j\in\{1,...,m'\}$ we have $\overline{v}_\delta(p_j)< \overline{u}_\delta(p_j)$.
\end{lemma}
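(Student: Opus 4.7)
My plan is to bound $\overline v_\delta(p_j)$ from above by combining the deterministic evolution of the unconditional mean belief with a Jensen inequality applied to the concave envelope of $\overline u_\delta$. Denote this envelope by $\hat u$. Using Assumption~\ref{assumption:1} and the definition of $\overline u_\delta$, $\hat u$ is the piecewise-linear function that interpolates the vertices $(p_j, h_j)$ for $j \in \{-m,\ldots,m'-1\}$ and is constant equal to $h_{m'-1}$ on $[p_{m'-1}, 1]$; in particular $\hat u(p_j) = h_j$ for $j \in \{-m,\ldots,m'-1\}$, $\hat u(1) = h_{m'-1}$, and $\hat u(p^*) < h_1$ since $p^* \in [p_0,p_1)$ and $h_0 < h_1$.

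The first step is to note that, for any sender strategy $\sigma$ and any initial belief $p$, the unconditional mean of the belief process is deterministic and independent of $\sigma$. This is Bayes plausibility: each message-induced jump is a martingale increment of $p^t$, so only the Markov drift contributes to the mean. Taking expectations in Eq.~\eqref{equ:belief} yields $\E_\sigma[p^t \mid p^0 = p] = p^* + (p - p^*) e^{-(\lambda_0+\lambda_1)t}$. Combining with $\overline u_\delta \leq \hat u$ pointwise and Jensen's inequality for the concave $\hat u$, I obtain
\[ \overline v_\delta(p) \;\leq\; \int_0^\infty r e^{-rt}\, \hat u\!\left(p^* + (p - p^*) e^{-(\lambda_0+\lambda_1)t}\right) dt. \]

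The second step is to show that this right-hand side, evaluated at $p = p_j$ with $j \geq 1$, is strictly less than $h_j = \overline u_\delta(p_j)$. Define $\phi(t) := \hat u(p^* + (p_j - p^*) e^{-(\lambda_0+\lambda_1)t})$, so $\phi(0) = \hat u(p_j) = h_j$. Since the argument of $\hat u$ strictly decreases from $p_j$ towards $p^*$ as $t$ grows and $\hat u$ is nondecreasing on $[0, p_{m'-1}]$, we have $\phi(t) \leq h_j$ for all $t \geq 0$. Moreover $\phi(t) \to \hat u(p^*) < h_1 \leq h_j$ as $t \to \infty$, so by continuity $\phi < h_j$ on a tail of positive Lebesgue measure, giving $\int_0^\infty r e^{-rt} \phi(t)\,dt < h_j$.

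The main obstacle I anticipate is the boundary cases $j \in \{m'-1, m'\}$: on $[p_{m'-1}, 1]$ the envelope $\hat u$ is flat equal to $h_{m'-1}$, so $\phi(t) = h_{m'-1}$ on an initial interval rather than strictly below it. However, $\E[p^t] \to p^* < p_{m'-1}$ forces $\E[p^t]$ to exit this flat region at some finite $t^* > 0$, and on $(t^*, \infty)$ one has $\phi(t) < h_{m'-1}$ strictly, which is enough for the required strict inequality in the integral. Once the concave-envelope Jensen bound is in place, the argument applies uniformly to every $j \in \{1, \ldots, m'\}$.
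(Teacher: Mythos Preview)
Your proof is correct and follows essentially the same route as the paper: bound $\overline v_\delta(p_j)$ above by $\int_0^\infty r e^{-rt}\,\hat u\bigl(p^* + (p_j-p^*)e^{-(\lambda_0+\lambda_1)t}\bigr)\,dt$ via the concave envelope, then observe this integral is strictly below $\hat u(p_j)=h_j$ because the integrand eventually drops below $h_j$. The only cosmetic difference is that the paper packages the Jensen step by passing to the auxiliary game $G_{cont}(\cav\,\overline u_\delta)$ and invoking the known fact (Corollary~4 in \cite{cardaliaguet_markov_2016}) that silence is optimal for concave payoffs, whereas you apply Jensen directly; your explicit treatment of the flat region for $j\in\{m'-1,m'\}$ is in fact slightly more careful than the paper's one-line appeal to ``$\widetilde u$ strictly increasing.''
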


\begin{proof}
Fix $j\in\{1,...,m'\}$.
Denote $\widetilde u := \cav(\overline u_\delta)$.
Since in particular $\widetilde u \geq \overline u_\delta$, it follows that 
the value function $\widetilde v$ of $G_{cont}(\widetilde u)$ satisfies $\widetilde v \geq \overline u_\delta$.
Assumption~\ref{assumption:1} implies that $\widetilde u(p_j) = \overline u_\delta(p_j)$.
The function $\widetilde u$ is continuous and concave,
hence by Corollary~4 in \cite{cardaliaguet_markov_2016},
the optimal sender's message strategy in $G_{cont}(\widetilde u)$ is to never reveal information.
Hence,
\[ \overline v_\delta(p_j) \leq \widetilde v(p_j) = \int_0^\infty re^{-rt} \widetilde u(p^t) \rmd t
< \widetilde u(p_j) = \overline u_\delta(p_j), \]
where the process $(p^t)_{t \geq 0}$ under the integral term is given that the initial belief is $p_j$ and that the sender reveals no information.
The strict inequality holds because $\widetilde u$ is strictly increasing, and $p^t$ decreasing in $t$.
The claim follows.
\end{proof}




\noindent{\it Proof of Lemma~\ref{lemma g}.}
Recall that 
\begin{equation}
\label{equ:10.4}
\overline g_\delta(p)=\mu\cdot\frac{\overline{v}_\delta(p)-\overline{u}_\delta(p)}{p^*-p}, \ \ \ 
\forall p \in [0,1] \setminus \{p^*\}.
\end{equation}

\noindent\textbf{Proof of (a):}

Fix $p \in (p_{-j-1}, p_{-j}-\delta)$ for $j \in\{0,\ldots,m-1\}$.
Then 
\[ \overline u_\delta(p) = h_{-j-1} < {\overline{w}_\delta}(p,\sigma^{slide}_{p,p^*}) \leq \overline v_\delta(p). \]
Thus, 
on the interval $(p_{-j-1},p_{-j}-\delta)$,
the numerator in Eq.~\eqref{equ:10.4} is positive and by Lemma~\ref{lemma:monotonicity:v} it is nondecreasing.
The denominator in Eq.~\eqref{equ:10.4} is positive 
on this interval
and 
decreasing, 
and therefore $\overline g_\delta$ is increasing.

\bigskip
\noindent\textbf{Proof of (b):}

The derivative of $\overline g_\delta$ is
\begin{equation} 
\label{equ:lemma:g:1}
\overline g'_\delta(p)=\mu\cdot\frac{(\overline v'_\delta(p)-\overline u'_\delta(p))(p^*-p)+ (\overline v_\delta(p)-\overline u_\delta(p))}{(p^*-p)^2}, \ \ \ 
\forall p \in (0,1) \setminus \{p^*\}. 
\end{equation}
Let $p \in (p_{-j}-\delta,p_{-j})$, 
where $j\in\{1,\ldots,m-1\}$ (if $p^*=p_0$)
or $j\in\{0,1,\ldots,m-1\}$ (if $p^* \in (p_0,p_1)$). 
We have
$\overline{u}_\delta(p)=\left(\frac{h_{-j}-h_{-j-1}}{\delta}\right)(p-p_{-j}) + h_{-j}$.
Hence, on this interval
$\overline{u}_\delta'(p)=\frac{h_{-j}-h_{-j-1}}{\delta} > 0$, which is large for a small $\delta$.
Since $\overline{v}_\delta$ is concave and, by (G.3) for $p=0$ 
the derivative $\overline{v}_\delta'$ is bounded, positive, and at most $\overline g_\delta$. The functions $\overline{u}_\delta$ and $\overline{v}_\delta$ are bounded as well. 
Therefore, provided $\delta$ is sufficiently small, 
$\overline g'_\delta(p) < 0$ for every $p \in (p_{-j}-\delta, p_{-j})$.

\bigskip
\noindent\textbf{Proof of (c), item (i):}

Suppose that 
$p^*=p_0$.
On the interval $(p_0-\delta,p_0)$ we have $\overline{u}_\delta(p)=\frac{h_0-h_{-1}}{\delta}\cdot(p-p_0)+h_0$.
Hence, on this interval,
\[ \overline g_\delta(p)=\mu\cdot\frac{\frac{h_0-h_{-1}}{\delta}\cdot(p-p_0)+h_0-\overline{v}_\delta(p)}{p-p_0}=\mu\cdot\left(\frac{h_0-h_{-1}}{\delta} + \frac{h_0-\overline{v}_\delta(p)}{p-p_0}\right). \]
By Lemma~\ref{lemma:CRRV}, $\overline{v}_\delta(p_0)=h_0$. 
Therefore, $\frac{h_0-\overline{v}_\delta(p)}{p-p_0}=-\frac{\overline{v}_\delta(p_0)-\overline{v}_\delta(p)}{p_0-p}$. 
The concavity of $\overline{v}_\delta$ implies that 
$\overline g_\delta$ is increasing on this interval.

\bigskip
\noindent\textbf{Proof of (c), item (ii):}
Suppose again that 
$p^*=p_0$.

We need to show that for $\eta\in[0,p_1-p_0-\delta]$,

\[\frac{\overline{u}_\delta(p_0+\eta)-\overline{v}_\delta(p_0+\eta)}{\eta}<\frac{\overline{u}_\delta(p_1)-\overline{v}_\delta(p_1)}{p_1-p_0}.\]

By definition, $\overline{u}_\delta(p_0+\eta)=h_0$. 

Let $\widehat{\sigma}$ be a strategy that splits the receiver's belief between $p_0$ and $p_1$ for all beliefs in $[p_0,p_1]$. Observe that ${\overline{w}_\delta}(p_0+\eta,\widehat \sigma)$ does not depend on the strategy for beliefs outside $[p_0,p_1]$.
Plainly, $\overline{v}_\delta(p_0+\eta)
\geq \overline{w}_\delta(p_0+\eta,\widehat \sigma)$.
By Conclusion~\ref{conc: linear}, for $p=p_0+\eta$, $p'=p_0$, and $p''=p_1$ we have

\begin{eqnarray*}
&&\overline{w}_\delta(p_0+\eta, \widehat \sigma)\\
&&=h_0\cdot \frac{p_1(\mu+1)-p_0}{(p_1-p_0)(\mu+1)}
+ h_1\cdot \frac{p_0-p_0(\mu+1)}{(p_1-p_0)(\mu+1)}
+ (p_0+\eta)\cdot \mu\cdot \frac{h_1-h_0}{(p_1-p_0)(\mu+1)}\\
&&=h_0\cdot \frac{p_1(\mu+1)-p_0-p_0\mu-\eta\mu}{(p_1-p_0)(\mu+1)}
+h_1\cdot \frac{-h_0\mu+p_0\mu+\eta\mu}{(p_1-p_0)(\mu+1)}\\
&&=h_0\cdot \frac{(p_1-p_0)(\mu+1)-\eta\mu}{(p_1-p_0)(\mu+1)} + h_1\cdot \frac{\eta\mu}{(p_1-p_0)(\mu+1)}.
\end{eqnarray*}
Hence, 

\[\frac{h_0-\overline{v}_\delta(p_0+\eta)}{\eta}\leq \frac{h_0-\overline{w}_\delta(p_0+\eta, \widehat{\sigma})}{\eta}= \frac{\eta\mu(h_0-h_1)}{\eta(\mu+1)(p_1-p_0)}=\frac{\mu(h_0-h_1)}{(\mu+1)(p_1-p_0)}.\]

It is therefore sufficient to show that 

\[\frac{\mu(h_0-h_1)}{(\mu+1)(p_1-p_0)}<\frac{h_1-\overline{v}_\delta(p_1)}{p_1-p_0}.\]

Cancelling out the term $(p_1-p_0)$ and rearranging the remaining terms,
we see that it is sufficient to show that

\[h_1+\frac{\mu(h_1-h_0)}{(\mu+1)}>\overline{v}_\delta(p_1), \]
which holds by Lemma~\ref{lemma v greater u}.

\bigskip
\noindent\textbf{Proof of (d)}:

Let $j\in\{1,\ldots, m'\}$.
By Lemma~\ref{lemma v greater u}, $\overline{g}_\delta(p_j)>0$. Moreover, $\overline{u}_\delta$ and $\overline{v}_\delta$ are continuous, hence so is $\overline{g}_\delta$ on $[p_j,p_{j+1}-\delta]$. Therefore, there exists $d_j>0$ such that $\overline{g}_\delta$ is positive on $[p_j,p_j+d_j)$. 
In particular,
$\overline{u}_\delta>\overline{v}_\delta$ on $[p_j,p_j+d_j)$.

To see that $\overline{g}_\delta$ is decreasing on $[p_j,p_j+d_j)$, consider its derivative, given in Eq.~\eqref{equ:lemma:g:1}.
On $[p_j,p_{j+1}-\delta]$ we have $\overline{u}'_\delta=0$. 
By Lemma~\ref{lemma:monotonicity:v},
on this interval $\overline v_\delta$ is increasing, 
hence $\overline v'_\delta>0$. 
Since $p^*<p$, 
it follows that
$(\overline v'_\delta(p)-\overline u'_\delta(p))(p^*-p)<0$ on $[p_j,p_j+d_j)$ and  $\overline{u}_\delta>\overline{v}_\delta$, and therefore $\overline{g}'_\delta$ is negative on $[p_j,p_j+d_j)$.

Suppose that there is $d_j \in (0,p_{j+1}-p_j-\delta)$ such that 
$\overline{g}_\delta(p_j+d_j)=0$. 
As above, $\overline{g}'_\delta(p_j+d_j)<0$,
and therefore $\overline{g}_\delta$ keeps decreasing. 
Since $\overline{u}_\delta$ is constant on $[p_j,p_{j+1})$ and $\overline{v}_\delta$ is increasing on this interval, $\overline{g}_\delta$ remains negative on $[p_j+d_j,p_{j+1}-\delta]$.

\bigskip
\noindent\textbf{Proof of (e)}:

The proof is similar to the proof of item~(b).

\subsubsection{Proof of Lemma~\ref{lemma g at endpoints}}\label{sec proof g endpoints}

\noindent\textbf{Proof of (a):}

Fix $j\in\{2,\ldots,m\}$ (if $p^* = p_0$), or $j\in\{1,\ldots,m\}$ (if $p^* \in (p_0,p_1)$). 
We need to show that
\[\frac{\overline v_\delta(p_{-j})-\overline u_\delta(p_{-j})}{p^*-p_{-j}} > \frac{\overline v_\delta(p_{-j+1})-\overline u_\delta(p_{-j+1})}{p^*-p_{-j+1}},\]
or, equivalently,
\begin{equation}
    \label{equ:10.1}
\overline v_\delta(p_{-j})>\overline u_\delta(p_{-j})+\frac{p^*-p_{-j}}{p^*-p_{-j+1}}\cdot \left(\overline v_\delta(p_{-j+1})-\overline u_\delta(p_{-j+1})\right).
\end{equation}

Let $\sigma$ be a strategy that splits the receiver's belief between $p_{-j}$ and $p_{-j+1}$ for all beliefs in $(p_{-j},p_{-j+1})$, and, once the belief becomes $p_{-j+1}$, continues optimally (that is, ${\overline{w}_\delta}(p_{-j+1},\sigma)=\overline{v}_\delta(p_{-j+1})$). By Eqs.~\eqref{equ:time}
and~\eqref{equ:split1} we have
\begin{eqnarray}
\nonumber
\overline v_\delta(p_{-j})
&\geq& {\overline{w}_\delta}(p_{-j},\sigma)\\
    \label{equ:10.2}
&\geq& \frac{\mu\cdot(p_{-j+1}-p_{-j})}{p^*-p_{-j} + \mu\cdot(p_{-j+1}-p_{-j})}\cdot \overline u_\delta(p_{-j})\\
&&+ \left(1-\frac{\mu\cdot(p_{-j+1}-p_{-j})}{p^*-p_{-j}+ \mu\cdot(p_{-j+1}-p_{-j})}\right)\cdot  \overline v_\delta(p_{-j+1}).
\nonumber
\end{eqnarray}
Eqs.~\eqref{equ:10.1} and~\eqref{equ:10.2} imply that it is sufficient to show that
\begin{eqnarray*}
&&\frac{\mu\cdot(p_{-j+1}-p_{-j})}{p^*-p_{-j} + \mu\cdot(p_{-j+1}-p_{-j})}\cdot \overline{u}_\delta(p_{-j})+ \left(1-\frac{\mu\cdot(p_{-j+1}-p_{-j})}{p^*-p_{-j} + \mu\cdot(p_{-j+1}-p_{-j})}\right)\cdot \overline v_\delta(p_{-j+1}) \\ &&>u(p_{-j})+\frac{p^*-p_{-j}}{p^*-p_{-j+1}}\left(\overline v_\delta(p_{-j+1})-\overline{u}_\delta(p_{-j+1})\right).
\end{eqnarray*}
Simple algebraic manipulations show that
this inequality is equivalent to
\begin{equation}\label{eq p-j} \overline{u}_\delta(p_{-j+1})\cdot\frac{p^*-p_{-j} + \mu\cdot(p_{-j+1}-p_{-j})}{(\mu+1)(p_{-j+1}-p_{-j})}- \overline{u}_\delta(p_{-j})\cdot\frac{p^*-p_{-j+1}}{(\mu+1)(p_{-j+1}-p_{-j})}>   \overline v_\delta(p_{-j+1}).\end{equation}
Recall that $\overline{u}_\delta(p_{-j})=h_{-j}$ and $\overline{u}_\delta(p_{-j+1})=h_{-j+1}$.
To prove that Eq.~\eqref{eq p-j} holds, we will use a geometric argument rather than a long list of mathematical derivations.
Consider the line $\ell$ that passes through the points $(p_{-j},h_{-j})$ and $(p_{-j+1},h_{-j+1})$,
see Figure~\ref{fig:ell}.
Since $u$ has a concave envelope, the graph of $u$ lies below $\ell$, except at $p_{-j}$ and $p_{-j+1}$. Denote by $\widetilde{h}_1$ the unique real number such that $(p_1,\widetilde{h}_1)$ lie on $\ell$. Then $\widetilde{h}_1>h_1$. 

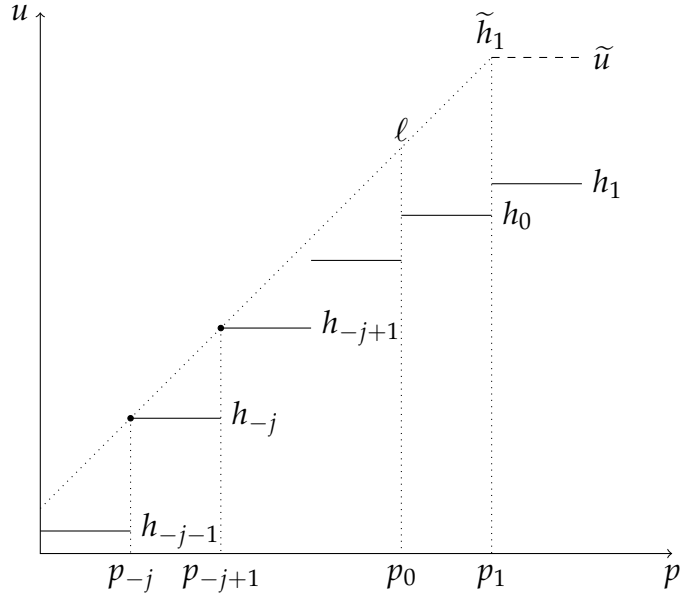
\begin{figure}[h]
\centering
\begin{tikzpicture}[domain=0:1,xscale=12,yscale=12]
\draw[<->] (0,.6) node[left]{$u$}-- (0,0) -- (.7,0) node[below] {$p$};
\draw[] (0,0.025)--(0.1,0.025) node[right]{$h_{-j-1}$};
\draw[] (0.1,0.15)--(0.2,0.15) node[right]{$h_{-j}$};
\draw[] (0.2,.25)--(0.3,.25) node[right]{$h_{-j+1}$};
\draw[] (0.3,.325)--(0.4,.325);
\draw[] (0.4,.375)--(0.5,.375) node[right]{$h_{0}$};
\draw[] (0.5,.41)--(0.6,.41) node[right]{$h_{1}$};
\draw[dotted] (0.1, 0.15)--(0.1,0) node[below]{$p_{-j}$};
\draw[dotted] (0.2,.25)--(0.2,0) node[below]{$p_{-j+1}$};
\draw[dotted] (0.4,.45)--(0.4,0) node[below]{$p_{0}$};
\draw[dotted] (0.5,.55)--(0.5,0) node[below]{$p_{1}$};
\draw[dotted] (0.0,0.05)--(0.5,.55);
\draw[dashed] (0.5,.55)--(0.6,.55) node[right]{$\widetilde u$};
\node at (0.5,.58) {$\widetilde h_{1}$};
\node at (0.4,.47) {$\ell$};
\draw [fill] (.2,.25) circle [radius=0.003];
\draw [fill] (.1,.15) circle [radius=0.003];
\end{tikzpicture}
\caption{The line $\ell$ and the function $\widetilde u$ in Case (a).} \label{fig:ell}
\end{figure}

Let $\widetilde u_\delta : [0,1] \to \dR$ be the function that coincides with $\overline u_\delta$ except on $[p_1,p_2-\delta)$ where it is equal to $\widetilde h_1$, and for $p\in [p_2-\delta, p_2]$, where it is equal to $\tfrac{h_2-\widetilde h_1}{\delta}\cdot(p-p_2)+h_2$.
Restrict attention to beliefs in $[0,p_1]$.
Since the line $\ell$ lies above the graph of $u$, the concavification of $\widetilde{u}_\delta$ at $p^*$ is on $\ell$. 
Lemma~\ref{lemma:CRRV} implies that on the interval $[p_{-j},p_1]$ the optimal message strategy $\widetilde\sigma^*$ in 
${G}_{cont}(\widetilde u_\delta)$ is to split the receiver's belief between $p_{-j}$ and $p_1$,
and the value function  $\widetilde v_\delta$ of ${G}_{cont}(\widetilde u_\delta)$ exists and is linear on this interval. 
We argue that on the interval $[p_{-j},p_1]$ we have

\begin{eqnarray}\nonumber
\widetilde{v}_{\delta}(p)&=&h_{-j}\cdot\frac{p_1\cdot(\mu+1) - p^*}{(p_1-p_{-j})(\mu+1)}+\widetilde{h}_1\cdot\frac{p^*-p_{-j}\cdot(\mu+1)}{(p_1-p_{-j})(\mu+1)}\\ \nonumber
&&+ p\mu\cdot\frac{\widetilde h_1-h_{-j}}{(p_1-p_{-j})(\mu+1)}.
\end{eqnarray}
By Conclusion~\ref{conc: linear}, 
\begin{eqnarray}
\label{equ:value:19}
\widetilde{v}_\delta(p_{-j+1})=\overline{u}_\delta(p_{-j+1})\cdot\frac{p^*-p_{-j} + \mu\cdot(p_{-j+1}-p_{-j})}{(\mu+1)(p_{-j+1}-p_{-j})}\nonumber\\
- \overline{u}_\delta(p_{-j})\cdot\frac{p^*-p_{-j+1}}{(\mu+1)(p_{-j+1}-p_{-j})}.
\end{eqnarray}
Since $\widetilde v_\delta \geq \overline v_\delta$, Eq.~\eqref{eq p-j} holds with  weak inequality.
When the initial belief $p$ is in the interval $[p_{-j+1},p_1]$, 
the only optimal strategy in $G_{cont}(\widetilde u_\delta)$ is the strategy 
that splits the receiver's belief between $p_{-j+1}$ and $p_1$.
Since $\widetilde h_1 > h_1$, this strategy yields in $G_{cont}(\overline u_\delta)$ 
a payoff lower than $\widetilde v_\delta(p)$.
It follows that $\widetilde v_\delta(p) > \overline v_\delta(p)$ for every $p < p_1$,
and the claim follows.

\bigskip

\noindent\textbf{Proof of (b):} 

Fix $j\in\{1,\ldots,m'-1\}$. 
We want to show that 

\[\frac{\overline{u}_\delta(p_j)-\overline v_\delta(p_j)}{p_j-p^*}>\frac{u(p_{j+1})-\overline v_\delta(p_{j+1})}{p_{j+1}-p^*}.\]

As for item~(a), it is sufficient to show that 
(compare this equation with Eq.~\eqref{eq p-j})

\begin{equation}\label{eq vpj small}
\overline v_\delta(p_j)<-\frac{p_j-p^*}{(\mu+1)(p_{j+1}-p_j)}\cdot \overline{u}_\delta(p_{j+1}) + \frac{p_{j+1}-p^* + \mu\cdot(p_{j+1}-p_j)}{(\mu+1)(p_{j+1}-p_j)}\cdot \overline{u}_\delta(p_j).
\end{equation}

As in item~(a), we consider an auxiliary problem.
Let $\ell$ be the line that passes through $(p_j,h_j)$ and $(p_{j+1},h_{j+1})$,
and let $\widetilde h_0$ be the unique real number such that $(p_0,\widetilde h_0)$ lies on $\ell$, so that $\widetilde{h}_0= \left(\frac{h_{j+1}-h_j}{p_{j+1}-p_j}\right)(p_0-p_j)+h_j$, see Figure~\ref{fig:ell1}.
Since $\overline{u}_\delta$ has a concave envelope, $\widetilde h_0 \geq h_0$.
Let $\widetilde u_\delta : [0,1] \to \dR$ be the function that coincides with $\overline{u}_\delta$,
except on $[p_0,p_1-\delta)$, where it is equal to $\widetilde h_0$.

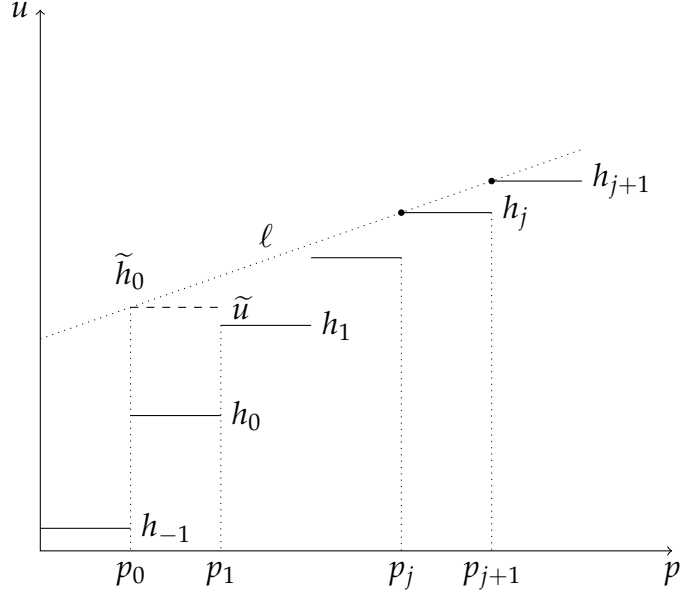
\begin{figure}[h]
\centering
\begin{tikzpicture}[domain=0:1,xscale=12,yscale=12]
\draw[<->] (0,.6) node[left]{$u$}-- (0,0) -- (.7,0) node[below] {$p$};
\draw[] (0,0.025)--(0.1,0.025) node[right]{$h_{-1}$};
\draw[] (0.1,0.15)--(0.2,0.15) node[right]{$h_{0}$};
\draw[] (0.2,.25)--(0.3,.25) node[right]{$h_{1}$};
\draw[] (0.3,.325)--(0.4,.325);
\draw[] (0.4,.375)--(0.5,.375) node[right]{$h_{j}$};
\draw[] (0.5,.41)--(0.6,.41) node[right]{$h_{j+1}$};
\draw[dotted] (0.1, 0.27)--(0.1,0) node[below]{$p_{0}$};
\draw[dotted] (0.2,.25)--(0.2,0) node[below]{$p_{1}$};
\draw[dotted] (0.4,.325)--(0.4,0) node[below]{$p_{j}$};
\draw[dotted] (0.5,.375)--(0.5,0) node[below]{$p_{j+1}$};
\draw[dotted] (0.0,0.235)--(0.6,.445);
\draw[dashed] (0.1,.27)--(0.2,.27) node[right]{$\widetilde u$};
\node at (0.1,.31) {$\widetilde h_{0}$};
\node at (0.25,.35) {$\ell$};
\draw [fill] (0.4,.375) circle [radius=0.003];
\draw [fill] (0.5,.41) circle [radius=0.003];
\end{tikzpicture}
\caption{The line $\ell$ and function $\widetilde u$ in Case~(b).} \label{fig:ell1}
\end{figure}

Assumption~\ref{assumption:1} implies that when the initial belief is in $[p_0,1]$, the value function of $G_{cont}(\widetilde u_\delta)$, 
denoted by $\widetilde v_\delta$,
is the line $\ell$.

By Lemma~\ref{lemma same der},

\begin{equation}
    \label{equ:19.1}
\widetilde{v}(p_j)=\widetilde{h}_0\cdot\frac{p_{j+1}(\mu+1)-p^*-\mu p_j}{(p_{j+1}-p_0)(\mu+1)} + h_{j+1}\cdot\frac{p^*-p_0(\mu+1) + \mu p_j}{(p_{j+1}-p_0)(\mu+1)}.
\end{equation}

Since $\widetilde u_\delta \geq \overline u_\delta$ 
on $[p_0,1]$,
we have 
$\widetilde v_\delta \geq \overline v_\delta$.
In particular, 
$\widetilde{v}_\delta(p_j) \geq \overline v_\delta(p_j)$.
Therefore, Eq.~\eqref{eq vpj small} will hold as soon as we show that 

\begin{equation}
    \label{equ:19.2}
\widetilde{v}(p_j)\leq -\frac{p_j-p^*}{(\mu+1)(p_{j+1}-p_j)}\cdot h_{j+1} + \frac{p_{j+1}-p^* + \mu\cdot(p_{j+1}-p_j)}{(\mu+1)(p_{j+1}-p_j)}\cdot h_j.
\end{equation}

Plugging the expression in Eq.~\eqref{equ:19.1} in Eq.~\eqref{equ:19.2},
and using the definition of $\widetilde{h}_0$, it is sufficient to show that

\begin{multline}
\left(\left(\frac{h_{j+1}-h_j}{p_{j+1}-p_j}\right)(p_0-p_j)+h_j\right)\frac{p_{j+1}\cdot(\mu+1)-p^*-\mu p_j}{(p_{j+1}-p_0)(\mu+1)}\\
+ h_{j+1}\cdot\frac{p^*-p_0\cdot(\mu+1) + \mu p_j}{(p_{j+1}-p_0)(\mu+1)}\\
\leq -\frac{p_j-p^*}{(\mu+1)(p_{j+1}-p_j)}\cdot h_{j+1} + \frac{p_{j+1}-p^* + \mu\cdot(p_{j+1}-p_j)}{(\mu+1)(p_{j+1}-p_j)}\cdot h_j.
\label{equ:19.3}
\end{multline}
Cancelling out the term $\mu+1$ and multiplying both sides of Eq.~\eqref{equ:19.3} by $(p_{j+1}-p_j)$, we see that we need to verify that
\begin{eqnarray}
\nonumber
&&\left((h_{j+1}-h_j)(p_0-p_j)+h_j\cdot(p_{j+1}-p_j)\right)\cdot \frac{p_{j+1}\cdot(\mu+1)-p^*-\mu p_j}{(p_{j+1}-p_0)} \\ 
\label{equ:34}
&&+h_{j+1}\cdot (p_{j+1}-p_j)\cdot \frac{p^*-p_0\cdot(\mu+1) + \mu p_j}{(p_{j+1}-p_0)} \\
&&\leq -(p_j-p^*)h_{j+1} + \bigl(p_{j+1}-p^* + \mu\cdot(p_{j+1}-p_j)\bigr)h_j.
\nonumber
\end{eqnarray}

The coefficients of $h_{j+1}$ in Eq.~\eqref{equ:34} cancel out, as do the coefficients of $h_j$.
Therefore Eq.~\eqref{equ:34} holds as an equality,
which implies that Eq.~\eqref{eq vpj small} holds with weak inequality. 
The proof that Eq.~\eqref{eq vpj small} holds with strict inequality uses the same arguments as for Part~(a).

\bigskip

\noindent\textbf{Proof of (c) items (i) and (ii) :}

These items are direct consequences of items (c) (i) and (c)(ii) of Lemma~\ref{lemma g}, respectively.

\subsection{Proofs of Lemmas~\ref{lemma:continuity}, \ref{lemma discrete ap}, and~\ref{lemma diminish dif}}\label{sec proofs three}

\subsubsection{Proof of Lemma~\ref{lemma:continuity}}\label{sec proof lemma conti}

The proof is by induction over the continuity intervals of $u$.

\bigskip
\noindent\textbf{Step 1:} The interval $[p_{0},p_1]$ when $p^* \in (p_0,p_1)$.

Suppose that $p^* \in (p_0,p_1)$.
On the interval $[p_0,p_1]$ 
the strategies $(\overline \sigma^*_\delta)_{\delta > 0}$ and $\sigma^*$ coincide:
they both split the receiver's belief between $p_0$ and $p_1$.
It follows that on this interval
$\overline v_\delta$ is independent of $\delta$,
and hence on $[p_0,p_1]$ 
\[ w(\cdot,\sigma^*) = w(\cdot,\overline \sigma^*_\delta) = \overline v_\delta = \overline v_0. \]

\bigskip
\noindent\textbf{Step 2:} The interval $[p_{-1},p_1]$ when $p^* = p_0$.

If $p^*=p_0$, then the strategies $\sigma^*$ and $(\overline \sigma^*_\delta)_{\delta > 0}$ coincide and instruct splitting the receiver's belief between $p_{-1}$ and $p^*=p_0$ 
(on the interval $[p_{-1},p_0]$ and between $p_0$ and $p_1$ (on the interval $[p_0,p_1]$).
The argument proceeds as in Step~1.



\bigskip
\noindent\textbf{Step 3:} The intervals to the left of $p^*$.

Suppose by induction that 
$\overline{v}_0(p)=w(p,\overline{\sigma}^*)$
for $p \in [p_{-j},p_0]$,
where $j \in \{1,2,\dots, {m-1}\}$ (if $p^* \in (p_0,p_1)$)
or $j \in \{2,\dots,m-1\}$ (if $p^*=p_0$).
Consider the interval $[p_{-j-1},p_{-j}]$.
On this interval, the strategy
$\sigma^*$ splits the receiver's belief between $p_{-j-1}$ and $p_{-j}$;
and for each $\delta > 0$ sufficiently small, the strategy $\overline \sigma^*_\delta$ splits the receiver's belief between $p_{-j-1}$ and $\overline q_{-j}(\delta)$,
and reveals no information between $\overline q_{-j}(\delta)$ and $p_{-j}$.
Since $\lim_{\delta \to 0}\overline q_{-j}(\delta) = p_{-j}$,
\[ \lim_{\delta \to 0} \overline w_\delta(\overline q_{-j}(\delta), \overline \sigma^*_\delta) = \lim_{\delta \to 0} \overline w_\delta(p_{-j}, \overline \sigma^*_\delta) = \overline v_0(p_{-j}) = w(p_{-j},\overline \sigma^*). \]
As a result, $\overline v_\delta$ converges to $w(\cdot,\overline\sigma^*)$ on $[p_{-j},p_{-j+1})$.

\bigskip
\noindent\textbf{Step 4:} The intervals to the right of $p^*$.

Suppose by induction that $\overline{v}_0(p)=w(p,\overline{\sigma}^*)$
for $p \in [p_{1},p_j]$,
where $j \in \{1,2,\dots,{m'-1}\}$,
and consider the interval $[p_j,p_{j+1}]$.
On this interval, the strategy
$\sigma^*$ reveals no information between $p_{j}$ and $q_j$,
and splits the receiver's belief between $q_{j}$ and $p_{j+1}$.
For each $\delta > 0$ sufficiently small,
the strategy $\overline \sigma^*_\delta$ 
reveal no information between $p_{j}$ and $\overline q_j(\delta)$,
and split the receiver's belief between $\overline q_{j}(\delta)$ and $p_{j+1}$.
Since $\lim_{\delta\to 0}\overline q_j(\delta)=q_j$,
the functions $\overline w_\delta(\cdot,\overline\sigma_\delta)$ converge to $w(\cdot,\overline\sigma^*)$ on $[p_j,q_j)$,
and by monotonicity,
the same holds at $q_j$.
As in Step~3, $\overline w_\delta(\cdot,\overline\sigma_\delta)$ converge to $w(\cdot,\overline\sigma^*)$ on $[q_j,p_{j+1}]$.


\subsubsection{Proof of Lemma~\ref{lemma discrete ap}}\label{sec proof lemma lim}



Recall that $Y^{split}_{p',p''}$ and $Y^{slide}_{p',p''}$
are the expected discounted time to reach belief $p''$ when the initial belief is $p'$ under $\sigma^{split}_{p',p''}$ and $\sigma^{slide}_{p',p''}$, respectively, in the  continuous-time game.
Denote by $Y^{\Delta,split}_{p',p''}$ and $Y^{\Delta,slide}_{p',p''}$
the corresponding quantities in the discrete-time game with length of period $\Delta$.
The reader can verify that for every distinct $p',p'' \in [0,1]$,
\[ \lim_{\Delta \to 0} Y^{\Delta,split}_{p',p''} = Y^{split}_{p',p''}, \]
and for every $p' < p'' < p^*$ and every $p^* < p'' < p'$,
\[ \lim_{\Delta \to 0} Y^{\Delta,slide}_{p',p''} = Y^{slide}_{p',p''}. \]
The proof now follows similar arguments to those used in the proof of Lemma~\ref{lemma:continuity}.

\subsubsection{Proof of Lemma~\ref{lemma diminish dif}}\label{sec proof lemma dim}

Since $u \leq \overline u_\delta$,
for every $\Delta>0$ and every $\delta>0$ sufficiently small we have 
$v_{\Delta}(u)\leq  v_\Delta(\overline{u}_\delta)$.
Taking the limit as $\Delta$ goes to 0, we have
$\lim_{\Delta\to 0}v_\Delta(u)\leq \lim_{\Delta\to 0}v_\Delta(\overline{u}_\delta)$.
By Theorem~1 in \cite{cardaliaguet_markov_2016}, for every $\delta>0$ we have
$\lim_{\Delta\to 0}v_\Delta(u_\delta)=v(\overline{u}_\delta)$.
We conclude that for every $\delta>0$ sufficiently small, 
$\lim_{\Delta\to 0}v_{\Delta}(u)\leq v(\overline{u}_\delta)$.
Since this inequality holds for every  sufficiently small $\delta > 0$,
taking the limit as $\delta$ goes to 0 yields
$\lim_{\Delta\to 0}v_\Delta(u)\leq \lim_{\delta\to 0}v(\overline{u}_\delta)=v(u)$,
where the last equality follows from Eq.~\eqref{eq equal values}.

\bibliographystyle{aea}
\bibliography{bib.bib}

\end{document}